\renewcommand{\@algocf@capt@plain}{above}
\lstdefinelanguage{Julia}%
  {morekeywords={abstract,break,case,catch,const,continue,do,else,elseif,%
      end,export,false,for,function,immutable,import,importall,if,in,%
      macro,module,otherwise,quote,return,switch,true,try,type,typealias,%
      using,while},%
   sensitive=true,%
   alsoother={$},%
   morecomment=[l]\#,%
   morecomment=[n]{\#=}{=\#},%
   morestring=[s]{"}{"},%
   morestring=[m]{'}{'},%
}[keywords,comments,strings]%
\bfseries\color{blue},
\definecolor{cadmiumgreen}{rgb}{0.0, 0.42, 0.24}
\definecolor{ao}{rgb}{0.0, 0.0, 1.0}
\newcommand{\mkadd}[1]{\textcolor{cadmiumgreen}{#1}}
\DeclareMathAlphabet\mathbfcal{OMS}{cmsy}{b}{n}
\title{Multi-set spectral clustering of time-evolving networks using the supra-Laplacian}
\author[1]{Gary Froyland}
\author[2,*]{Manu Kalia}
\author[3]{P\'eter Koltai}
\affil[1]{School of Mathematics and Statistics, University of New South Wales, Sydney NSW 2052, Australia}
\affil[2]{Department of Mathematics, Free University of Berlin, 14195 Berlin, Germany}
\affil[3]{Department of Mathematics, University of Bayreuth, 95440 Bayreuth, Germany}
\affil[*]{\url{m.kalia@fu-berlin.de}}
\date{\today}
\newtheorem{definition}{Definition}
\newtheorem{theorem}[definition]{Theorem}
\newtheorem{lemma}[definition]{Lemma}
\newtheorem{prop}[definition]{Proposition}
\newtheorem{example}{Example}
\newcommand*\sop[1]{\mathcal #1} 
\newcommand*\op[1]{\bm{\mathcal #1}} 
\newcommand*\smat[1]{\mathrm{#1}} 
\newcommand*\mat[1]{\bm{\mathrm{#1}}} 
\newcommand*\sfunc[1]{\lowercase{#1}} 
\newcommand*\func[1]{\bm {\lowercase{#1}}} 
\newcommand*\svect[1]{\mathrm{\lowercase{#1}}} 
\newcommand*\vect[1]{\bm {\mathrm {\lowercase{#1}}}} 
\newcommand*\sset[1]{#1} 
\newcommand*\set[1]{\bm #1} 
\newcommand*\spartn[1]{\mathfrak #1} 
\newcommand*\partn[1]{\bm{\mathfrak #1}} 
\newcommand{\eval}{\bm \lambda} 
\newcommand{\seval}{\lambda} 
\newcommand*\cfunc[1]{\bm{#1}} 
\newcommandx{\unsure}[2][1=]{\todo[linecolor=red,backgroundcolor=red!25,bordercolor=red,#1]{#2}}
\newcommandx{\imp}[2][1=]{\todo[linecolor=blue,backgroundcolor=blue!25,bordercolor=blue,#1]{#2}}
\newcommand*\adep[1]{#1^{(a)}}
\newcommand*\norm[1]{\overline{#1}}
\newcommand*\unnorm[1]{{#1}}
\newcommand*\comp[1]{#1^{\complement}}
\newcommand{\lap}{\sop L}
\newcommand{\slap}{\adep{\op L}}
\newcommand{\ulap}{\adep{\unnorm{\op L}}}
\newcommand{\nlap}{\norm{\adep{\op L}}}
\newcommand{\adj}{\sop W}
\newcommand{\sadj}{\adep{\op W}}
\newcommand{\ltemp}{\op{L}^{\rm temp}}
\newcommand{\ulspat}{\unnorm{\op{L}}^{\rm spat}}
\newcommand{\ultemp}{\unnorm{\op{L}}^{\rm temp}}
\newcommand{\nlspat}{\norm{\op{L}}^{\rm spat}}
\newcommand{\nltemp}{\norm{\op{L}}^{\rm temp}}
\newcommand{\adjspat}{\op{W}^{\rm spat}}
\newcommand{\adjtemp}{\op{W}^{\rm temp}}
\newcommand{\dlap}{\sop L^D}
\newcommand{\dadj}{\sop W^D}
\newcommand{\slapmat}{\adep{\mat L}}
\newcommand{\lspatmat}{\mat L^{\rm spat}}
\newcommand{\ltempmat}{\mat L^{\rm temp}}
\newcommand{\sadjmat}{\adep{\mat W}}
\newcommand{\adjspatmat}{\mat W^{\rm spat}}
\newcommand{\adjtempmat}{\mat W^{\rm temp}}
\newcommand{\ulapmat}{\adep {\unnorm {\mat L}}}
\renewcommand{\emph}[1]{\textit{#1}}
\begin{document}

\maketitle

\begin{abstract}
Complex time-varying networks are prominent models for a wide variety of spatiotemporal phenomena.
The functioning of networks depends crucially on their connectivity, yet reliable techniques for learning communities in time-evolving networks remain elusive.
We adapt successful spectral techniques from continuous-time dynamics on manifolds to the graph setting to fill this gap. 
We consider the supra-Laplacian for graphs and develop a spectral theory to underpin the corresponding algorithmic realisations.
We develop spectral clustering approaches for both multiplex and non-multiplex networks, based on the eigenvectors of the 
supra-Laplacian and specialised Sparse EigenBasis Approximation (SEBA) post-processing of these eigenvectors.
We demonstrate that our approach can outperform the Leiden algorithm applied both in spacetime and layer-by-layer, and we analyse voting data from the US senate (where senators come and go as congresses evolve) to quantify increasing polarisation in time.
\end{abstract}

\newpage
\section{Introduction}

Complex interconnected systems from diverse applications such as biology, economy, physical, political and social sciences can be modelled and analysed by networks~(\cite{newman2010networks,Barabasi2016}).
Network function is governed by its connectivity structure.
The learning of communities--the problem of finding groups of nodes in networks \cite{newman2010networks}--is thus a central issue when analysing networks, or their mathematical manifestations, graphs~(\cite{newman2004finding}). 
In many situations, this connectivity structure and thus the underlying communities vary over time. 
For example, one may consider an online social network composed of vertices represented by users, and edges that represent the degree of connection between users (likes, comments, post shares, tags etc.). Communities in such a network may correspond to mutual interests or those formed by friends and family. A temporal evolution of these communities is described by changes in user interaction, essentially strengthening or weakening the edges between users as time passes, see \cite{Greene2010, Bhat2015} for examples. 

Time-evolving networks have been introduced before in the literature as {\it multilayer} networks (\cite{newman2010networks, Kivela2014}),
where the vertex set and edge weights may change over time.
A special case of a multilayer network is a {\it multiplex} network,  
where vertices are copied across layers and each vertex maintains temporal connections only with its counterparts one layer forward and backward in time. 
Multilayer and multiplex networks are used to model time-varying networks where each vertex behaves as a state varying in time, and are thus ubiquitous in the literature; see \cite{Magnani2021, Rossetti2018} for reviews on this topic. 
This study focuses on detecting communities in both multiplex networks and  multilayer networks (where the set of vertices themselves may change over time; see Sec.~\ref{sec:sen}). 

Community detection methods of multiplex networks include modularity maximisation adapted to spacetime graphs (\cite{Mucha2010, Greene2010}), random walk-based approaches (\cite{Kuncheva2015,Klus2022}), hierarchical clustering (\cite{Masuda2019}),  detect-and-track methods (\cite{Bhat2015}) and ensemble methods that generalise to multilayer networks (\cite{Tagarelli2017}). 
In this work we prioritise the development of novel techniques for Laplacians on undirected spacetime graphs to detect time-evolving spacetime communities, particularly in challenging situations  where existing non-spacetime and spacetime methods fail.
We do so for several reasons. First, the eigendata from graph Laplacians contain crucial information about the optimal partitioning of a graph, allowing one to relate their spectra and eigenvectors to balanced graph cuts and the resulting Cheeger constants from isoperimetric theory~(\cite{chung1996laplacians, AndersonJr85}). 
Second, graph Laplacians on multiplex networks, called {\it supra-Laplacians} (\cite{sole2013spectral,Radicchi2013,DeFord2017}), have structure that we exploit to  develop theoretical justifications for the quality of spacetime communities extracted from their eigendata.  
Third, spectral clustering is a well-studied and robust algorithm for community detection.
Spectral partitioning  is computationally efficient as the resulting supra-Laplacians are sparse and only require the computation of the first few eigenpairs.

Previous works have glossed over establishing results about the spectrum of supra-Laplacians (\cite{sole2013spectral, Radicchi2013, DeFord2017}). 
We develop such spectral theory, which also {formally quantifies} the performance of our spatiotemporal spectral partitioning through Cheeger constants.
From this new theory we formulate an algorithm to detect multiple spatiotemporal clusters; our theory provides a   quality guarantee of the partitions across time.
Finally, we adapt our spectral partitioning approach to non-multiplex type networks, where vertices may appear and disappear in time.
Our {spectral clustering constructions  determine \textit{several spatiotemporal clusters}, including in \textit{multilayer graphs} where vertices may come and go.
Each of} these aspects have to our knowledge not been formulated before.

Our constructions are inspired by the {\it dynamic} and {\it inflated dynamic} Laplacians formulated for continuous-space dynamics on manifolds (\cite{Froyland2015}) and spacetime manifolds (\cite{FrKo23,AFK24}), respectively. 
The dynamic Laplacian for graphs was formulated in \cite{FroylandKwok2015}. 
The supra-Laplacian is the graph analogue of the inflated dynamic Laplacian, where inflation refers to expanding the graph by copying vertices across time. 
We will use the terminology ``supra-Laplacian'' in the sequel, to maintain consistency with established literature~(\cite{sole2013spectral,Radicchi2013,DeFord2017}).

Our main contributions are as follows:
\begin{enumerate}
    \item We construct the unnormalised and normalised supra-Laplacian on graphs and state the corresponding spacetime Cheeger inequalities, which provide worst-case guarantees on clustering quality. In Prop.~\ref{prop:h}, we also prove an intuitive result concerning the monotonicity of spacetime Cheeger constants {with respect to} the number of spacetime partition elements.
    \item Vertices in individual network layers are typically connected across time with certain strengths, which {represent} diffusion across time {in the associated supra-Laplacian}. We analyse the limit of increasing temporal connectivity strength, and in Thm.~\ref{thm:var} {(resp.\ Thm.~\ref{thm:normL})} we prove that the spectra of the unnormalised {(resp.\  normalised)} supra-Laplacians approach that of the dynamic {Laplacian (resp.\ normalised temporal Laplacian)} in {the} hyperdiffusion limit. As a result, we gain a better understanding of how to select the appropriate diffusion constant for the supra-Laplacian for graphs, which is important for the spectral partitioning algorithm.
    \item We formulate a novel spectral partitioning algorithm {(Algorithm~\ref{alg:spec_part})} in conjunction with the Sparse Eigenbasis Algorithm (SEBA) (\cite{Froyland2019}) for  multiplex networks to find \textit{several} spatiotemporal clusters from \textit{several} spacetime eigenvectors. We illustrate the efficacy of our approach in the challenging setting of small, slowly varying networks, and demonstrate superior time-varying cluster detection when compared to (i) a spacetime  application of the Leiden algorithm~(\cite{Traag2019}) and (ii) Leiden applied on individual time slices.
    \item In Algorithm \ref{alg:spec_part_nonmultiplex} we formulate a spacetime clustering procedure for  non-multiplex networks, which are characterised by time-varying edge weights, and vertices that appear and disappear in time. We analyse a real-world network of voting similarities between US senators in the years 1987--{2025}, {previously studied by e.g.\ }(\cite{Mucha2010,Waugh2009}), and show increasing political polarisation over recent decades.
\end{enumerate}

An outline of the paper is as follows.
We begin with a simple motivating example of a multiplex network with 5 vertices and 5 time slices in Sec.~\ref{sec:motivatingex} that captures many of the fundamental ideas of our approach. We formulate the main spacetime constructions in
Sec.~\ref{sec:spacetime}. 
In Sec.~\ref{sec:eigenproblem} we analyse the spectral structure of the supra-Laplacian and prove spectral results in the hyperdiffusion limit. 
Our spectral partitioning algorithm is formulated in Sec.~\ref{sec:matrix_alg}.   
In Sec.~\ref{sec:example_networks} we demonstrate the algorithm on two model networks. Finally in Sec.~\ref{sec:sen} we adapt the spectral partitioning algorithm to non-multiplex networks and apply it to detect communities in a real world network of voting similarities in the US Senate. A Julia implementation of the algorithms and the examples presented here can be found at \url{https://github.com/mkalia94/TemporalNetworks.jl}.

\subsection{Motivating example}
\label{sec:motivatingex} 

The first purpose of this example is to highlight how a spatiotemporal Laplacian construction is able to identify evolving clusters together with the times of their (partial) appearance and disappearance in evolving networks. Its second purpose is to informally introduce the main objects that support our theoretic and algorithmic constructions.
 
Consider the spacetime network as presented in Fig.~\ref{fig:example}. 
\begin{figure}[h!]
    \centering  \includegraphics[width=\textwidth]{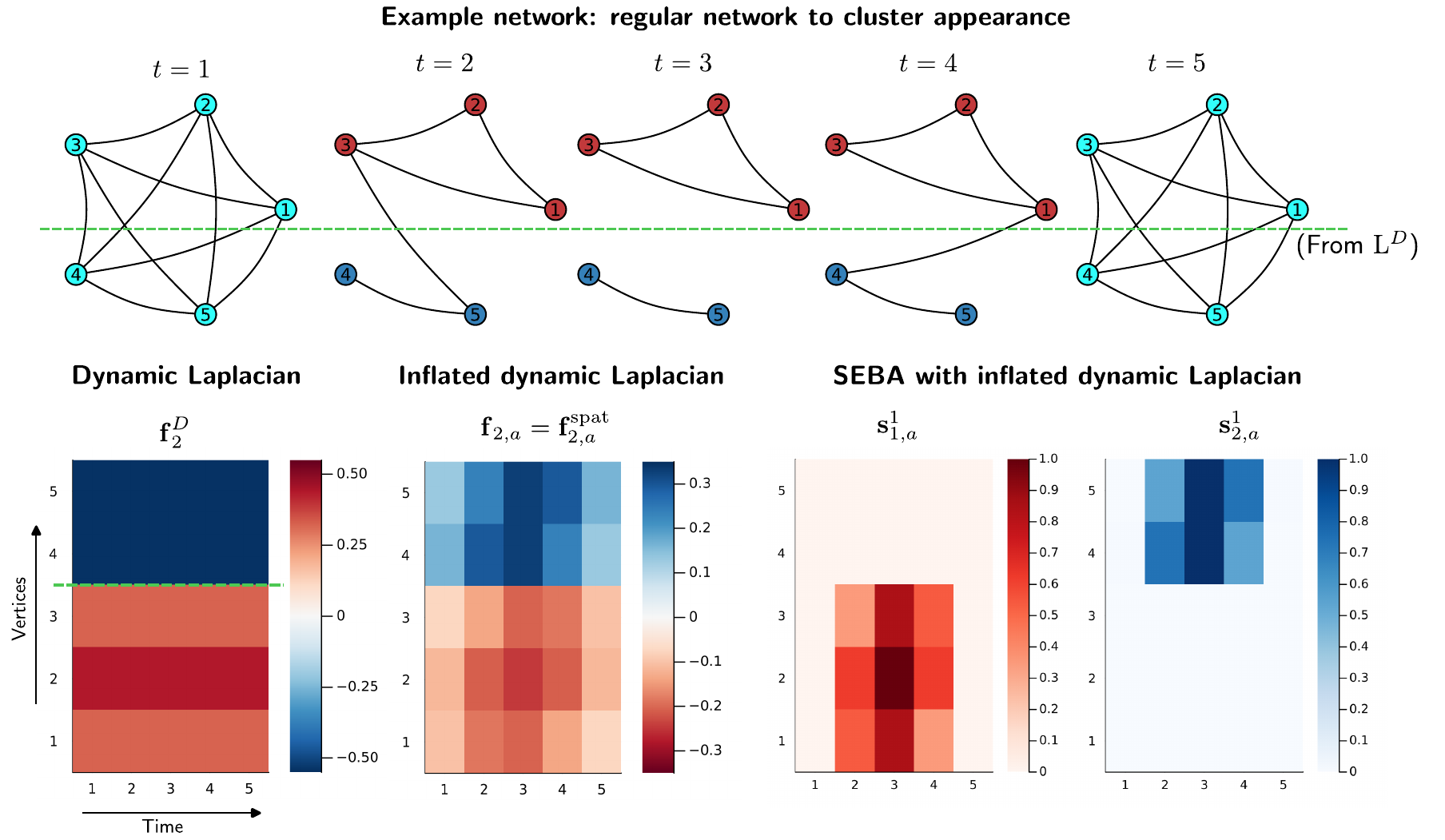}
    \caption{ Comparing spectral partitioning methods to discover communities in spacetime graphs.
    {\it Upper}:  spacetime graph/multiplex network composed of 5 time slices with 5 vertices each. As time progresses, {clusters composed of vertices $\{1,2,3\}$  and $\{4,5\}$ appear} from a 4-regular graph before {vanishing}.
    {\it Lower left:} The second eigenvector $\svect{f}^D_2 = [0.324, 0.442, 0.324, -0.545, -0.545]^\top$ of the dynamic Laplacian (copied across time to create a spacetime vector $\vect{F}_2^D$)  reveals a partition, demarcated by green dashed line, and is composed of vertices $\{1,2,3\}$ and $\{4,5\}$ respectively.
    This partition remains the same for all time, and does not reveal the appearing and disappearing of the clusters in the network.  {\it Lower right:} The SEBA algorithm (Algorithm~\ref{alg:spec_part} - step 5) applied to the second eigenvector $\vect{F}_{2,a}$ (in this case equal to the first spatial eigenvector $\vect{F}^{\rm spat}_{2,a}$) of the supra-Laplacian yields two vectors $ {\vect S}_{1,a}^{1}$ and ${\vect S}_{2,a}^{1}$, {which} reveal the emergence of two disjoint clusters at time $t=2$ and their disappearance after $t=4$. 
    Note that the 4-regular graphs at times $t=1$ and $t=5$ are not partitioned because they contain no clusters. }
    \label{fig:example}
\end{figure} 
The network is composed of 5 vertices for each of the 5 time slices. Each edge has a weight that switches between zero and one arbitrarily as time progresses. At time 
$t=2$, two communities (composed of the set of vertices $\{1,2,3\}$ and $\{4,5\}$) appear, then disappear, returning to the regular graph at $t=5$. The goal is then to discover the communities represented by vertices $\{1,2,3\}$ and $\{4,5\}$ at times $t=2,3,4$. 
Let $\sset W = \{\smat W^{(1)}, \dots, \smat W^{(5)}\}$  be the collection of adjacency matrices for each time slice; see the black lines in Fig.~\ref{fig:example} (upper).
 
\textbf{Persistent communities:} In \cite{FroylandKwok2015} the {\it dynamic Laplacian} for graphs subjected to vertex permutations was introduced as a method to detect \textit{persistent} communities in spacetime graphs. 
Define the average adjacency $\smat W^D := \frac{1}{5}\sum_i \smat W^{(i)}$.
The dynamic Laplacian $\smat L^D$ is the (unnormalised) Laplacian for the weights $\smat W^D$.
Spectral partitioning is then performed on $\smat L^D$. The first eigenvector of $\smat L^D$ corresponding to eigenvalue $0$ is simply $\svect f^D_1 = [1, \dots, 1]^\top$. Using a zero threshold on the second eigenvector $\svect f^D_2 = [0.324, \ 0.442, \ 0.324, \ -0.545, \ -0.545]^\top$ reveals the underlying partition composed of vertices $\{1,2,3\}$ and $\{4,5\}$, which is indeed the optimal fixed-in-time cut for this spacetime graph (green dashed line in Fig.~\ref{fig:example} (lower left)).

\textbf{Evanescent communities:} A fixed-in-time partition cannot reveal the full temporal connectivity information of the spacetime graph. 
We seek communities that are present for substantial portions of time, but \textit{do not necessarily persist through the entire time duration}. To capture such transient communities we use the graph Laplacian constructed on a spacetime graph with temporal diffusion along the temporal arcs. In the literature, such a construction has been called a \emph{supra-Laplacian}~(\cite{Gmez2013,sole2013spectral}).
We discuss its strong relations to the dynamic Laplacian in Thm.~\ref{thm:var}.

\begin{figure}[htbp]
\centering
\includegraphics[width=\textwidth]{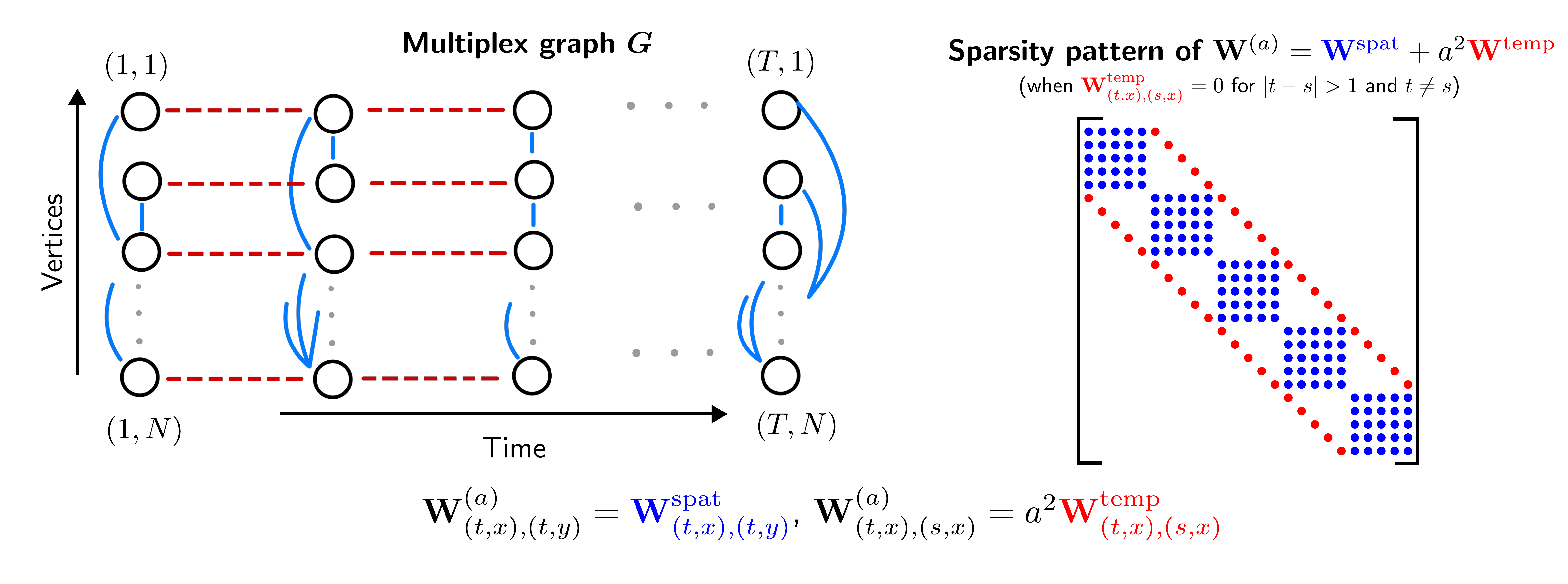}
\caption{The multiplex network framework. (Left) A spacetime graph $\set{G}$ with time/layers in the horizontal direction and vertices {within each time slice/layer} in the vertical direction. The connections form the adjacency  $\sadjmat$ which can be split into {blue} spatial ($\adjspatmat$) and {red} temporal ($a^2 \adjtempmat$) components respectively. (Right) Sparsity pattern of the matrix representation of {$\sadjmat$ with respect to the ordering \eqref{eq:lexicographic_ordering}.}}
\label{fig:intro}
\end{figure}

A simple example of temporal diffusion is diffusion to nearest neighbours in time. 
The corresponding \textit{temporal Laplacian} is  
\begin{equation}
    \smat L^{\mathrm{temp}} = \left[
\begin{array}{ccccc}
1 & -1 & 0 & 0 & 0 \\
-1 & 2 & -1 & 0 & 0 \\
0 & -1 & 2 & -1 & 0 \\
0 & 0 & -1 & 2 & -1 \\
0 & 0 & 0 & -1 & 1 \\
\end{array}
\right].
\end{equation}
Defining $\smat L^{(i)}$ to be the (unnormalised) Laplacian for the weights $\smat W^{(i)}$, we {denote} by $\slapmat$ the supra-Laplacian of the spacetime graph shown in Fig.~\ref{fig:example} (upper), where the boldface indicates the spacetime nature of the matrix. The supra-Laplacian $\slapmat$ is given by
\begin{equation}
    \slapmat = \mathrm{blockdiag}(\smat L^{(1)},\ldots,\smat L
    ^{(5)}) + a^2(\smat L^{\rm temp} \otimes \smat I),
\end{equation}
where the parameter $a > 0$ weights the contributions of the temporal component relative to the spatial components.
Here, $\otimes$ denotes the Kronecker product of matrices (cf.~\eqref{eq:Kronecker}) and $\smat I$ the identity matrix. With the ``space-first'' ordering of the multiplex (spacetime) network's vertices as in Fig.~\ref{fig:intro} (upper left), the supra-Laplacian $\slapmat$ and its weight matrix $\sadjmat$ both have the sparsity structure shown in Fig.~\ref{fig:intro} (upper right). In this example we choose $a=2$ to appropriately scale spatial and temporal components (see Algorithm~\ref{alg:spec_part} for details). 
We now perform spectral partitioning on the supra-Laplacian. As in the dynamic case, the first eigenvector is simply $\vect F_{1,a} = [1, \dots, 1]^\top$. The second eigenvector $\vect F_{2,a}$ indicates the spacetime partition. 
To create the spacetime partition
we apply the SEBA algorithm (\cite{Froyland2019}) to two vectors constructed from $\vect F_{2,a}$ (see Algorithm~\ref{alg:spec_part}) which returns two vectors $\smash{\vect S_{1,a}^{1}}$ and $\smash{\vect S_{2,a}^{1}}$ 
containing the clusters $\{1,2,3\}$ and $\{4,5\}$ separately, at times $t=2,3,4$, seen in Fig.~\ref{fig:example} (lower left). This example shows that appearance and disappearance of communities (in time) can be extracted by spacetime spectral computations.

\section{Spacetime graphs}\label{sec:spacetime}

The spectral study of time-evolving networks requires extensive notation.
To help the reader navigate this notation we list some typical examples in Table~\ref{tab:notation}. 
Time-evolving networks have nodes indexed by both space and time. Generally, we distinguish spacetime objects from purely spatial or purely temporal objects by denoting the former by boldface symbols. Further, operators are calligraphic, matrices are upper case upright (roman), functions are italic, vectors are lower case upright, sets are upper case italic, and partitions of sets are fraktur letters. 
Scalars, indices, nodes in graphs, and other objects have  a standard notation that is not synchronised with the previous choices, but their identification should be clear from the context. 
For instance, $a$ will always be a scalar and never a function.
\begin{table*}[h]
    \centering
    \begin{tabular}{l|c|c}
      object type   & space or time object& spacetime object \\ \hline\hline
        operator & $\sop{W}, \sop{L}$ & $\op W, \op L$ \\
        matrix & $\smat W^{(i)}, \smat L^D$ & $\slapmat$ \\
        function & $\sfunc f^D$ & $\func f_{2,a}$ \\
        vector & $\svect{f}^{\mathrm{temp}}$ & $\vect f_{2,a}$ \\
        set & $\sset V$,  $\sset V'$ & $\set X_k$ \\
        partition & $\spartn{X}$ & $\partn X$ \\
        Cheeger 
        constant & $h_K$ & $\cfunc h_K$ \\
        eigenvalues & $\seval_k$ & $\eval_k$
    \end{tabular}
    \caption{Reference table with examples of notation.}
    \label{tab:notation}
\end{table*}
We distinguish operators and functions on graphs from matrices and vectors representing them, both to align with existing literature and to draw the line between an intuitive, intrinsic description of objects (in terms of vertices and edges) and their representations for algorithms that use a specific enumeration of the vertices.

\subsection{Spacetime graphs and clusters}
Let $G = (\sset V,\sset E, \adj)$ be a {general undirected} graph with finite sets of vertices $\sset V \subset \mathbb{N}$ and edges $(x,y) \in \sset E$ for vertices $x, y \in \sset V$.  The function $\adj: \sset V \times \sset V \to  \mathbb{R}_0^+$ assigns weights to the edges such that $\adj (x,y)> 0 $ and is symmetric in arguments. The degree $\sfunc d(x)$ of a vertex $x$ is given by $\sfunc d(x)= \sum_y\adj(x,y)$.

We now extend these definitions to spacetime graphs. Let $\sset V = \{1, \dots, N\}$ denote the spatial vertex set and $\sset V' = \{1, \dots, T\}$ be the temporal vertex set for finite $N, T \in \mathbb N$. We define the spacetime vertex set $\set V= \sset V' \times \sset V$ which gives rise to the \textit{spacetime, multiplex graph} $\set{G}$ defined by the tuple $\set{G} = (\set{V},\set{E}, \sadj)$ where $\set{E}$  is the edge set connecting vertices $\set{V}$ defined as follows,
\begin{align*}
    \set E &= \{ ((t,x),(s,y)) \ | \ (t,x) \sim (s,y) \ \textnormal{in } \set{G} \}.
\end{align*}
The notation $(t,x) \sim (s,y)$ denotes a connection between vertices $(t,x)$ and $(s,y)$ in $\set{G}$. 
 For every graph $\set{G}$ we associate a weight function $\sadj$ parameterised by $a>0$  such that $\smash{ \sadj_{(t,x),(s,y)}\in\mathbb{R}^+_0 }$ is the weight of the edge joining the vertices $(t,x)$ and $(s,y)$.  We decompose $\sadj$ into $\adjspat$ and $a^2\adjtemp$,  the spatial and temporal components of $\sadj$, so that
\begin{equation}
    \sadj = \adjspat + a^2\adjtemp,
    \label{eq:supraW}
\end{equation}
and
\begin{align*}
    \adjspat_{(t,x),(s,y)} &= 0, \  t\neq s,  \\
    \adjtemp_{(t,x),(s,y)} &= 0, \ x \neq  y. 
\label{eq:supraSplit}\stepcounter{equation}\tag{\theequation}
\end{align*}
We further restrict $\adjtemp$ so that it is space-independent, i.e.,
\begin{equation}
    \adjtemp_{(t,x),(s,x)} = \adjtemp_{(t,y),(s,y)} =: \sop W'_{t,s} \textnormal{ for all $t,s,x,y$}.
    \label{eq:Wtemp_op}
\end{equation}

\paragraph{{Clustering} spacetime graphs.} 

Individual clusters are of spacetime type and can thus {appear or disappear, or eject or absorb vertices} as time passes.  For clarity, we {formally} define these transitions below.  {When discussing clustering we will} use the notion of \textit{packing} which is more general than \textit{partitioning}. 
A $K$-packing $\partn X = \{\set{X}_1,\ldots,\set{X}_K\}$ of the vertex set $\set{V}$ is defined by a collection of disjoint subsets $\set X_i \subset \set V$, $1 \leq i \leq K$. The {$K$-packing $\partn{X}$ can  be one of two possible types: (i) \textit{fully clustered}, where $\partn{X}$ partitions $\set{V}$ or (ii) \textit{partially clustered}, where $\bigcup_{k=1}^K\set{X}_k\subsetneq \set{V}$. In this latter case, we denote the unclustered vertices by $\set \Omega=\set{V}\setminus \bigcup_{k=1}^K\set{X}_k$.}

\begin{example}
\label{ex:example}
In Fig.~\ref{fig:example}, {the 3-packing} $\partn{X}=\{\set\Omega,\set{X}_1,\set{X}_2\}$ {partitions the 25-member spacetime vertex set $\set V$}, where $$
\set\Omega=\{(1,1),(1,2),(1,3),(1,4),(1,5),(5,1),(5,2),(5,3),(5,4),(5,5)\},$$ $$\set{X}_1=\{(2,1),(2,2),(2,3),(3,1),(3,2),(3,3),(4,1),(4,2),(4,3)\},$$ 
$$\mbox{and }\set{X}_2=\{(2,4),(2,5),(3,4),(3,5),(4,4),(4,5)\}.$$
\end{example}

\setcounter{example}{0}

The nontrivial evolution of spacetime sets can be described according to definitions below.
\begin{definition}
    \label{defn:graphevolve}
    \,
\begin{enumerate}
\item 
Let 
$\partn J\subset \partn X$ be the minimal collection of spacetime sets such that 
    \begin{equation}
    \label{eq:shrinkdisappeardef}
      \emptyset\neq  \left\{ x\in V \,:\, (t,x) \in \set{X} \right\} \subseteq \left\{ x\in \sset V \,:\, (t+1,x) \in \set{X} \cup \left(\textstyle{\bigcup_{\set{Y}\in \partn{J}}} \set{Y}\right) \right\}.
    \end{equation}
    If $\partn J\neq \emptyset$ we say that the spacetime set $\set{X}$ \emph{ejects} into $\partn J$ from time $t$ to $t+1$.
    In words, the space vertices in $\set{X}$ at time $t$ are partially redistributed to {all} packing elements forming the collection $\partn J$ at time $t+1$. 
    
 In the special case where $\left\{ x\in \sset V \,:\, (t+1,x) \in \set{X}\right\}=\emptyset$, then $\set X$ is said to \emph{disappear into  $\partn J$ at time $t+1$}.
In words, the space vertices in $\set{X}$ at time $t$ are fully redistributed to {all} packing elements forming the collection $\partn J$ at time~$t+1$.

\item Let $\partn J\subset \partn X$ be the minimal collection of spacetime sets such that 
    \begin{equation}
    \label{eq:growappeardef}
        \left\{ x\in \sset V \,:\, (t,x) \in \set{X} \cup \left(\textstyle{\bigcup_{\set{Y}\in \partn{J}}} \set{Y}\right) \right\}\supseteq\left\{ x\in V \,:\, (t+1,x) \in \set{X} \right\}\neq\emptyset.
         \end{equation}
        If $\partn J\neq \emptyset$ we say that the spacetime set $\set{X}$ \emph{absorbs from $\partn J$ from time $t$ to $t+1$}.  
    In words, the space vertices in $\set{X}$ at time $t+1$ include vertices from {all} other packing elements forming the collection $\partn J$ at time~$t$. 
    
 In the special case where $\left\{ x\in \sset V \,:\, (t,x) \in \set{X}\right\}=\emptyset$, then $\set X$ is said to \emph{appear from $\partn J$ at time $t+1$}.
In words, the space vertices in $\set{X}$ at time $t+1$ are drawn solely from all packing elements forming the collection $\partn J$ at time~$t$.
\end{enumerate}

{Note that if part 1 of Definition \ref{defn:graphevolve} applies across times $t$ and $t+1$, so too does part 2. By convention in this paper we  select the part of Definition \ref{defn:graphevolve} that yields an $\set X$ with the largest cardinality.}

\end{definition}

\begin{figure}[htbp]
\centering
\includegraphics[width=\textwidth]{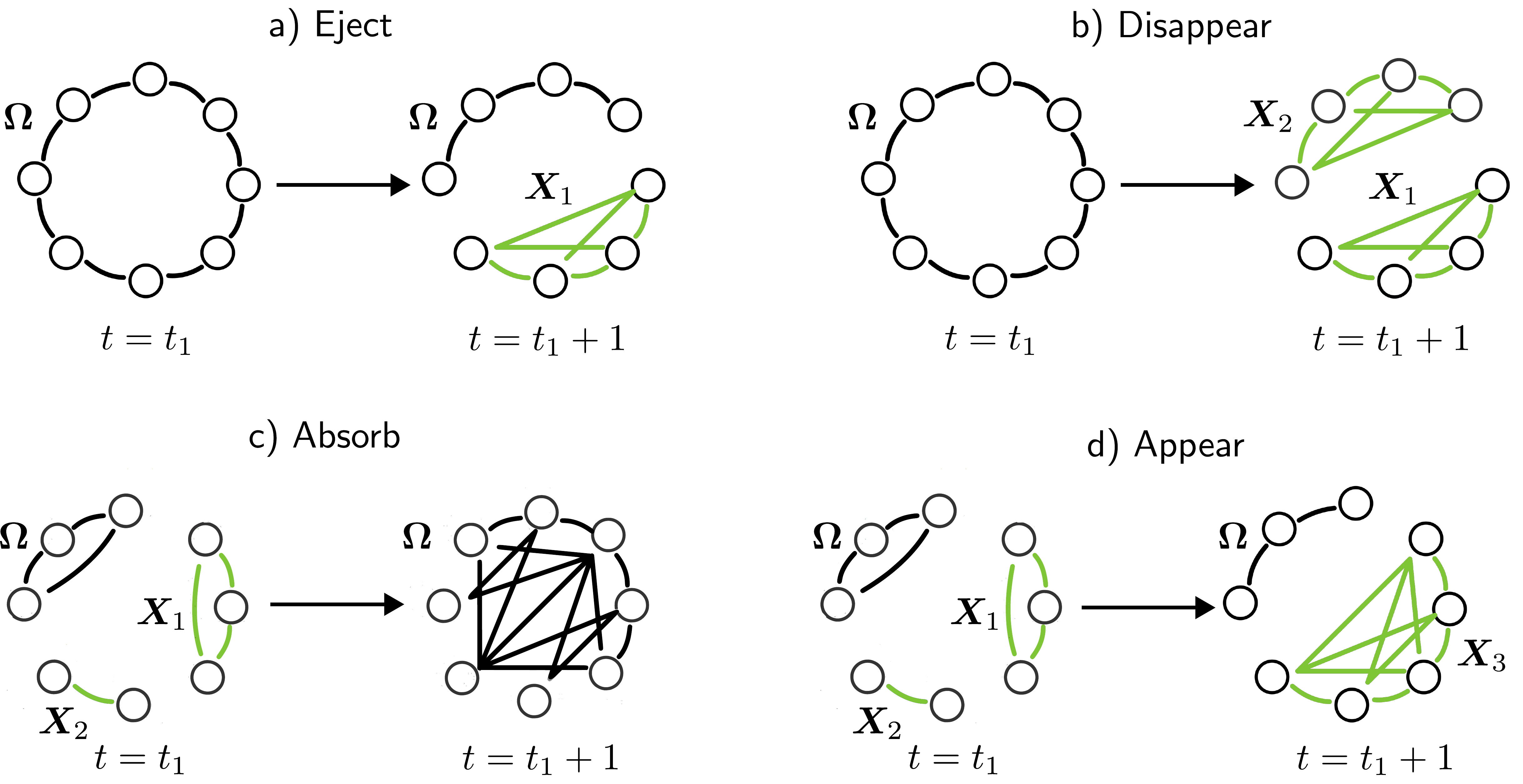}
\caption{
Upper Left: $\set \Omega$ ejects into $\{\set X_1\}$.  Upper Right: {$\set \Omega$ disappears into $\{\set X_1, \set X_2\}$.}   Bottom Left: {$\set \Omega$ absorbs from $\{ \set X_1, \set X_2\}$.}  Bottom right:   {$\set X_3$ appears from $\{\set X_1, \set X_2\}$.}}  
\label{fig:clustertransition}
\end{figure}

\begin{example}[cont...]
    \label{ex:example2}
Using the explicit element listings for $\set\Omega,\set{X}_1, \set{X}_2$ in Example \ref{ex:example}, we see that in Fig.~\ref{fig:example} {the single-element collection $\{\set \Omega\}$ disappears into $\{\set X_1,\set X_2\}$   at time $t=1$, and the single-element collection \{$\set \Omega\}$ appears from $\{\set X_1, \set X_2\}$ at time $t=5$.}
\end{example}

\subsection{Laplacians of spacetime graphs}
\label{sec:LapSpacetime}

From \cite{chung1996lectures}, the unnormalised and normalised graph Laplacians for general graphs are defined as follows. The unnormalised graph Laplacian $\lap$ acts on functions $\sfunc f: \sset V \to \mathbb{R}$ and is given by
\begin{equation}
    \lap \sfunc f(x) = \sum_y \adj(x,y) \left(\sfunc f(x)-\sfunc f(y)\right).
    \label{eq:Laplace}
\end{equation}
The corresponding normalised Laplacian $\norm \lap$ is given by
\begin{equation}
    \norm\lap \sfunc f(x) = \sum_y \adj(x,y) \left(  \frac{\sfunc f(x)}{\sfunc d(x)}-\frac{\sfunc f(y)}{\sfunc d(x)^{1/2}\sfunc d(y)^{1/2}}\right).
\end{equation}

We extend these definitions to the spacetime case. First, we define by  $\adep{\func d}$, $ \func d^{\rm spat}$ and $\func d^{\rm temp}$  the degree functions over vertices $(t,x) \in \set V$ and subsets of vertices~$\set X \subset \set {V}$. 
The expressions are as follows:
\begin{align*}
       \adep{\func d}(t,x) &= \sum_{(s,y)\in\set{V}}\sadj_{(t,x),(s,y)},   
         &\adep {\func {d}}(\set{X}) &= \sum_{(t,x) \in \set {X}} \adep{\func d}(t,x), \\
       \func d^{\rm spat}(t,x) &= \sum_{y \in \set V} \adjspat_{(t,x),(t,y)}, 
        &{\func d^{\rm spat}}(\set{X}) &= \sum_{(t,x) \in \set{X}} \func d^{\rm spat}(t,x), \\
       \func d^{\rm temp}(t,x) &= \sum_{s \in \sset V'} \adjtemp_{(t,x),(s,y)},  &{\func d^{\rm temp}}(\set X) &= \sum_{(t,x) \in \set X} \func d^{\rm temp}(t,x).
       \stepcounter{equation}\tag{\theequation}\label{eq:deg} 
\end{align*}
Note that by \eqref{eq:Wtemp_op}, $\func d^{\rm temp}(t,x) = \sum_{s \in \sset V'}\sop W'_{t,s}$ and is therefore independent of $x$. Thus we use $\sfunc d^{\rm temp}(t)$ in the rest of the paper.
On the graph $\set{G}$ equipped with weights $\sadj$, 
we define the  unnormalised graph supra-Laplacian $\slap$ as an operator acting on functions $\func F:\set{V} \to  \mathbb{R}$ as
\begin{equation}
    \slap \func F(t,x) := \sum_{(s,y) \in \set V}\sadj_{(t,x),(s,y)}\left(\func F(t,x) - \func F(s,y)\right).
    \label{eq:generalLap}
\end{equation}
We define the normalised graph supra-Laplacian $\nlap$ acting on functions $\func F:\set V \to \mathbb R$ as
\begin{equation}
    \nlap \func F(t,x) := \sum_{(s,y) \in \set V}\sadj_{(t,x),(s,y)}\left(\frac{\func F(t,x)}{\adep {\func d}(t,x)} - \frac{\func F(s,y)}{\adep {\func d}(t,x)^{1/2}\adep {\func d}(s,y)^{1/2}}\right).
\end{equation}
We note that in the case of the normalised supra-Laplacian, the corresponding graph $\set{G}$ should not contain isolated vertices in order to make sure~$\adep {\func d} (t,x)>0$, $\forall (t,x) \in \set V$. 
As the operator $\ulap$ is linear in the argument $\sadj$, we decompose $\ulap$ similarly to $\sadj$ in \eqref{eq:supraW} as,
\begin{equation}
    \slap = \ulspat + a^2 \ultemp,
\end{equation}
where
\begin{equation}
    \begin{aligned}
        \ulspat \func F(t,x) &= \sum_{y \in \set V}\adjspat_{(t,x),(t,y)}\left(\func F(t,x)-\func F(t,y)\right),\\
        \ultemp \func F(t,x) &= \sum_{s \in \sset V'}\adjtemp_{(t,x),(s,x)} \left(\func F(t,x)-\func F(s,x)\right).
    \end{aligned}
    \label{eq:lspat_ltemp}
\end{equation}
 
For now it is sufficient to note that by construction all graph Laplacians considered here are self-adjoint (with respect to different inner products), positive semidefinite, and hence have purely real nonnegative spectrum. The standard inner product for real vectors (or equivalently, for functions) $f,g$ will be denoted by~$\langle f,g\rangle$. This inner product notation will be used in several scenarios, but to which spaces $f$ and $g$ belong will always be clear from the context.

\begin{definition}
    For $a \ge 0$ and $k=1,2,\ldots, TN$, we denote the $k$-th eigenpair (in ascending order) of $\ulap$ by $(\unnorm \eval_{k,a}, \unnorm {\func F}_{k,a})$ and the $k$-th eigenpair of $\smash{ \nlap }$ by $(\norm \eval_{k,a}, \norm {\func F}_{k,a})$.
    \label{def:eigenpair}
\end{definition}

Finally we mention the \emph{dynamic Laplacian} (\cite{FroylandKwok2015}) $\dlap$ acting on functions $\sfunc f: \sset V\to\mathbb{R}$.
It is derived by first averaging the spatial adjacency $\adjspat$ over the time fibers to obtain $\dadj$, indexed as follows:
\begin{equation}
    \dadj_{x,y} :=  \frac{1}{T}\sum_{t \in \sset V'}  \adjspat_{(t,x),(s,y)} = \frac{1}{T}\sum_{t \in \sset V'}  \sadj_{(t,x),(t,y)}.
    \label{eq:AvgW}
\end{equation}
Consider the graph $\sset{G}^D = (\sset V, \sset E, \dadj)$ with edges {from the} set $\sset E$ connecting vertices in $\sset V$ and define the average degree $\sfunc d^D(x)$ by
\begin{equation}
    {\sfunc d^D}(x) := \frac{1}{T}\sum_{t \in \sset V'} \func d^{\rm spat}(t,x). 
\end{equation}
Then the unnormalised dynamic Laplacian $\dlap$ acts over functions $\sfunc f: \sset V \to \mathbb{R}$ and is given by
\begin{equation}
    \label{eq:dLap}
    \dlap \sfunc f(x) := \sum_{y \in \sset V}\dadj_{x,y} \left( \sfunc f(x) - \sfunc f(y) \right).
\end{equation}

\subsection{Graph cuts in spacetime}
\label{ssec:spacetime_cuts}

We define the (edge) cut value $\func \sigma(\set{X})$ between a subset $\set{X} \subset \set{V}$ and its complement $\comp{\set X} := \set V \setminus \set X $ by
\begin{equation}
     \func \sigma(\set {X}) :=  \sum_{ (t,x) \in \set{X}} \sum_{(s,y) \in \comp{\set{X}}} \sadj_{(t,x),(s,y)} \,.
     \label{eq:cut}
\end{equation}
We think of clusters being ``good'' if their cut values are low.
For the  graph $\set{G} = (\set{V},\set{E}, \sadj)$, we are interested in the {\it balanced graph cut problem}. We consider a $K$-packing $\partn{X} = \{\set{X}_1, \dots ,\set{X}_K\}$ of the spacetime vertex set $\set{V}$. Recall that
$\set{X}_k \cap \set{X}_l = \emptyset$, $\forall k \neq l$. The goal is to minimise  $\max _{k \in \{1, \dots, K\}}\func \sigma(\set{X}_k)$ produced by such a packing while making sure that the node count or degree of each individual $\set X_k$ {remains large}. The standard quantity to minimise over the $K$-packing $\partn{X}$ are the unnormalised and normalised Cheeger ratios $\cfunc  H(\set X_k)$ and $\norm {\cfunc H}(\set X_k)$ defined by
\begin{equation}
     \cfunc H(\set X) :=   \frac{\cfunc  \sigma(\set{X})}{|{\set{X}}| }, \quad \norm {\cfunc H}(\set X) := \frac{\func \sigma(\set{X})}{\adep {\func d}(\set{X}) }. 
     \label{eq:Cheeger}
\end{equation}
Denotethe maximal Cheeger ratio of $\partn{X}$ by $\cfunc H_K(\partn{X}):=\max_{1\le k\le K} \cfunc H(\set{X}_k)$.
The unnormalised Cheeger constant $\cfunc h_K$ is given by 
\begin{equation}
    \begin{aligned}
         \cfunc h_K := \min_{\partn{X}\text{ is a $K$-packing of }\set{V}} \cfunc H_K(\partn{X}) &= \min_{\set{X}_1 , \dots ,\set{X}_K} \max_{1\le k\le K} \cfunc H(\set{X}_k) \\
         &= \min_{\set{X}_1, \dots ,\set{X}_K} \max_{1\le k\le K} \frac{\func \sigma(\set{X}_k)}{| {\set{X}_k} | }.
    \end{aligned}
     \label{eq:cheegerconstant}
\end{equation}
Similarly, {define} $\norm {\cfunc H}_K(\partn X) := \max_{1 \le k \le K} \norm {\cfunc H}(\set X_k)$. The normalised Cheeger constant $\norm {\cfunc h}_K$ is given by
\begin{equation}
\begin{aligned}
         \norm {\cfunc h}_K := \min_{\partn{X}\text{ is a $K$-packing of }\set{V}} \norm {\cfunc H}_K(\partn{X}) &= \min_{\set{X}_1 , \dots ,\set{X}_K} \max_{1\le k\le K} \norm {\cfunc H}(\set{X}_k) \\
         &= \min_{\set{X}_1, \dots ,\set{X}_K} \max_{1\le k\le K} \frac{\cfunc \sigma(\set{X}_k)}{\adep {\func d} ({\set{X}_k}) }.
         \end{aligned}
     \label{eq:cheegerconstant_norm}
\end{equation}
In the case $K=2$  in \eqref{eq:cheegerconstant_norm} we have the celebrated Cheeger inequality (\cite{chung1996lectures})  relating the normalised Cheeger constant and the {second eigenvalue} of the normalised {supra-}Laplacian:
\begin{equation}
    \label{eq:CheegerK2}
     \norm{\cfunc h}_2 \leq \sqrt{2\norm{\eval}_{2,a}} \,.
\end{equation}
We state a result about the monotonicity of $\cfunc h_K$ in $K$, which is useful for making comparisons between differently sized packings/partitions produced by clustering algorithms. 
\begin{prop} \label{prop:h} For $K \in \mathbb{N}$ we have,
    \begin{equation}
    \begin{aligned}
        \norm {\cfunc h}_K &\leq \norm {\cfunc h}_{K+1}, \\
        {\cfunc h}_{K} &\leq {\cfunc h}_{K+1}. \\
    \end{aligned}
    \end{equation}    
\end{prop}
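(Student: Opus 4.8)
The plan is to exploit the fact that $\cfunc h_K$ and $\cfunc h_{K+1}$ are both defined as minima of $\max_k \cfunc H(\set{X}_k)$ over the \emph{finite} family of $K$- resp.\ $(K+1)$-packings of the finite set $\set V$, so that in each case the minimum is attained at some optimal packing. The decisive structural observation is that a packing is \emph{not} required to cover $\set V$: its elements are merely pairwise disjoint subsets. Consequently, from any $(K+1)$-packing one may simply delete one element to obtain a bona fide $K$-packing, without altering any of the surviving sets. This is the move that drives the whole argument.

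First I would record that the cut value $\func \sigma(\set X)$ in \eqref{eq:cut}, and hence each Cheeger ratio $\cfunc H(\set X)$ and $\norm{\cfunc H}(\set X)$ in \eqref{eq:Cheeger}, depends only on the single set $\set X$ (through its edges to the \emph{global} complement $\comp{\set X} = \set V \setminus \set X$ and through $|\set X|$ resp.\ $\adep{\func d}(\set X)$), and in no way on the other elements of the packing. This is precisely what makes deletion harmless: removing one cluster from a packing leaves the cuts and sizes of all the others completely untouched.

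The argument is then a short chain. Let $\partn X = \{\set{X}_1,\dots,\set{X}_{K+1}\}$ be a $(K+1)$-packing attaining $\cfunc h_{K+1} = \max_{1\le k\le K+1}\cfunc H(\set{X}_k)$, which exists by finiteness of $\set V$. Discarding the last element yields the $K$-packing $\partn X' = \{\set{X}_1,\dots,\set{X}_K\}$, and by the previous observation the ratios $\cfunc H(\set{X}_k)$ for $k\le K$ are unchanged. Since the maximum over a sub-collection cannot exceed the maximum over the whole collection,
\[
\cfunc h_K \;\le\; \cfunc H_K(\partn X') \;=\; \max_{1\le k\le K}\cfunc H(\set{X}_k) \;\le\; \max_{1\le k\le K+1}\cfunc H(\set{X}_k) \;=\; \cfunc h_{K+1},
\]
where the first inequality holds because $\partn X'$ is one admissible competitor in the minimisation defining $\cfunc h_K$. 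Replacing $\cfunc H$ by $\norm{\cfunc H}$ and $\cfunc h_K$ by $\norm{\cfunc h}_K$ throughout gives the normalised statement verbatim, since the normalised ratio shares the same ``depends only on $\set{X}_k$'' property.

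There is no serious analytic obstacle here; the proof is essentially bookkeeping. The one point that genuinely must be got right---and the reason the statement holds at all---is the packing-versus-partition distinction: because uncovered vertices are permitted, deleting a cluster is a legal move within the admissible class of $K$-packings and leaves the surviving clusters' cuts and sizes intact. Were packings forced to be partitions, deleting $\set{X}_{K+1}$ would require redistributing its vertices among the remaining sets, which could change their cut values, and this clean monotonicity argument would break down. I would therefore flag explicitly that $\partn X'$ remains a valid $K$-packing and that both minima are attained by finiteness of $\set V$.
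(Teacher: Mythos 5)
Your proof is correct, but it is a genuinely different argument from the paper's. You \emph{delete} a packing element: since a $K$-packing is merely a collection of pairwise disjoint subsets of $\set V$ with no covering requirement, discarding one set from an optimal $(K+1)$-packing yields an admissible $K$-packing, and since $\func\sigma(\set X)$ and hence $\cfunc H(\set X)$ are computed against the \emph{global} complement $\comp{\set X}$, the surviving ratios are unchanged, giving $\cfunc h_K \le \max_{k\le K}\cfunc H(\set X_k)\le \cfunc h_{K+1}$. The paper instead \emph{merges}: assuming without loss that $\cfunc H(\set X_{K+1})$ attains the maximum, it forms the $K$-packing $\{\set X_1,\ldots,\set X_{K-1},\set X_K\cup\set X_{K+1}\}$ and controls the merged ratio via subadditivity of the cut, additivity of $|\cdot|$ (resp.\ $\adep{\func d}(\cdot)$) over disjoint unions, and the mediant inequality: if $\frac{a}{b}\le\frac{c}{d}$ then $\frac{a+c}{b+d}\le\frac{c}{d}$, so that $\cfunc H(\set X_K\cup\set X_{K+1})\le\cfunc H(\set X_{K+1})$. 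Your route is shorter and needs no inequality beyond ``max over a subcollection $\le$ max over the collection''; the trade-off is exactly the one you flag yourself: your argument hinges on the packing definition and breaks if one insists on partitions, whereas the paper's merging argument survives that restriction (a merge of two partition elements is again a partition), so it proves monotonicity in the more constrained partition setting as well. Both approaches share the essential observation that $\cfunc H(\set X)$ depends on $\set X$ alone, and your attainment-by-finiteness remark is sound (the paper tacitly assumes the same when it speaks of an optimal $(K+1)$-packing).
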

\begin{proof}
See Appendix~\ref{sec:prop_h_proof}.
\end{proof}

For general $K$, we recall the following result for the $K$-th smallest eigenvalue $\smash{ \norm{\eval}_{K,a} }$ of $\smash{ \nlap }$, which follows directly from  \cite[Thm.~4.9]{Lee2014}. For a graph $\sset G$ and given $\delta \in (0,1)$, there exists a $K'$-packing $\{\sset X_1, \dots ,\sset X_{K'}\}$, $\sset X_i \cap \sset X_j = \emptyset$, $\forall i \neq j$ for some $K' \geq \lceil(1-\delta)K\rceil$ 
such that $\smash{ \norm{\sfunc h}_{K'} \leq C K^{1/2} \delta^{-3/2}(\norm{\seval}_{K,a})^{1/2} }$ for some constant $C$ depending on the graph but independent of the packing. By choosing $\delta = 1/2K$ we have that $K' \ge K$ and using Proposition~\ref{prop:h} we obtain
    \begin{equation}
        \norm{\cfunc h}_{K} \leq \norm{\cfunc h}_{K'} \leq 2^{3/2}CK^2\sqrt{\norm{\eval}_{K,a}}\,. \label{eq:KwayCheeger}
    \end{equation}

For reasons to be discussed in Sec.~\ref{sec:eigenproblem} we will focus on the unnormalised {supra-}Laplacian $\slap$ in the following and in particular in the numerical investigations below. However, Cheeger inequalities involving the unnormalised Cheeger constant $\cfunc h_K$ and the unnormalised {supra-}Laplacian $\slap$ are less developed than their normalised counterparts, which we have just discussed. From \cite[Thm.~3.1]{Keller2016} we know that for a graph $\sset G = (\sset V, \sset E)$ equipped with adjacency $\sop W$ and an unnormalised Laplacian $\sop L$ with corresponding $k$-th eigenvalue $\seval_k$ we have the Cheeger inequality $\sfunc h_2 \le \sqrt{2\seval_2\max_x \sfunc d(x)}$ where $\sfunc d(x) = \sum_{y \in \sset V}\sop W_{x,y}$.
Thus, for $K=2$ we have the following:
\begin{equation}
    \label{eq:CheegerIneqUnnorm}
    \cfunc h_2 \le \sqrt{2\eval_{2,a} \max_{(t,x)} \adep{\func d}(t,x)}.
\end{equation}
To compare the relative tightness of \eqref{eq:CheegerIneqUnnorm} and \eqref{eq:CheegerK2}, let us briefly assume that the degrees are constant, say, $\adep{\func d}(x,t) \equiv d$. Then we see that $\cfunc h_K = d\, \norm{\cfunc h}_K$ and $\smash{ \slap = d\,\nlap }$, the latter implying~$\eval_{a,K} =  d\,\norm{\eval}_{a,K}$. 
The unnormalised Cheeger inequality~\eqref{eq:CheegerIneqUnnorm} in this case becomes $\cfunc h_2 \le \sqrt{2\eval_{2,a} d}$, which is---by the equations listed in the previous sentence---equivalent to the Cheeger inequality~\eqref{eq:CheegerK2} in the normalised case.
We thus expect the performance guarantee given by the unnormalised Cheeger inequality to be as strong as the one from the normalised case, as long as the relative variation in degrees is not too large.

\section{Eigenvalues of supra-Laplacians and the hyperdiffusion limit}\label{sec:eigenproblem}

The graph cuts defined in the previous section critically depend on the eigenvalues of the associated supra-Laplacians.
The eigenvectors of the (un)-normalised supra-Laplacian are used ahead to perform spectral partitioning of the spacetime graph to detect persistent communities. The supra-Laplacian depends on the parameter $a$, which scales the contribution of the spatial and temporal components. 
We show in Thm.~\ref{thm:var} that a few of the eigenvalues(vectors) of the corresponding unnormalised supra-Laplacian $\smash{\ulap}$ approach that of the corresponding unnormalised dynamic Laplacian $\smash{\unnorm \dlap}$ as~$a \to \infty$. For the normalised Laplacian $\smash{\nlap}$, the eigenvalues approach {those} of the normalised temporal Laplacian $\norm \ltemp$ as~$a \to \infty$, as shown in Thm.~\ref{thm:normL}.

\subsection{The unnormalised Laplacian $\mathbfcal L^{(a)}$}
Let $\sset G' = (\sset V', \sset E', \sop W')$ be such that $\sop W'$ is defined in \eqref{eq:Wtemp_op}. The vertices $\sset V'= \{1,\ldots, T\}$ and $t \sim s \textnormal{ in } \sset E' \iff \sop W'_{t,s} \neq 0$. Let $\sop L'$  be the corresponding graph Laplacian. We denote the $k$-th eigenpair of $\sop L'$ by $(\seval_k^{\rm temp},\sfunc f_k^{\rm temp})$. 
The $TN$ eigenpairs $\{ (\eval_{1,a}, \func f_{1,a} ), \ldots, (\eval_{TN,a}, \func f_{TN,a}) \}$  of $\ulap$ can be split into two collections: temporal $\{ (\eval^{\rm temp}_{2,a}, \func f^{\rm temp}_{2,a} ), \ldots, (\eval^{\rm temp}_{T,a}, \func f^{\rm temp}_{T,a}) \}$ and spatial $\smash{ \{ (\eval^{\rm spat}_{1,a}, \func f^{\rm spat}_{1,a}), \ldots, (\eval^{\rm spat}_{TN-T+1,a}, \func f^{\rm spat}_{TN-T+1,a})\} }$.

\begin{lemma}[\bf Spatial and temporal eigenpairs of $\ulap$]
\label{lem:spattemp}
    The eigenpairs of $\ulap$ are classified as follows: 
    \begin{enumerate}
        \item $T-1$ \emph{temporal eigenvalues} $\unnorm \eval_{k,a}^{\rm temp}$, which satisfy $\unnorm \eval_{k,a}^{\rm temp} = a^2\seval_k^{\rm temp}$, $2\le k\le T$.
        Corresponding to these temporal eigenvalues are $T-1$ \emph{temporal eigenfunctions} $\smash{ \unnorm {\func F}_{k,a}^{\rm temp} }$, which satisfy $\unnorm {\func F}_{k,a}^{\rm temp}(t,x)=\sfunc f_k^{\rm temp}(t)$, $2\le k\le T$.
        \item $TN-T+1$ \emph{spatial eigenvalues} $\unnorm \eval_{k,a}^{\rm spat}$.  Corresponding to these spatial eigenvalues are $TN-T+1$ \emph{spatial eigenfunctions} $\unnorm {\func F}_{k,a}^{\rm spat}$.
        For each $1\le k\le TN-N+1$, there is a constant $C_k$ such that  $\sum_{x\in V} \unnorm {\func F}_{k,a}^{\rm spat}(t,x)=C_k$ for all $1\le t\le T$.
    \end{enumerate}
\end{lemma}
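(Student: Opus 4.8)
The plan is to exploit the decomposition $\ulap = \ulspat + a^2\ultemp$ from \eqref{eq:lspat_ltemp}, in which $\ulspat$ differences only vertices inside a common time slice while $\ultemp$ differences only temporal copies of a single spatial vertex, with weights that are space-independent by \eqref{eq:Wtemp_op}. I would treat the two collections of eigenpairs in turn and then finish by a dimension count.

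For Part 1 I would verify the temporal eigenpairs directly. Inserting a time-only function $\func F(t,x) = \sfunc f_k^{\rm temp}(t)$ into \eqref{eq:lspat_ltemp}, every spatial difference $\func F(t,x) - \func F(t,y)$ vanishes, so $\ulspat \func F \equiv 0$; in the temporal term I may replace $\adjtemp_{(t,x),(s,x)}$ by $\sop W'_{t,s}$ using \eqref{eq:Wtemp_op}, which recovers exactly $(\sop L' \sfunc f_k^{\rm temp})(t)$. Hence $\ulap \func F = a^2 \seval_k^{\rm temp}\func F$, so each non-constant temporal mode $\sfunc f_k^{\rm temp}$ ($2 \le k \le T$) lifts to an eigenfunction $\unnorm{\func F}_{k,a}^{\rm temp}(t,x)=\sfunc f_k^{\rm temp}(t)$ of $\ulap$ with eigenvalue $a^2\seval_k^{\rm temp}$; the eigenfunction itself is independent of both $x$ and $a$. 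This produces the $T-1$ temporal eigenpairs.

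The core step is to identify the orthogonal complement of their span $\mathcal T = \operatorname{span}\{\unnorm{\func F}_{k,a}^{\rm temp} : 2 \le k \le T\}$ with the subspace $\mathcal S = \{\func F : t \mapsto \sum_{x} \func F(t,x) \text{ is constant in } t\}$. Writing $S(t) := \sum_x \func F(t,x)$, I would compute $\langle \func F, \unnorm{\func F}_{k,a}^{\rm temp}\rangle = \sum_t \sfunc f_k^{\rm temp}(t)\,S(t) = \langle S, \sfunc f_k^{\rm temp}\rangle$. Choosing the eigenvectors of the symmetric matrix $\sop L'$ orthonormal with $\sfunc f_1^{\rm temp}$ the constant vector (constants always lie in a graph Laplacian's kernel), orthogonality of $\func F$ to every $\unnorm{\func F}_{k,a}^{\rm temp}$ with $k \ge 2$ is equivalent to $S$ being a scalar multiple of $\sfunc f_1^{\rm temp}$, i.e. $S(t)$ constant in $t$, i.e. $\func F \in \mathcal S$. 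Thus $\mathcal T^{\perp} = \mathcal S$ and $\dim \mathcal S = TN - (T-1) = TN - T + 1$.

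To finish, I would use that $\ulap$ is self-adjoint for the standard inner product (as noted after \eqref{eq:lspat_ltemp}): since $\mathcal T$ is spanned by genuine eigenfunctions it is $\ulap$-invariant, and the orthogonal complement of an invariant subspace of a self-adjoint operator is again invariant, so $\ulap\,\mathcal S \subseteq \mathcal S$. The spectral theorem applied to $\ulap|_{\mathcal S}$ then furnishes an orthonormal eigenbasis of $\mathcal S$ of size $TN - T + 1$; these are the spatial eigenfunctions $\unnorm{\func F}_{k,a}^{\rm spat}$, and each lies in $\mathcal S$, so $\sum_x \unnorm{\func F}_{k,a}^{\rm spat}(t,x) = C_k$ is independent of $t$, as claimed (the global constant function reappears here as the spatial mode at eigenvalue $0$). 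The main obstacle is precisely the identification $\mathcal T^{\perp} = \mathcal S$, where all three structural hypotheses \eqref{eq:supraSplit}--\eqref{eq:Wtemp_op} enter and where one must carefully track the excluded constant temporal mode; once this orthogonal splitting is in place, the invariance argument and the spectral theorem make the remainder routine.
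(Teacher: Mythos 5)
Your proof is correct and follows essentially the same route as the paper: a direct verification that lifts of the nonconstant temporal modes $\sfunc f_k^{\rm temp}$ are eigenfunctions of $\ulap$ with eigenvalue $a^2\seval_k^{\rm temp}$, followed by the observation that orthogonality to these lifts forces the slicewise sums $\sum_x \func F(t,x)$ to be constant in $t$, via $\mathrm{span}\{\sfunc f_k^{\rm temp} : k\ge 2\} = \mathbbm{1}^{\perp}$. Your packaging through the invariant subspace $\mathcal{S} = \mathcal{T}^{\perp}$ and the spectral theorem is a mild refinement rather than a different method: it makes the count $TN-T+1$ explicit and handles possible degeneracies (a nontemporal eigenvalue coinciding with some $a^2\seval_k^{\rm temp}$) more cleanly than the paper's per-eigenvector orthogonality claim, which in the degenerate case is only valid after choosing an adapted orthogonal basis within the shared eigenspace.
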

\begin{proof}
See Appendix \ref{sec:lem:spattemp}.
\end{proof} 
By convention we consider $\eval_{1,a}$ to be spatial eigenvalue; this is consistent with Thm.~\ref{thm:var} and previous work~(\cite{FrKo23}).
We now show {in Thm.~\ref{thm:var}} that the first $N$ spatial eigenvalues and eigenvectors approach those of the dynamic Laplacian in~\eqref{eq:dLap} 
in the hyperdiffusion limit, i.e.\ as $a \to \infty$. Part 3 of {Thm.~\ref{thm:var}  appeared} in \cite{sole2013spectral}; we prove the theorem using a variational approach, which {in comparison to \cite{sole2013spectral}} results in a stronger statement about the behaviour of $\eval_{k,a}$ and~$\func F_{k,a}$ and generalises to different forms of~$\sop W'$.
Note that the spectrum $\set \sigma_a$ of $\ulap$ is given by $\smash{ \set \sigma_a = \bigcup_k \unnorm \eval^{\rm spat}_{k,a} \cup \bigcup_k \unnorm \eval^{\rm temp}_{k,a} }$ for spatial eigenvalues $\smash{ \unnorm \eval^{\rm spat}_{k,a} }$ and temporal eigenvalues $\unnorm \eval^{\rm temp}_{k,a}$. Ordering the eigenvalues $\unnorm \eval_{k,a}$ of $\ulap$ as $\unnorm \eval_{1,a} \leq \unnorm \eval_{2,a} \leq \dots$, we obtain the trivial inequalities, 
\begin{equation}
    \unnorm \eval_{k,a} \leq \unnorm \eval_{k,a}^{\rm spat} {\mbox{ for $1\le k\le TN-N-1$, and }} \ \ \unnorm \eval_{k,a} \leq \unnorm \eval_{k,a}^{\rm temp} {\mbox{ for $2\le k\le T$}}.
    \label{eq:order}
\end{equation}

\begin{theorem}[Hyperdiffusion limit of $\ulap$]
\label{thm:var}
Let $\set{G}=(\set{V},\set{E},\sadj)$  be a spacetime graph and {let $\unnorm \eval_{k,a}, k=1,\ldots,TN,$ be the eigenvalues of the unnormalised supra-Laplacian $\ulap$}. Let $\unnorm \seval_k^D$, {$k=1,\ldots,N$,} denote the eigenvalues of the corresponding  dynamic Laplacian~$\unnorm \dlap$. Then the following statements hold:
\begin{enumerate}
    \item $\eval_{k,a} \le \seval_k^D$, {for $1\le k\le N$}.
    \item $\eval_{k,a}$ is nondecreasing for increasing $a$, {for $1\le k\le TN$}.
    \item $ \lim_{a \rightarrow \infty}\unnorm \eval_{k,a} = \unnorm \seval_{k}^D$, {for $1\le k\le N$}.
    \item The accumulation points of the sequence $(\unnorm {\func F}_{k,a}^{\rm spat})_{a > 0}$ for $a\to\infty$ are vectors constant in time, {for $1\le k\le NT-T-1$.}
\end{enumerate}
\end{theorem}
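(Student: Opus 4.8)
The plan is to establish the claim by a compactness-plus-energy argument that forces every limiting eigenvector into the kernel of the temporal Laplacian $\ultemp$, which I will then identify with the time-constant vectors. First I would normalise, replacing each $\unnorm{\func F}_{k,a}^{\rm spat}$ by a unit-norm representative. These live on the unit sphere of $\mathbb{R}^{TN}$, which is compact, so any sequence $a_n\to\infty$ has a subsequence along which $\unnorm{\func F}_{k,a_n}^{\rm spat}\to\func F^\ast$ with $\|\func F^\ast\|=1$. It then suffices to show that every such accumulation point $\func F^\ast$ satisfies $\func F^\ast\in\ker\ultemp$.

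The key estimate comes from testing the eigenvalue equation against the eigenvector itself. Using the splitting $\ulap=\ulspat+a^2\ultemp$ and the positive semidefiniteness of $\ulspat$, a unit-norm spatial eigenvector obeys
\begin{equation*}
a^2\,\langle \unnorm{\func F}_{k,a}^{\rm spat},\,\ultemp\,\unnorm{\func F}_{k,a}^{\rm spat}\rangle \;\le\; \langle \unnorm{\func F}_{k,a}^{\rm spat},\,\ulap\,\unnorm{\func F}_{k,a}^{\rm spat}\rangle \;=\; \unnorm\eval_{k,a}^{\rm spat}\,,
\end{equation*}
so that $\langle \unnorm{\func F}_{k,a}^{\rm spat},\ultemp\,\unnorm{\func F}_{k,a}^{\rm spat}\rangle\le \unnorm\eval_{k,a}^{\rm spat}/a^2$. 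The right-hand side tends to $0$ as soon as the eigenvalue grows strictly slower than $a^2$; for the low-lying spatial eigenvalues this is guaranteed by Parts~1--2 of the theorem, which give $\unnorm\eval_{k,a}^{\rm spat}=\unnorm\eval_{k,a}\le\unnorm\seval_k^D$ once $a$ is large enough that the temporal eigenvalues $a^2\seval_j^{\rm temp}$ have overtaken them in the global ascending order. Passing to the limit along the chosen subsequence and using continuity of the quadratic form $\func F\mapsto\langle\func F,\ultemp\func F\rangle$ then yields $\langle\func F^\ast,\ultemp\func F^\ast\rangle=0$.

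Since $\ultemp$ is positive semidefinite, the vanishing of its quadratic form forces $\func F^\ast\in\ker\ultemp$. By \eqref{eq:lspat_ltemp} and \eqref{eq:Wtemp_op}, $\ultemp$ acts fibrewise in $x$ as the temporal graph Laplacian $\sop L'$ of the temporal graph $\sset G'$; because $\sset G'$ is connected (its nearest-neighbour structure makes $\sop L'$ have a one-dimensional kernel spanned by the constant), one has $\ker\ultemp=\{\func F:\func F(t,x)=g(x)\text{ for some }g\}$, which is exactly the set of vectors constant in time. Hence every accumulation point $\func F^\ast$ is constant in time, as claimed.

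The hard part will be the bookkeeping that keeps the energy bound effective across the stated index range: one must certify that the relevant $\unnorm\eval_{k,a}^{\rm spat}$ grows strictly slower than $a^2$, so that $\unnorm\eval_{k,a}^{\rm spat}/a^2\to0$, and one must simultaneously track the interlacing of spatial and temporal eigenvalues so that the ascending-order index $k$ genuinely labels a spatial eigenpair in the limit. This growth-rate control — rather than the compactness or the identification of $\ker\ultemp$ — is the delicate point, and it is precisely what determines for which $k$ the constancy conclusion survives; the clean kernel identification itself only requires connectivity of $\sset G'$, which should be recorded as a standing assumption or read off directly from the structure of $\ltemp$.
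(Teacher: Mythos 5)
Your proposal addresses only statement~4 of this four-part theorem, and it does so by explicitly invoking statements~1--2 as inputs (``guaranteed by Parts~1--2 of the theorem''); nothing in the proposal establishes those parts, so as a proof of the theorem as stated it is circular and incomplete. In the paper, statement~1 follows from the Courant--Fischer principle \eqref{eq:CourantFischer_spat} tested on the span of the time-constant lifts $\func F_i(t,x)=\sfunc g_i^D(x)$ of the eigenfunctions of $\dlap$, on which the temporal energy term vanishes identically; statement~2 follows because the Courant--Fischer objective is pointwise nondecreasing in $a$; and statement~3 requires, beyond the upper bound supplied by statements~1--2, the lower bound $\lim_{a\to\infty}\eval^{\rm spat}_{k,a}\ge\seval_k^D$. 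The paper proves this lower bound by taking precisely your compactness limit $\func F_k^*$ (shown time-constant by your energy estimate), then interchanging $\limsup$ and maximum over the moving spans ${\rm span}\{\func F^{\rm spat}_{1,a_i},\ldots,\func F^{\rm spat}_{k,a_i}\}$ via \eqref{eq:limsup_max_swap}, discarding the nonnegative temporal term as in \eqref{eq:split_inner_prod}, and reassembling $\frac1T\sum_t\ulspat_t=\dlap$ so that Courant--Fischer for $\dlap$ applies. That interchange argument is the analytic heart of the theorem and is entirely absent from your proposal.

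For statement~4 itself, your mechanism coincides with the paper's: the bound $a^2\langle\func F^{\rm spat}_{k,a},\ultemp\func F^{\rm spat}_{k,a}\rangle\le\eval^{\rm spat}_{k,a}$ from positive semidefiniteness of $\ulspat$ is exactly the estimate behind \eqref{eq:F_const_in_time}, and your cleaner passage to the limit via continuity of the quadratic form is fine. Your identification of $\ker\ultemp$ with the time-constant vectors via connectivity of $\sset G'$ is correct, and you are right that this hypothesis is needed: the paper's route through \eqref{eq:Lap_variational} only forces $\func F^*$ to be constant along temporal edges, so connectivity of $\sset G'$ is implicitly used there too. However, the ``delicate point'' you flag is a genuine failure for the stated index range, not mere bookkeeping: the bound $\eval^{\rm spat}_{k,a}\le\seval_k^D$ exists only for $k\le N$. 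For $k>N$, since $\dim\ker\ultemp=N$, every $k$-dimensional subspace of $\mathbb{S}^{\rm spat}$ meets $(\ker\ultemp)^\perp$ nontrivially, whence $\eval^{\rm spat}_{k,a}\ge a^2\seval^{\rm temp}_2$ and your ratio $\eval^{\rm spat}_{k,a}/a^2$ does not vanish; indeed, dividing the eigenvalue equation by $a^2$ and passing to the limit shows that accumulation points for $k>N$ are eigenvectors of $\ultemp$ with \emph{positive} eigenvalue, hence not constant in time. So your argument certifies statement~4 only for $1\le k\le N$, not for the stated range $k\le NT-T-1$ --- though you are in good company, since the paper's own proof of statement~4 cites \eqref{eq:F_const_in_time}, which is likewise derived only for $k\in\{1,\ldots,N\}$.
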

\begin{proof}
   See Appendix \ref{sec:them:var}. 
\end{proof}

If we consider the $k^{\rm th}$ eigenvalue of the unnormalised Laplacian as the quality measure for the best $k$-packing, 
we can then interpret the eigenvalue inequality $\eval_{k,a} \le \seval^D_k$ as ``{the} best spacetime {$k$-}packing is always better than {the} best static {$k$-}packing''. Also, the difference between {these} two {packings} is monotonically vanishing 
as $a\nearrow \infty$, i.e., as we increasingly force similarity between adjacent time slices (statements 2 and~3). Finally, statement 4 says that again in the limit $a\nearrow \infty$, packings extracted from the $\func F_{i,a}^{\rm spat}$ become increasingly fixed in time.

\subsection{The normalised Laplacian $\overline{\mathbfcal L^{(a)}}$}

Now we turn to the normalised supra-Laplacian $\nlap$  and its hyperdiffusion limit. We will show that in this case only temporal information prevails, essentially due to the normalisation by the degrees which are for $a\gg 1$ dominated by the temporal contributions. This {greatly} reduces the usefulness of $\nlap$ for spatiotemporal clustering.

Consider the spacetime graph  $\set{G}^{\rm temp} := (\set V, \set E, \adjtemp)$.  Let the corresponding normalised temporal Laplacian be denoted by $\smash{ \norm \ltemp }$ and the normalised Laplacian acting over the node set~$\sset V'$ by $\norm {\sop L'}$. For functions $\sfunc f(t)$ and $\func f(t,x)$ defined on $\sset V'$ and $\set V$, respectively, we have that
\begin{align}
    \norm{\sop L'}f(t) &= \sum_s \sop W'_{t,s} \left(  \frac{\sfunc f(t)}{\sfunc d^{\rm temp}(t)} - \frac{\sfunc f(s)}{\sfunc d^{\rm temp}(t)^{1/2}\sfunc d^{\rm temp}(s)^{1/2}}\right) , \label{eq:littlenormL}\\
    \norm{\op L^{\rm temp}}\func F(t,x) &= \sum_{s} \adjtemp_{(t,x),(s,x)} \left( \frac{\func F(t,x)}{\sfunc d^{\rm temp}(t)} - \frac{\func F(s,x)}{\sfunc d^{\rm temp}(t)^{\frac{1}{2}}\sfunc d^{\rm temp}(s)^{\frac{1}{2}}} \right)\label{eq:normLtemp}.
\end{align}

Let $(\sfunc \mu_k, \sfunc \phi_k)$, $k=1,\dots, T$ be the corresponding eigenpairs of $\norm {\sop L'}$. Let $\func F_k(t,x) := \sfunc \phi_k(t)$ (for all $x$) be the lift of the eigenfunction $\sfunc \phi_k$ in spacetime. Then,
    \begin{equation}
        \begin{aligned}
            \sfunc \mu_k \func F_k(t,x) &= \mu_k \sfunc \phi_k(t) = \norm {\sop L'} \sfunc \phi_k(t) \\
            &\stackrel{\eqref{eq:littlenormL}}{=} \sum_{s} \sop W'_{t,s} \left( \frac{\sfunc \phi_k(t)}{\sfunc d^{\rm temp} (t)} - \frac{\sfunc \phi_k(s)}{\sfunc d^{\rm temp}(t)^{\frac{1}{2}}\sfunc d^{\rm temp}(s)^{\frac{1}{2}}} \right) \\
            & \stackrel{\eqref{eq:Wtemp_op}}{=} \sum_{s} \adjtemp_{(t,x),(s,x)} \left( \frac{\func F_k(t,x)}{\sfunc d^{\rm temp}(t)} - \frac{\func F_k(s,x)}{\sfunc d^{\rm temp}(t)^{\frac{1}{2}}\sfunc d^{\rm temp}(s)^{\frac{1}{2}}} \right) \\
            & \stackrel{\eqref{eq:normLtemp}}{=} \norm {\ltemp} \func F_k(t,x).
        \end{aligned}
    \end{equation}
Thus $\func F_k(t,x) = \sfunc \phi_k(t)$ is an eigenvector of $\norm {\ltemp}$ and is independent of $x$. Further, the space $\mathbb{E}_k := \mathrm{span}\{\sfunc \phi_k\} \otimes \mathbb R^N = \{\func F \,:\, \func F(t,x) = \sfunc \phi_k(t) \sfunc v(x) \text{ with } \sfunc v{(x)} \in \mathbb{R}^N\}$ is the $N$-dimensional eigenspace of $\smash{\norm {\ltemp}}$ corresponding to eigenvalue $\sfunc \mu_k$.
We have the following theorem relating $\smash{ \norm{\ltemp} }$ and~$\nlap$.
\begin{theorem}[Hyperdiffusion limit of $\nlap$]
\label{thm:normL}
   Let $\set G = (\set V, \set E,  \sadj)$ be a spacetime graph and let  $(\norm{\eval}_{k,a},\norm{\func F}_{k,a})$  be an eigenpair of the normalised supra-Laplacian $\smash{\nlap}$, for some $k\in\{1,\ldots,TN\}$. Then in the limit $a \to \infty$ we have that every accumulation point of this eigenpair is of the form $(\sfunc \mu_j,\func F)$, where $\func F\in \mathbb{E}_j$, for some $j \in \{1,\dots,T\}$.
\end{theorem}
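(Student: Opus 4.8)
The plan is to reduce the statement to a norm-convergence of matrix representations together with a compactness argument for the eigenpairs. Working with matrices, the normalised inflated dynamic Laplacian is the symmetric matrix
\[
\nlapmat = \mat I - (\adep{\mat D})^{-1/2}\,\sadjmat\,(\adep{\mat D})^{-1/2},
\]
where $\adep{\mat D} = \mat D^{\rm spat} + a^2 \mat D^{\rm temp}$ is the diagonal degree matrix, $\sadjmat = \adjspatmat + a^2\adjtempmat$ is the adjacency, and $\mat D^{\rm spat} = \mathrm{diag}(\func d^{\rm spat}(t,x))$, $\mat D^{\rm temp} = \mathrm{diag}(\sfunc d^{\rm temp}(t))$. (Here I use that the first summand in the definition of $\nlap$ collapses to $\func F(t,x)$, so $\nlap$ is represented by $\mat I$ minus a symmetric conjugation of $\sadjmat$.) The first thing I would establish, assuming the temporal graph has no isolated vertices so that $\sfunc d^{\rm temp}(t)>0$ for all $t$ and $\mat D^{\rm temp}$ is positive definite, is that $\nlapmat$ converges in norm as $a\to\infty$ to the matrix $\norm{\ltempmat}$ representing $\nltemp$.

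The heart of this step is to factor $a^2$ out of the degree matrix. Writing $\adep{\mat D} = a^2\bigl(\mat D^{\rm temp} + a^{-2}\mat D^{\rm spat}\bigr)$, so that $(\adep{\mat D})^{-1/2} = a^{-1}(\mat D^{\rm temp} + a^{-2}\mat D^{\rm spat})^{-1/2}$, and distributing the two factors of $a^{-1}$ against $\adjspatmat + a^2\adjtempmat$ in the middle, I obtain
\[
\nlapmat = \mat I - (\mat D^{\rm temp} + a^{-2}\mat D^{\rm spat})^{-1/2}\bigl(a^{-2}\adjspatmat + \adjtempmat\bigr)(\mat D^{\rm temp} + a^{-2}\mat D^{\rm spat})^{-1/2}.
\]
Since $\mat D^{\rm temp}$ is invertible, $\mat M \mapsto \mat M^{-1/2}$ is continuous there, and the $a^{-2}$ terms vanish as $a\to\infty$, so the right-hand side converges in norm to $\mat I - (\mat D^{\rm temp})^{-1/2}\adjtempmat(\mat D^{\rm temp})^{-1/2}$, which is exactly $\norm{\ltempmat}$ by \eqref{eq:normLtemp}.

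With norm-convergence in hand I would run the accumulation-point argument. Normalise the eigenvectors $\norm{\func F}_{k,a}$ to lie on the unit sphere, which is compact, and note the eigenvalues $\norm{\eval}_{k,a}$ lie in $[0,2]$, the spectrum of any normalised Laplacian. Given any sequence $a_n\to\infty$, pass to a subsequence along which $\norm{\eval}_{k,a_n}\to\lambda$ and $\norm{\func F}_{k,a_n}\to\func F$ with $\|\func F\|=1$; taking the limit in the eigenrelation $\nlapmat\,\norm{\func F}_{k,a} = \norm{\eval}_{k,a}\,\norm{\func F}_{k,a}$ along $a=a_n$ and using $\nlapmat\to\norm{\ltempmat}$ yields $\norm{\ltempmat}\func F = \lambda\func F$. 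It remains to identify the spectrum of $\norm{\ltempmat}$: because the temporal weights are space-independent \eqref{eq:Wtemp_op}, both $\adjtempmat$ and $\mat D^{\rm temp}$ carry a Kronecker structure of the form $(\cdot)\otimes\mat I_N$, whence $\norm{\ltempmat} = \norm{\smat L'}\otimes\mat I_N$; its eigenvalues are precisely $\sfunc\mu_1,\dots,\sfunc\mu_T$ and the eigenspace of $\sfunc\mu_j$ is $\mathbb{E}_j$, as already recorded before the theorem. Hence $\lambda=\sfunc\mu_j$ and $\func F\in\mathbb{E}_j$ for some $j$, as claimed.

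The main obstacle, beyond bookkeeping, is the careful treatment of the normalisation: one must see that dividing by degrees growing like $a^2\sfunc d^{\rm temp}(t)$ simultaneously cancels the explicit $a^2$ multiplying $\adjtempmat$ and suppresses the spatial block $\adjspatmat$ by a factor $a^{-2}$, which is exactly where positive definiteness of $\mat D^{\rm temp}$ enters to make $(\mat D^{\rm temp}+a^{-2}\mat D^{\rm spat})^{-1/2}$ well-defined and continuous in the limit. This is also precisely where the normalised case diverges qualitatively from the unnormalised one of Thm.~\ref{thm:var}: there no degree normalisation tempers the $a^2\ultemp$ term, so a block of eigenvalues remains bounded and the limit is the dynamic Laplacian, whereas here normalisation forces the temporal part to dominate and the limit is purely temporal. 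A minor point to verify is that the limiting eigenvector is genuinely nonzero, which is guaranteed by the unit-norm normalisation surviving the limit.
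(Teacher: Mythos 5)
Your proof is correct, and it takes a genuinely different route from the paper's. The paper argues at the operator level: it splits $\nlap = \nlspat + a^2\nltemp$, shows termwise pointwise convergence ($\nlspat\func F \to 0$ because the degrees $\adep{\func d}(t,x)$ grow like $a^2$, and $a^2\nltemp\func F \to \norm{\ltemp}\func F$ via the limit $a^2/\adep{\func d}(t,x) \to 1/\sfunc d^{\rm temp}(t)$), upgrades this to uniform convergence by appealing to Dini's theorem in finite dimensions, and then invokes Kato's perturbation theorem [Kato, Thm.~II.5.1] to transfer the convergence to eigenvalues and eigenspaces. You instead obtain norm convergence in a single algebraic step, via the exact conjugation identity $\nlapmat = \mat I - (\mat D^{\rm temp} + a^{-2}\mat D^{\rm spat})^{-1/2}(a^{-2}\adjspatmat + \adjtempmat)(\mat D^{\rm temp} + a^{-2}\mat D^{\rm spat})^{-1/2}$, and replace Kato by an elementary compactness/subsequence argument on the unit sphere. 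Your route buys three things: it makes the convergence rate $O(a^{-2})$ in operator norm visible; it is self-contained (no perturbation-theory citation); and it sidesteps the paper's somewhat delicate appeal to Dini, which nominally requires monotonicity that is never verified (though in finite dimensions pointwise convergence of linear maps is automatically norm convergence, so the paper's conclusion stands regardless). Your explicit hypothesis $\sfunc d^{\rm temp}(t) > 0$ is not an extra restriction --- the paper's limit $a^2/\adep{\func d}(t,x) \to 1/\sfunc d^{\rm temp}(t)$ and the very definition of $\norm{\sop L'}$ require it implicitly --- and your Kronecker identification $\norm{\ltempmat} = \norm{\smat L'}\otimes\smat I_N$ reproduces, in matrix form, exactly the eigenspace computation $\mathbb{E}_j = \mathrm{span}\{\sfunc\phi_j\}\otimes\mathbb{R}^N$ that the paper carries out by direct verification just before the theorem. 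One caveat you share with the paper: if two eigenvalues $\sfunc\mu_j = \sfunc\mu_{j'}$ of $\norm{\smat L'}$ coincide, the limiting eigenvector is only guaranteed to lie in $\mathbb{E}_j \oplus \mathbb{E}_{j'}$, not in a single $\mathbb{E}_j$; this degeneracy is glossed over in the theorem statement itself, so it is not a defect of your argument relative to the paper's.
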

\begin{proof}
    See Appendix~\ref{sec:thm:normL}.
\end{proof}

As a consequence of Thm.~\ref{thm:normL}, the first $N$ eigenvalues of $\nlap$ are converging to 
to zero as $a \to \infty$. Moreover, there is no clustering-relevant structure in the corresponding eigenfunctions.
For $\slap$ we obtain more meaningful information as $a \to \infty$. From Thm.~\ref{thm:var} we know that as $a \to \infty$, the first $N$ eigenvalues of $\slap$ {recover those of} $\sop L^D$ which  has only one trivial zero eigenvalue if the graph described by $\sop W^D$ is connected. Moreover, the eigenvectors of $\slap$ also maintain structure as $a\to\infty$, namely informing us about optimal static clusterings.
{For these reasons} we use the unnormalised Laplacian $\slap$ for spectral partitioning of spacetime graphs. 
The use of the unnormalised Laplacian is also justified in \cite[Thm.~21]{vonLuxburg2008} if the eigenfunctions of $\slap$ used for spectral clustering have corresponding eigenvalues that lie outside the range $(\min_{t,x}\adep {\func d}(t,x), \max_{t,x}\adep {\func d}(t,x))$. This condition is satisfied for all {selected} eigenfunctions in all examples presented in Secs.~\ref{sec:example_networks} and~\ref{sec:sen}.

\section{Spectral partitioning in spacetime graphs using the supra-Laplacian}\label{sec:matrix_alg}

In the previous sections we discussed how the quality of graph cuts in spacetime graphs depends on the smallest nonzero eigenvalues of the associated supra-Laplacians.
In this section we describe how to perform spectral partitioning with sparse eigenbasis approximation (SEBA) by using the first $R \in \mathbb{N}$ spatial eigenvectors of the unnormalised supra-Laplacian $\slap$ to construct approximate minimisers of the Cheeger constant. In particular, we will discuss the choice of the temporal diffusion strength $a$.

We formulate the spectral partitioning Algorithm \ref{alg:spec_part} by defining a matrix representation of $\slap$ and in Secs.~\ref{sec:example_networks} and \ref{sec:sen} we demonstrate Algorithm \ref{alg:spec_part} on a few example networks and a network of time-varying US Senate roll call votes, which has been  analysed in~\cite{Mucha2010, Waugh2009}.

\subsection{Matrix representations}
\label{ssec:matrixform}

Consider a general graph $\sset G = (\sset V,\sset E, \adj)$. We denote by $\smat W$ the matrix form of the adjacency $\adj$ and by $\smat L$ the matrix form of the corresponding unnormalised Laplacian~$\lap$ with respect to the canonical basis. 
We further define the degree matrix $\smat D$ by 
\begin{equation}
    {\smat D}:= {\rm diag}\left( \textstyle{\sum_{y}} \smat W_{1,y},  \dots, \textstyle{\sum_{y}} \smat W_{N,y} \right).
    \label{eq:mat_degreevolume}
\end{equation}
Then $\smat L$ can be written as
\begin{equation}
    {\smat L} = {\smat D} - {\smat W}.
\end{equation}
We have the following equivalence with $\lap$: 
\begin{equation}
    \lap \sfunc f(i) = (\smat L \svect g)_i 
    \textnormal{ where } \svect g = [\sfunc f(1), \sfunc f(2), \dots ,\sfunc f(N)]^\top.
\end{equation}

\subsubsection{Spacetime matrices}

We now extend these ideas to matrix forms of spacetime graph operators. Recall that $\set G$ is a multiplex spacetime graph and that $N$ is the number of vertices at any time and $T$ is the number of time slices. We first define a {bijection $\func i: \mathbb{R}^T\times \mathbb{R}^N\to \mathbb{R}^{TN}$} 
\begin{equation}
    \label{eq:lexicographic_ordering}
    \func i: (t,x) \mapsto N(t-1) + x
\end{equation}
on the vertices $(t,x)$ which allows us to define spacetime vectors $\vect F \in \mathbb R^{TN}$. Thus, $\func i(\set{V}) = \{1,2, \dots, N, N+1, \dots, TN-1, TN\}$.
Next, adjacency and Laplacian matrices are constructed such that the rows and columns are ordered  with respect to~\eqref{eq:lexicographic_ordering}.
The matrix form $\sadjmat$ of $\sadj$ is then given by the $TN\times TN$ matrix
\begin{equation}
\label{eq:adj_mat_assembly}
    \sadjmat = \begin{bmatrix}
        \sadj_{(1,1),(1,1)} & \dots & \sadj_{(1,1),(T,N)} \\
        \vdots & & \vdots \\
        \sadj_{(T,N),(1,1)} & \dots & \sadj_{(T,N),(T,N)}
    \end{bmatrix}.
\end{equation}
From this definition, we can split $\sadjmat$ into spatial and temporal components, analogously to $\sadj$ in~\eqref{eq:supraW}. Thus,
\begin{equation}
    \sadjmat := \adjspatmat + a^2 \adjtempmat,
    \label{eq:adj_mat_split}
\end{equation}
where the spatial and temporal matrix forms  $\adjspatmat$ and $\adjtempmat$ are defined below. The matrix form of $\adjspat$ follows from the above representation as a block diagonal matrix,
\begin{equation}
    \adjspatmat   = \begin{bmatrix}
        \smat W^{\rm spat}_1 & \dots & 0 \\
        \vdots & \ddots & \vdots \\
        0 & \dots & \smat W^{\rm spat}_T
    \end{bmatrix},
    \label{eq:mat_Wspat_arbitrary}
\end{equation}
where each
\begin{equation}
    \smat W^{\rm spat}_t = \left[ \sadjmat_{(t,x),(t,y)} \right]_{x,y}, \quad  t=1,\ldots,T
\end{equation}
is an $N\times N$ matrix.
The block diagonal structure of $\mat W^{\rm spat}$  is shown in Fig.~\ref{fig:intro} (right). Let  $\smat W' \in \mathbb{R}^{T \times T}$ be a symmetric matrix representing 
the operator $\sop W'$ from~\eqref{eq:Wtemp_op} in the canonical basis. Again using the indexing from \eqref{eq:lexicographic_ordering}, the matrix form $\adjtempmat$ of $\adjtemp$ is
\begin{equation}
    \adjtempmat = \smat W' \otimes \smat I_N,
    \label{eq:mat_Wtemp}
\end{equation}
where $\smat I_{N}$ is the $N \times N$ identity matrix and  $\otimes$ is the standard Kronecker product given by
\begin{equation}
\label{eq:Kronecker}
   \mathbb{R}^{N_1N_2 \times M_1M_2}\text{\reflectbox{$\in$}} \smat A \otimes \smat B := \begin{bmatrix}
\smat A_{1,1}\smat B & \dots &  \smat A_{1,M_1} \smat B \\ 
\vdots & \ddots & \vdots \\ 
\smat A_{N_1,1}\smat B & \dots & \smat A_{N_1,M_1}\smat B  \\ 
\end{bmatrix}, \ \ \smat A \in \mathbb{R}^{N_1 \times M_1}, \smat B\in \mathbb{R}^{N_2\times M_2}.
\end{equation}
Recalling the definitions in \eqref{eq:deg} we define diagonal degree and volume matrices via
\begin{equation}
\label{eq:mat_deg_vol}
\begin{aligned}
    \adep{\mat{D}}_{\func i(t,x), \func i(t,x)} &:=  \adep{\func d}(t,x), \\
    \mat{D}^{\rm spat}_{\func i(t,x), \func i(t,x)} &:= \func d^{\rm spat}(t,x), \\ 
    \mat{D}^{\rm temp}_{\func i(t,x), \func i(t,x)} &:= \sfunc d^{\rm temp}(t). 
\end{aligned}
\end{equation}
Note that vertices $(t,x) \in \set V$ along the respective diagonals are ordered as per~\eqref{eq:lexicographic_ordering}.  The matrix form $\slapmat$ of the graph Laplacian $\slap$ is
\begin{equation}
\begin{aligned}
    \slapmat &= \adep{\mat{D}} - \sadjmat\label{eq:mat_supL} \\  
                  &= (\mat{D}^{\rm spat} - \adjspatmat) + a^2 (\mat{D}^{\rm temp} - \adjtempmat) \\ 
                  &=:  \lspatmat + a^2 \ltempmat.
\end{aligned}
\end{equation}
The matrix form $\ltempmat$ of $\ultemp$ from~\eqref{eq:lspat_ltemp} can also be written as
\begin{equation}
 \ltempmat = \smat L' \otimes \smat I_N,
 \label{eq:mat_ltemp}
\end{equation}
where  $\smat L'$ is the Laplacian created from $\smat W'$. For an eigenfunction $\func F_{k,a}$ of $\slap$, we define the vector form $\vect {F}_{k,a}$ by
\begin{equation}
    \vect {F}_{k,a} := [\func F_{k,a}(1,1), \dots, \func F_{k,a}(1,N), \func F_{k,a}(2,1), \dots,  \func F_{k,a}(T,1), \dots, \func F_{k,a}(T,N)]^\top,
    \label{eq:mat_evec}
\end{equation}
which is  the corresponding eigenvector of $\slapmat$. 

\subsection{A spectral partitioning algorithm}\label{sec:alg}

To identify clusters in a given spacetime multiplex graph, including their appearance and disappearance through time, we begin by constructing the supra-Laplacian matrix $\slapmat$ from a collection of adjacency matrices $\{\smat W^{\rm spat}_t\}_{t=1,\ldots, T}$.
The algorithm that proceeds from the construction of the supra-Laplacian to obtaining spacetime packings is summarised below.
The individual steps in the algorithm are elaborated in the paragraphs that follow.

\IncMargin{1em}
\begin{algorithm}[h]
\caption{Spectral partitioning using the supra-Laplacian.}\label{alg:spec_part}
\SetKwData{Left}{left}\SetKwData{This}{this}\SetKwData{Up}{up}
\SetKwFunction{Union}{Union}\SetKwFunction{FindCompress}{FindCompress}\SetKwFunction{SEBA}{SEBA}
\SetKwInOut{Input}{Input}\SetKwInOut{Output}{Output}
\Input{Number of vertices $N$, number of time slices $T$, graph $\set G$ represented by a collection of spatial weights 
$\{ \smat W^{\rm spat}_t\}_{t=1,\dots, T}$, $\smat W^{\rm spat}_t \in \mathbb{R}^{N \times N}$, and temporal weights $\smat W' \in \mathbb{R}^{T \times T}$.} 
\Output{ $K$-packing $\{\set{X}_{n_1},\ldots,\set{X}_{n_K}\}$.}
\BlankLine
Let $\sadjmat$ be constructed using $\{\smat W^{\rm spat}_t\}_{t=1, \dots, T}$ as per \eqref{eq:adj_mat_split}, \eqref{eq:mat_Wspat_arbitrary} and \eqref{eq:mat_Wtemp}\;
Construct $\adep{\mat D}$ and $\slapmat$ using \eqref{eq:mat_deg_vol} and~\eqref{eq:mat_supL}\;
Identify leading nontrivial spatial and temporal eigenvalues {using Lemma~\ref{lem:spattemp}} and compute $a =  a_c$ such that $\smash{\eval^{\rm spat}_{2,a_c} \lesssim \ \eval^{\rm temp}_{2,a_c}}$ (for instance via bisection)\;
Compute $\smash{ (\eval^{\rm spat}_{k,a}, \vect F^{\rm spat}_{k,a}) }$, the $k$-th spatial eigenpair of $\slapmat$, for $k=2,\ldots,R+1$. Select a suitable value for $R$ as described in step 4 in the main text\;
Construct companion eigenvectors $\tilde{\vect F}_{k,a}$: Define $\smash{ {\svect f}_{k,a} := [\|{\vect F}^{\rm spat}_{k,a}( 1, \cdot)\|, \ldots, \|\vect F^{\rm spat}_{k,a}(T, \cdot)\|]\in\mathbb{R}^{T} }$, $k=2,\dots, R+1$. Define the spacetime extension $\tilde{\vect F}_{k,a} := {\svect f}_{k,a} \otimes \mathbbm 1_N$\; 
Isolate spacetime packing elements: Apply the SEBA algorithm (\cite{Froyland2019}) to the collection $\{ \smash{ \vect F^{\rm spat}_{2,a}, \tilde{\vect F}_{2,a},  \dots,  \vect F^{\rm spat}_{R+1,a}, \tilde{\vect F}_{R+1,a} }\}$. The output SEBA vectors are denoted $\{\vect S^{R}_{j,a}\}_{j=1,\dots,2R}$\; 
Identify spurious SEBA vectors:  If the $j^{th}$ SEBA vector is approximately constant on each index block corresponding to a single time fibre, i.e.\ if $[\vect s^{R}_{j,a}]_z\approx C_{j,t}$ for $z=\func i(t,x)$, $1\le x\le N$, $1\le t\le T$\;
Define packing elements:
Let $\{{n_1}
,\ldots,{n_{K}}\}$, $K\le 2R$ denote indices of the non-spurious SEBA vectors. Define $\set{X}_{n_j}=\{(t,x): [\vect S^{R}_{n_j,a}]_z>0, z=\func i(t,x)\}$ for each $j=1,\ldots,K$.  If a single spacetime vertex $(t,x)$ belongs to two or more $\set{X}_{n_j}$, assign it only to the $\set{X}_{n_j}$ with the largest $[\vect s^{R}_{n_j,a}]_z$ value. One may augment the packing with the unclustered set $\smash{ \set \Omega = \{(t,x) \,:\, [\vect s^{R}_{n_j,a}]_z = 0, z = \func i(t,x)\}}$ to obtain a $K+1$ partition, where $\func i$ is the ordering from~\eqref{eq:lexicographic_ordering}.
\vspace{0.1cm}
\end{algorithm}\DecMargin{1em}

\paragraph{Identify spatial and temporal eigenvalues and choose the temporal diffusion strength parameter $a$ (step 3).} 
We compute eigenpairs of $\slapmat$ denoted by $(\eval_{k,a}, \vect F_{k,a})$ for $k=1, \dots, TN$. Spatial eigenvalues are identified by locating the corresponding eigenvectors that have constant spatial means as per Lemma~\ref{lem:spattemp}. That is, eigenvectors $\vect f_{k,a}$ such that $\sum_{x \in V}\vect f_{k,a}(t,x) = C_k$, for all $ 1 \le t \le T$. This is {verified numerically} as in \cite[Sec.~4.2]{FrKo23} by computing the variance of the means of the time fibres $\vect f_{k,a}(t,\cdot)$; the eigenvector is spatial if this value is close to zero.
We now fix a value of~$a$. We do so by finding the smallest value of $a$ for which $\smash{ \eval^{\rm spat}_{2,a} \lesssim \eval_{2,a}^{\rm temp} }$. In the case of the unnormalised Laplacian $\ulap$, this choice of $a$ is unique as the eigenvalues are monotonic in $a$, spatial eigenvalues $\smash{ \eval_{k,a}^{\rm spat} }$ saturate as $a \to \infty$ by Thm.~\ref{thm:var}, and the temporal eigenvalues scale with~$\mathcal O(a^2)$. A similar heuristic was used in the continuous-time supra-Laplacian in~\cite[Sec.~4.1]{FrKo23}. The non-multiplex case in Sec.~\ref{sec:nonmultiplex}  will be treated in a similar spirit.

\paragraph{Choosing the number of spatial eigenvectors $R$ to input to SEBA (step 4).} To spectrally partition using SEBA, we require the parameter $R$, which is the number of {nontrivial} spatial eigenvectors of the supra-Laplacian to be used. The first $R$ spatial eigenvectors and corresponding $R$ companion vectors (step 5, described below) {yield} $2R$ SEBA vectors. 
For a selected value of $R$ we then follow the remaining steps 5--8 of Algorithm \ref{alg:spec_part} to construct a $K$-packing of the network, where $K \leq 2R$. We determine a final value for $R$ using a balance of two criteria:

\begin{enumerate}
    \item {\bf Spectral gap}: After fixing $a$ as per step 3 of Algorithm~\ref{alg:spec_part}, we compute eigenvalues $\unnorm \eval_{k,a}^{\rm spat}$ and look for the first significant gap between these values. If the first significant gap arises between eigenvalues \smash{$\unnorm \eval_{k,a}^{\rm spat}$ and $\unnorm \eval_{k+1,a}^{\rm spat}$}, then $R$ is chosen to be~$k-1$. This is based on the spectral gap theorem, see \cite[Thm.~4.10]{Lee2014}.
    \item {\bf Mean Cheeger ratio $\tilde {\cfunc H}(R)$}:
    For each value of $R \in \mathbb N$, we obtain $2R$ SEBA vectors, each with support $\set X_k$.   We compute the mean Cheeger ratio $\tilde {\cfunc H}(R) = \frac{1}{2R}\sum_{k=1}^{2R}\cfunc H(\set{X}_k)$ where $\cfunc H(\set X_k)$ is defined as per \eqref{eq:Cheeger}. The value of $R$ is then chosen as $R = \arg \min_R \tilde {\cfunc H}(R)$.
\end{enumerate} 
In the examples presented in sections~\ref{sec:example_networks} and \ref{sec:sen}, we use each of these criteria to fix a choice of $R$. We prioritise the criterion which provides the most clear choice for $R$.

\paragraph{Constructing companion eigenvectors for input to SEBA (step 5).}
In the previous paragraph we discussed the choice of number of spatial eigenvectors $R$.
Because of the orthogonality relationships  between the different eigenvectors, it is possible that the leading nontrivial $R$ spatial eigenvectors encode more than $R$  spatiotemporal clusters.
To fully extract all relevant cluster information from $R$ eigenvectors, for each eigenvector $k=2,\ldots,R+1$ we create a companion vector $\tilde{\vect F}_{k,a}$ with $\tilde{\vect F}_{k,a}(t,x):= \|\vect F^{\rm spat}_{k,a}( t, \cdot)\|_2$, which is a spacetime vector that is constant on each time slice, with constant value given by the $\ell_2$ norm of the eigenvector on each time slice.
This procedure regenerates the additional degrees of freedom that are limited by orthogonality of the spatiotemporal eigenvectors, and is explained in greater depth in \cite[ Secs. 6.3.2 and 7.1.1]{AFK24} in the context of supra-Laplace operators on Riemannian manifolds with Neumann boundary conditions. 

\paragraph{Isolating packing elements using SEBA (step 6).} The Sparse Eigenbasis algorithm (SEBA) was introduced in \cite{Froyland2019} to disentangle cluster elements from eigenfunctions of Laplacians that represent packing/partitioning of a set. We summarise the algorithm here. Given a set of vectors $\{\svect v_1, \dots, \svect v_r\}$, $\svect v_i\in \mathbb{R}^m$ for some $m \in \mathbb N$, the algorithm computes a basis $\sset S$  (whose elements are sparse vectors) of a subspace that is approximately spanned by the $\svect v_i$, $i=1,\ldots,r$. This is done by computing the ideal rotation matrix $\smat Q$, which when applied to $\smat V:= [\svect v_1 | \svect v_2 | \cdots | \svect v_r] \in \mathbb{R}^{m\times r}$ produces a matrix $\smat S = [\svect s_1 | \svect s_2 | \cdots | \svect s_r] \in \mathbb{R}^{m\times r}$ that is sparse and has approximately orthogonal columns which define the SEBA vectors $\sset S = \{\svect s_1, \dots, \svect s_r\}$. More precisely, one solves the following nonconvex minimisation problem,
\begin{equation}
    \operatorname*{arg\,min}_{\smat S,\smat Q} \frac{1}{2}\|\smat V - \smat S \smat Q\|_F^2 + \mu \|\smat S\|_1,
\end{equation}
where $\|\cdot\|_F$ is the matrix Frobenius norm, $\|\cdot\|_1$ is the matrix 1-norm and $\mu$ is an appropriately chosen regularisation parameter  (the standard choice in \cite{Froyland2019} is $\mu=0.99/\sqrt{m}$). More details can be found in \cite{Froyland2019} and code is available at \url{github.com/gfroyland/SEBA}.
We apply the SEBA algorithm to the $R$ leading nontrivial spatial eigenvectors and their corresponding $R$ companion vectors $\{\vect F^{\rm spat}_{2,a}, \tilde{\vect F}_{2,a},  \dots,  \vect F^{\rm spat}_{R+1,a}, \tilde{\vect F}_{R+1,a}\}$ constructed in Step 5.
SEBA returns a collection of vectors $\smash{ \{\vect S_{j,a}^{R} \}_{j=1,\dots, 2R} }$ for a fixed choice of $R$ and $a$. The supports of these vectors are individual clusters that form a packing of  $\set V$.

\paragraph{Identifying spurious packing elements (step 7).}\label{par:Kpartition} 
We observe in our experiments that SEBA can produce spurious vectors $\unnorm {\vect S}_{j,a}^R$ that do not provide meaningful clustering information and correspond to undesirably high cut values $\cfunc H(\set X_k)$. This occurs when the true number of clusters $K$ lies somewhere strictly between $R$ and $2R$.  SEBA outputs $2R$ vectors in total, $K$ of which are useful vectors representing spacetime clusters, and $2R-K$ of which are spurious.
These spurious vectors are easily identified because they are constant on time fibres;  this is illustrated in \cite[Figure~25]{AFK24} in the manifold setting. In the graph setting, these vectors can be identified by being constant on time fibres or having a much larger cut value. The latter is illustrated in Sec.~\ref{sec:ex2}.  

\section{Networks with clusters that appear and disappear in time}
\label{sec:example_networks}

The spacetime graphs shown in this subsection are generated artificially, see Appendix~\ref{sec:graphgen} for details. We perform a spectral partitioning on the corresponding {supra-}Laplacian $\ulap$  using Algorithm~\ref{alg:spec_part}.
We intentionally chose small graphs having $N=20$ vertices and a smooth, non-abrupt transitioning of their cluster structure to provide non-trivial and challenging benchmark cases.
We compare our approach with one of the most established clustering strategies, the Leiden algorithm presented in \cite{Traag2019} (an improvement of the so-called Louvain algorithm in \cite{blondel2008fast}), {adapted} in two different ways to the evolving nature of the networks. 
{Using both visual comparison and Cheeger ratios as quality measures,} we show that both instances of the Leiden algorithm {do not} capture the spatiotemporal nature and evolution of the clusters.
{In contrast, our approach accurately detects all of the transitions that the clusters undergo, and identifies times when no clusters exist.}

\subsection{Example 1: {Disappearance of the unclustered set}}\label{sec:ex1}
The  graph  and results are shown in Fig.~\ref{fig:ex1}. 
{The unclustered set gradually disappears into two clusters (alternatively two clusters appear)} as the graph changes from a regular graph. Using our spacetime spectral partitioning technique, we expect to find {these two clusters} when present, and otherwise {find} the network to {be} unclustered.

\subsubsection{Graph construction}
 We construct a graph with $N=20$ and $T=21$.
  These graphs are represented by a sequence of adjacency matrices $\{\smat W^{\rm spat}_t\}$, $t=1,\dots,21$ such that $\smat [W^{\rm spat}_{t}]_{ij} \in \{0,1\}$, $\forall i,j,t$.
  At $t=1$, the graph is $11$-regular and slowly transitions to a graph at $t=21$ that contains two clusters that are connected to each other by a single edge.
  At some time around $t=13$, the unclustered set $\set \Omega = \{(t,x) \,:\,  1\leq t\leq 13, x = 1, \dots, 20\}$ disappears into $\{\set Y_1, \set Y_2\} $ where $\set Y_1 = \{ (t,x) \,:\, 14\le t \le 21, x= 1, \ldots, 5, 18,19,20 \}$ and  $\set Y_2 = \{ (t,x) \,:\, 14\le t\le 21, x= 6, \ldots, 17 \}$. 

\subsubsection{Cluster detection} We use Algorithm~\ref{alg:spec_part} to compute the individual clusters. 
Using Step 3 of Algorithm~\ref{alg:spec_part} we obtain the critical value $a=8.15921$ which is fixed for the rest of the computations.
The parameter $R$ tells one how many of the first $R$ nontrivial spatial eigenvectors $\vect F^{\rm spat}_{k,a}$, $k=2,\ldots,R+1$ should be used for spectral partitioning.  
At $R=1$ there is a significant spectral gap between $\unnorm \eval^{\rm spat}_{2,a}$ and $\unnorm \eval^{\rm spat}_{3,a}$, see Fig.~\ref{fig:ex1} (lower right). 
\begin{figure}[htbp]
    \centering
    \includegraphics[width=\textwidth]{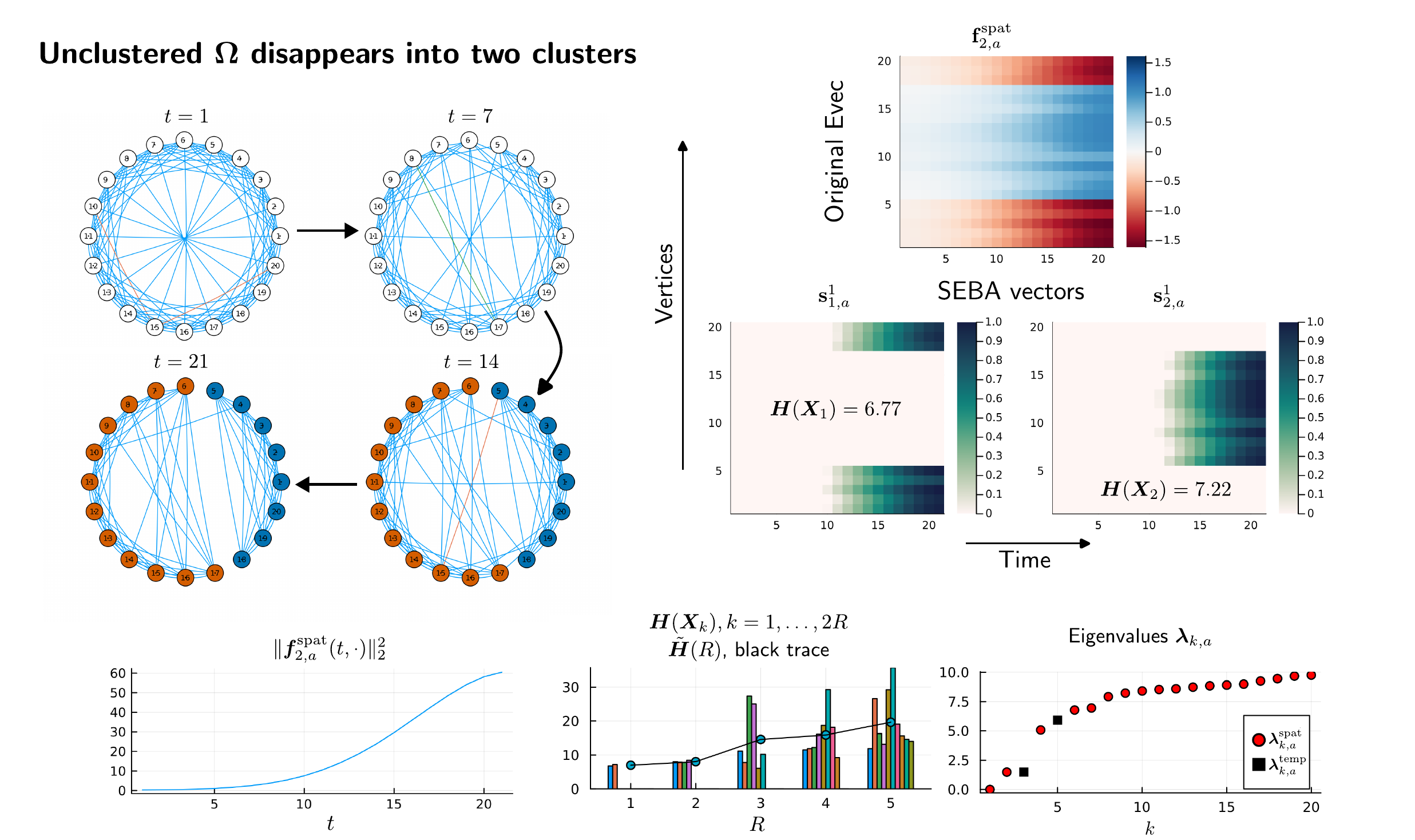}
    \caption{Spectral partitioning of spacetime graphs with the supra-Laplacian to discover the {disappearance of the unclustered set (and the appearance of two clusters)} discussed in in Sec.~\ref{sec:ex1}. (Upper left) Multiple time slices of the spacetime graph. The vertices are coloured according to assignments from individual SEBA vectors computed using the unnormalised Laplacian $\slapmat$. Red edges indicate edges at time $t$ that are removed at time $t+1$. (Upper right) Cluster detection using spectral partitioning. Using $R=1$, the spatial eigenvector $ \vect F^{\rm spat}_{2,a}$ of the supra-Laplacian $\smash{ \slapmat }$ and the corresponding two SEBA vectors $\smash{ \vect S^{1}_{1,a},  \vect S^{1}_{2,a}}$, supporting clusters $\set{X}_1, \set{X}_2$ respectively, with Cheeger {ratios} computed using~\eqref{eq:Cheeger}.  The complete lack of support of $\vect S^{1}_{1,a},  \vect S^{1}_{2,a}$ during times earlier than $t=10$ indicates the unclustered spacetime nodes $\set\Omega$. (Lower left) Plot of $\smash{ \|\vect F_{2,a}^{\rm spat}(t, \cdot)\|_2^2 }$ versus $t$, which demonstrates growing confidence in partitioning as time increases.  Also shown are  (Lower middle) plots of the cut function $\cfunc H(\set{X}_k)$ for $k=1 \dots 2R$, vs $R$, where $\set{X}_k$ is the support of SEBA vector $\vect S_{k,a}^{R}$ and the  (Lower right) eigenvalues $\eval_{k,a}$, sorted by spatial (red circles) and temporal (black squares). }
    \label{fig:ex1}
\end{figure}
The mean Cheeger ratios $\smash{\tilde {\cfunc H}(R)}$ in Fig.~\ref{fig:ex1} (lower center)  remain low for $R=1$ and~$R=2$.
We therefore fix $R=1$ because this is a good choice in terms of both spectral gap and low cut values.  
Algorithm~\ref{alg:spec_part} returns the first nontrivial spatial eigenvector $\unnorm {\vect F}^{\rm spat}_{2,a}$ of $\ulapmat$, the corresponding SEBA vectors $\smash{ \unnorm {\vect S}^{1}_{1,a}, \unnorm {\vect S}^{1}_{2,a} }$, and their associated associated clusters $\set{X}_1:= {\rm supp}(\unnorm {\vect S}^{1}_{1,a}),\set{X}_2:= {\rm supp}(\unnorm {\vect S}^{1}_{2,a}) $ and unclustered set $\set{\Omega} = \set V \setminus \{\set X_1, \set X_2\}$. 

For comparison, we also perform  cluster detection using the {\it Leiden} algorithm~(\cite{Traag2019}). The algorithm is an iterative aggregation-based cluster-identification procedure that maximises modularity. We stop the algorithm when there is no further increase in modularity. The algorithm is applied to our network in two ways: slice-by-slice and on the whole graph. In the slice-by-slice case, Leiden (the shorthand name used for the Leiden algorithm) is used to discover clusters on every time slice. We then use a matching technique to connect the various spatial clusters into spacetime clusters, see Appendix~\ref{sec:app:matching}. \textit{Ad hoc} matching techniques have been used before to link individual spatial clusters in time, see \cite{macmillan2020} for example. Our approach solves a maximum weight edge cover problem to maximise intercluster weights and optimise temporal matching. The results are shown in Fig.~\ref{fig:leiden-seba} (upper row).
We also report the values of the cut function $\cfunc H(\set X_k)$ for each of the clusters obtained from our spectral partitioning algorithm and Leiden. {To illustrate the robustness of Algorithm \ref{alg:spec_part} with respect to the choice of the parameter $a$ in Appendix~\ref{sec:app:varyinga} we summarise cluster assignments obtained using a very wide range of $a$ values.}

\subsubsection{Discussion of clustering}
In Fig.~\ref{fig:ex1} {(upper right)} we observe that the first nontrivial spatial eigenvector $\unnorm {\vect F}^{\rm spat}_{2,a}$ gives a clear spacetime partition that separates the two sets of spatial vertices $\{1, \ldots, 5, 18, \ldots, 20 \}$ (red, negative values) and $\{6, \ldots, 17\}$ (blue, positive values)  through time. {The increasing strength of this separation} can be visualised by the increase in $\|\unnorm {\vect F}^{\rm spat}_{2,a}(s, \cdot)\|$ as $s$ increases (see Fig.~\ref{fig:ex1}, lower left).  The SEBA vectors clearly separate the appearing clusters after around $t=10$ and have zero support in the early time interval $t=1$ to $t=9$ where the graphs on individual time slices are close to regular. We observe that the sets $\set X_1$ and $\set X_2$ recover the ground truth clusters $\set Y_1$ and $\set Y_2$ respectively.

{In Fig.~\ref{fig:leiden-seba} (upper panel)} we compare these results with those obtained using the Leiden algorithm {followed by our optimal matching algorithm from Appendix \ref{sec:matching}}.
The slice-by-slice Leiden computations show that the network is partitioned into three clusters while the network is still near-regular ($t=1$ to $t=7$). However at later times, the correct two clusters are recovered. Nevertheless, the value $\max_{1 \leq k \leq 3}\cfunc H(\set X_k) = \cfunc H(\set X_3) = 29.7$ for the Leiden slice-by-slice case is very large compared to that of the spectral partitioning method, where $\max_{1 \leq k \leq 3} \cfunc H(\set X_k) = \cfunc H(\set X_2) = 7.22$ (with the identification $\set{X}_3 = \set \Omega$).
Thus our approach provides a {superior} balanced cut {over} the Leiden algorithm. 

\begin{figure}[htbp]
    \centering
    \includegraphics[width=\textwidth]{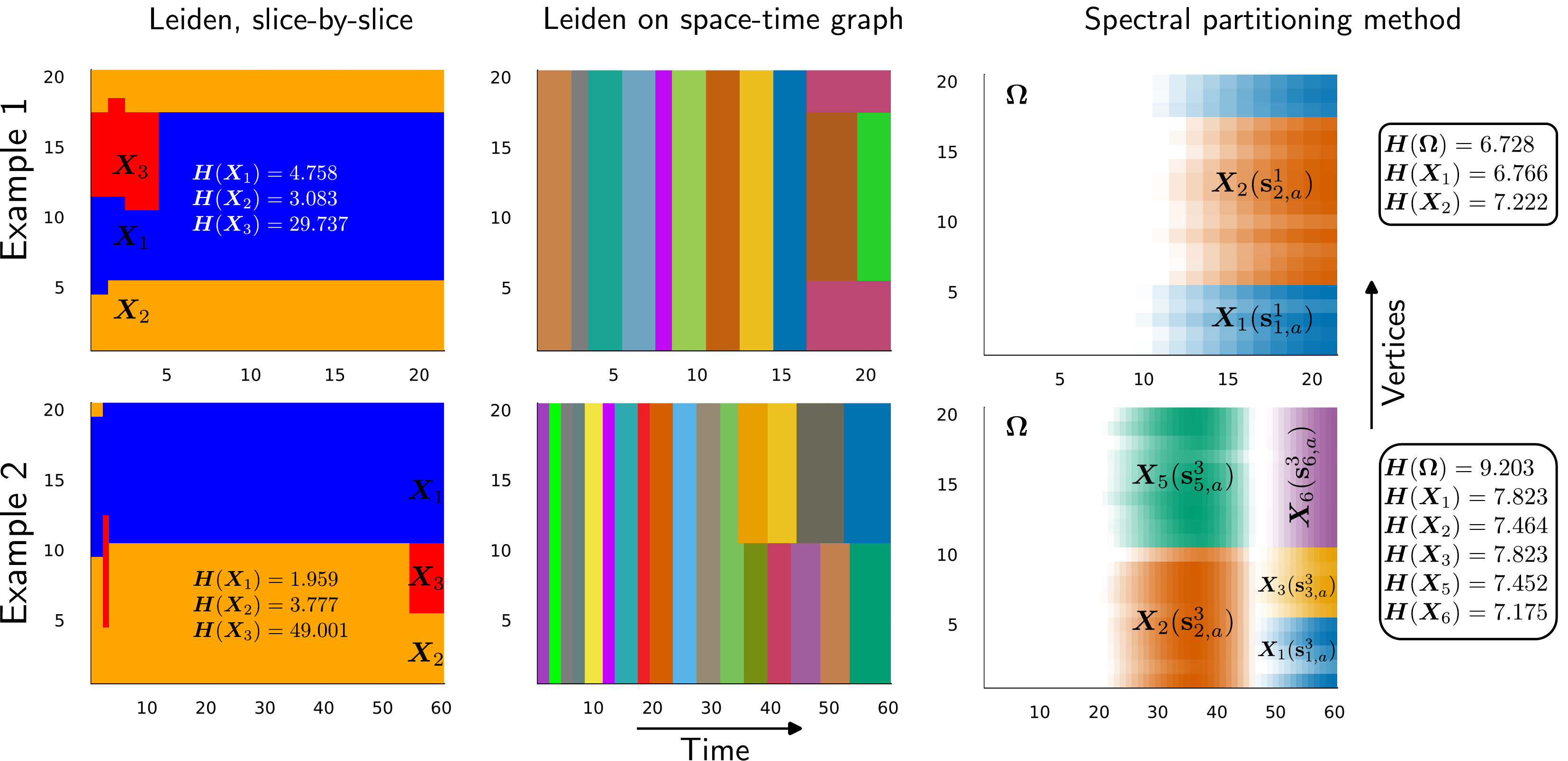}
    \caption{Comparing spectral partitioning with modularity maximisation using the Leiden algorithm (\cite{Traag2019}) for the two example networks. Leiden is performed for two scenarios of the spacetime graph: slice-by-slice (left column) and on the whole spacetime graph (middle column). Cheeger {ratios} are computed using \eqref{eq:Cheeger}, and shown for the slice-by-slice case only. Also shown (right column) are the individual clusters obtained from the spectral partitioning Algorithm~\ref{alg:spec_part} presented in Fig.~\ref{fig:ex1} and Fig.~\ref{fig:ex2}. In the first example (upper right), SEBA vectors $\vect S^{1}_{j,a}$ for $j=1,2$ and the non-empty unclustered set $\Omega$ are used to create a 3-partition of the network. In the second example (lower right), SEBA vectors $\vect S^{3}_{j,a}$ for $j=1,2,3,5,6$ and the non-empty unclustered set $\set \Omega$ are used to create a 6-partition of the network. }
    \label{fig:leiden-seba}
\end{figure}

Using Leiden on the spacetime graph (Fig.~\ref{fig:leiden-seba}, upper middle) yields a spatial separation of clusters only at the very late times $t=17$ to $t=20$, where the appropriate two clusters $\{1, \ldots, 5, 18, \ldots, 20 \}$ and $\{6, \ldots, 17\}$ are recovered. At times prior to $t=17$, spurious partitioning of many time slices into its own cluster renders these results unusable. 
The respective Cheeger ratio values $\cfunc H(\set X_k)$ of such clusters are $2a^2\approx 133.15$, which is significantly larger than the values for clusters obtained using our spacetime partitioning method.

\subsection{Example 2: {Disappearance of the unclustered set followed by another cluster disappearance}}\label{sec:ex2}

We extend the previous example by allowing one of the appearing clusters to {disappear into two new clusters}. 
We expect to resolve the appearing clusters using the eigenvectors of the corresponding supra-Laplacian. The results from the following example are presented in Fig.~\ref{fig:ex2}.

 \begin{figure}[htbp]
    \centering
    \includegraphics[width=\textwidth]{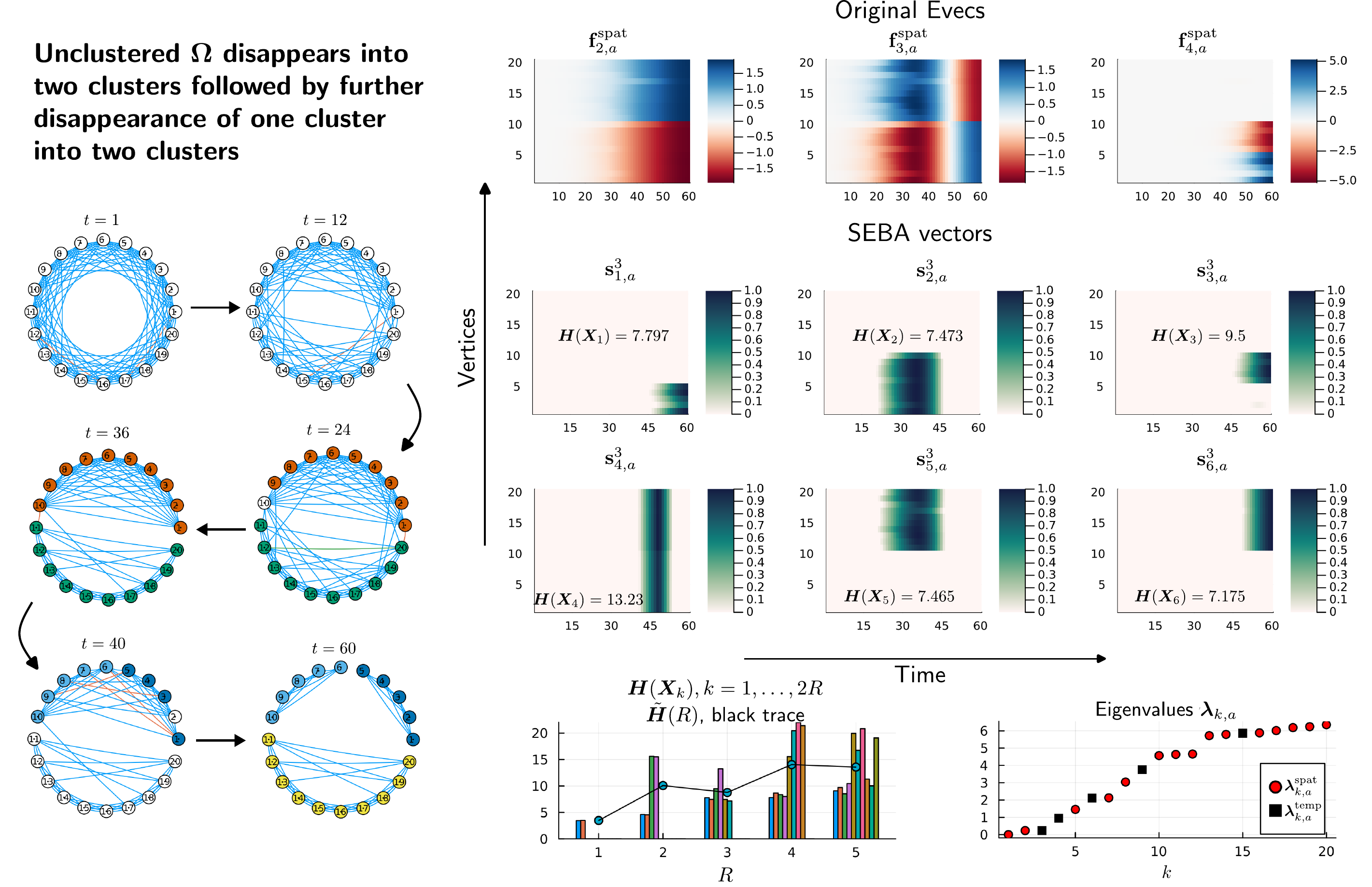}
    \caption{ Spectral partitioning of spacetime graphs with the supra-Laplacian $\slapmat$ to discover two successive cluster disappearance events. 
    (Left) Multiple time slices of the spacetime graph. The vertices are coloured according to assignments from individual SEBA vectors computed. Red edges indicate edges at time $t$ that are removed at time $t+1$. (Right) Cluster detection using spectral partitioning. (Upper right) First three spatial eigenvectors $\smash{\unnorm {\vect F}^{\rm spat}_{(2,3,4),a}}$ of the supra-Laplacian $\slapmat$. (Center right) Six SEBA vectors $\unnorm {\vect S}^{3}_{i,a}$ for $i=1,\ldots ,6$ computed using the three spatial eigenvectors above  with Cheeger {ratios} computed using \eqref{eq:Cheeger}.  Also shown (Lower right) are plots of the Cheeger {ratios} $\cfunc H(\set{X}_k)$ for $k=1, \dots, 6$, where $\set{X}_k$ is the support of SEBA vector $\unnorm {\vect S}_{k,a}^{3}$ and the eigenvalues $\unnorm \eval_{k,a}$, sorted by spatial (red circles) and temporal (black squares).}
    \label{fig:ex2}
\end{figure}

\subsubsection{Graph construction}
The graph is constructed with $N=20$ and $T=60$.
The graphs are represented by adjacency matrices $\{\smat W^{\rm spat}_t\}$, $t=1,\dots,60$, such that $\smat [W^{\rm spat}_t]_{ij} \in \{0,1\}$, $\forall i,j,t$.
At some time around $t=36$, the unclustered set $\set \Omega = \{(t,x) \,:\,  1\leq t\leq 36, x = 1, \dots, 20\}$ disappears into $\{\set Y_1, \set Y_2\} $ where $\set Y_1 = \{ (t,x) \,:\, 37\le t \leq 50, x= 1, \dots, 10 \}$ and  $\set Y_2 = \{ (t,x) \,:\, 37\le t\le 50, x= 11, \dots, 20 \}$. Next, around $t=50$, $\set Y_1$ disappears into $\{\set Y_3, \set Y_4\}$ where $\set Y_3 = \{(t,x) \,:\, t>50 , x = 1, \dots, 5\}$ and $\set Y_4 = \{(t,x) \,:\, t>50 , x = 6, \dots, 10\}$  
A few snapshots of the network are shown in Fig.~\ref{fig:ex2} (left panel).

\subsubsection{Cluster detection}  
We use Algorithm~\ref{alg:spec_part} to compute the individual clusters. 
Using Step 3 of Algorithm~\ref{alg:spec_part} we obtain the critical value $a=9.27364$ which is fixed for the rest of the computations.
The parameter $R$ tells one how many of the first $R$ nontrivial spatial eigenvectors $\vect F^{\rm spat}_{k,a}$, $k=2,\ldots,R+1$ should be used for spectral partitioning.  
At $R=1$ there is a significant spectral gap between $\unnorm \eval^{\rm spat}_{2,a}$ and $\unnorm \eval^{\rm spat}_{3,a}$. 
There is also a large gap  at $R=3$, between $\unnorm \eval^{\rm spat}_{4,a}$ and $\unnorm \eval^{\rm spat}_{5,a}$, see Fig.~\ref{fig:ex2} (lower right). 
The mean Cheeger ratios $\tilde {\cfunc H}(R)$ plotted in Fig.~\ref{fig:ex2} (lower right, black trace) shows a global minimum at $R=1$ and a local minimum at $R=3$. Therefore, reasonable choices of $R$ would be $R=1$ and $R=3$. 
The former choice would 
only reveal the disappearance of the unclustered set, which
we have already discussed in Example 1.
We therefore set $R=3$, which provides further information, and will identify additional \mkadd{cluster transitions}.
Algorithm~\ref{alg:spec_part} returns the first three nontrivial spatial eigenvectors $\unnorm {\vect F}^{\rm spat}_{2,a}$ $\unnorm {\vect F}^{\rm spat}_{3,a}$, and $\unnorm {\vect F}^{\rm spat}_{4,a}$ of $\ulapmat$, the corresponding SEBA vectors $\smash{ \unnorm {\vect S}^{3}_{j,a}}, j=1,\ldots,6$, and the associated clusters $\set{X}_j := {\rm supp}(\unnorm {\vect S}^{3}_{j,a})$, $j=1,\ldots,6$ and unclustered set~$\set{\Omega}$. 

Similar to the previous example we also perform computations using the Leiden algorithm. We compare the results of the spectral partitioning method with those obtained using Leiden in Fig.~\ref{fig:leiden-seba} (lower). {As in the previous example, we illustrate the robustness of Algorithm \ref{alg:spec_part} with respect to the choice of the parameter $a$ in Appendix~\ref{sec:app:varyinga}, by summarising cluster assignments obtained using a very wide range of $a$ values.}

\subsubsection{Discussion of clustering}

Fig.~\ref{fig:ex2} (right panel) displays the vector $\unnorm {\vect F}^{\rm spat}_{2,a}$, which captures the slow disappearance of the unclustered set $\Omega$ into two clusters at approximately $t=36$.
These two clusters are represented by vertices $\{1, \dots, 10\}$ and $\{11,\ldots,20\}$ respectively. 
The vector $\unnorm {\vect{F}}^{\rm spat}_{3,a}$  captures a secondary cluster transition at approximately $t=48$, indicated by the sharp red/blue switch. 
The vector $\unnorm {\vect F}^{\rm spat}_{4,a}$ resolves a further breakup of the cluster $\{1,\ldots,10\}$ into the clusters $\{1,\ldots , 5\}$ and $\{6, \ldots , 10\}$ at around $t=48$.
The corresponding SEBA vectors $\unnorm {\vect S}^{3}_{j,a}$ successfully capture the $5$ individual clusters except for $\unnorm  {\vect S}^{3}_{4,a}$, which is a spacetime cluster that contains all vertices for the time that it exists. The SEBA vector $\unnorm  {\vect S}^{3}_{4,a}$ is identified as spurious by step 7 of Algorithm \ref{alg:spec_part}
and also corresponds to a high Cheeger {ratio} $\cfunc H(\set X_4)=13.23$ compared to the other clusters.
We therefore discard it to obtain a $K$-partition with $K=6$ (5 SEBA vectors and the unclustered set $\set \Omega$). 
By post-processing the cluster assignment with our optimal matching algorithm described in Appendix \ref{sec:matching}, one could relabel $\set{X}_6$ as $\set{X}_5$ and $\set{X}_3$ as $\set{X}_2$, to reduce $K$ from 6 to 3 and further reduce the Cheeger {ratios}. We recover $\set Y_1, \set Y_3$ and $\set Y_4$ from the sets $\set X_2$, $\set X_1$ and $\set X_3$ respectively. The set $\set Y_2$ is recovered from the union $\set X_5 \cup \set X_6$.

In Fig.~\ref{fig:leiden-seba} (lower panel) we show the slice-by-slice and full spacetime graph computations using the Leiden algorithm {followed by our optimal matching algorithm in Appendix \ref{sec:matching}}. The slice-by-slice clustering is not completely satisfactory in how it captures the emergence of two clusters followed by the splitting. The value $\max_{1\leq k \leq 3} \cfunc H(\set X_k) = \cfunc H(\set X_3) = 49.001$ is large compared to the clusters detected using the spectral partitioning algorithm ($\max_{1\leq k \leq 6} \cfunc H(\set X_k) = \cfunc H(\set \Omega) = 9.203$, with the identification $\set{X}_4 = \set \Omega$). Note that from Proposition~\ref{prop:h}, given a good candidate partition $\partn X = \{\set X_1, \ldots, \set X_K\}$ we should get higher values of $\max_k \cfunc H(\set X_k)$ for a 6-partition compared to a 3-partition. {In fact}, we see that the spectral partitioning algorithm outperforms Leiden, despite the former providing a $6$-partition and the latter providing only a $3$-partition.

\section{Networks with vertices that appear and disappear in time}\label{sec:sen}
In this section we use spectral partitioning on networks where \emph{vertices} appear and disappear as time passes. This requires 
using a non-multiplex form of a spacetime graph. 
We begin with 
a demonstration of our spatiotemporal spectral clustering using a toy model that contains vertices that appear and disappear. 
We then consider a real-world network of voting similarities between senators of the US Senate between the years 1987 and {2025}. {A subset of} this network was considered in \cite{Waugh2009, Mucha2010} to analyse polarisation in US politics. 
Our spatiotemporal spectral partitioning technique 
reveals the \emph{onset} and \emph{magnitude} of polarisation as time passes, which to our knowledge has not been shown before.

\subsection{A non-multiplex framework}
\label{sec:nonmultiplex}

To work with vertices that appear and disappear in time, we propose a non-multiplex framework. Consider a network with $T$ time slices with each slice having a maximum of $N$ vertices that may appear or disappear in time. The network configuration is shown in Fig.~\ref{fig:toy_sen} (left).  Vertices are split into two types: present (red/blue) and absent (white). The entire collection of present spacetime vertices is denoted by $\set{V}$ with cardinality $N' >0$.
We create a superset of spatial vertices $$\hat {\sset V}:=\bigcup_{t=1}^T\{x: (t,x)\in\set{V}\},$$
and identify the vertices in $\hat {\sset V}$ with integers $1,\ldots,N$.
For each $t$, denote the vertices on the $t$-th time fibre by $\sset{V}_t:=\{x: (t,x)\in \set{V}\}\subset \{1,\ldots,N\}$, each with cardinality $N_t \leq N$. Thus, $\sum_{t=1}^T N_t = N'$.
For each $t=1,\dots,T$, the spatial network on $\sset{V}_t$ is given by a spatial adjacency  $ \smat W^{\rm spat}_t \in \mathbb R^{N_t \times N_t}$. Note that here $N' < TN$ while in the multiplex case $N'=TN$.
The non-multiplex 
spacetime weights $\sadj$ on $\set V$ are then defined as
\begin{equation}
    \label{eq:adj_nonmultiplex}
    \begin{aligned}
        \sadj_{(t,x),(t,y)} = \smat [W^{\rm spat}_t]_{xy} &\textnormal{ if } x,y \in \sset V_t \, , \\
        \sadj_{(t,x),(t+1,x)} = a^2 &\textnormal{ if } (t,x), (t+1,x) \in \set V.
    \end{aligned}
\end{equation}
The resulting spacetime graph is no longer multiplex as not all vertices at a time slice are connected to their neighbours in the adjacent time slices. This is illustrated in Fig.~\ref{fig:toy_sen} (left). 

The matrix forms are defined as follows. First we define the lexicographic ordering $i$ such that
\begin{equation}
\begin{aligned}
    \func i &:(t,x) \mapsto \sum_{s=1}^{t-1}N_s+x, \\
    \func i &(\set V) = \left\{1, \dots, N_1, N_1+1, \dots, \textstyle \sum_{s=1}^{t}N_s-1, \sum_{s=1}^t N_s\right\}.
\end{aligned}
    \label{eq:lexicographic_nonmultiplex}
\end{equation}
The matrix $\sadjmat$ is assembled using the above ordering in the rows and columns, similar to~\eqref{eq:adj_mat_assembly}:
\begin{equation}
    \sadjmat = \begin{bmatrix}
        \sadj_{(1,1),(1,1)} & \dots & \sadj_{(1,1),(T,N_t)} \\
        \vdots & & \vdots \\
        \sadj_{(T,N_t),(1,1)} & \dots & \sadj_{(T,N_t),(T,N_t)}
    \end{bmatrix}.
    \label{eq:adj_mat_assembly_nonmultiplex}
\end{equation}
The supra-Laplacian matrix $\slapmat$ is constructed using~\eqref{eq:mat_supL}. We do not split the matrices $\slapmat$ and $\sadjmat$ into its temporal and spatial components, as the simple occupation structure from Fig.~\ref{fig:intro} is lost in the non-multiplex setting.
 

\paragraph{Choosing $a$ for the non-multiplex framework.} 
As we do not have a clear distinction between spatial and temporal eigenvectors of the corresponding supra-Laplacian $\slapmat$, we cannot fix $a$ as in step 3 of Algorithm~\ref{alg:spec_part}. We use a new heuristic to determine $a$. In Algorithm~\ref{alg:spec_part} we balance spatial and temporal diffusion by finding $a$ such that $\unnorm \eval_{2,a}^{\rm spat} = \unnorm \eval_{2,a}^{\rm temp}$. We observe numerically that in the multiplex case, this heuristic yields an $a$ that is identical to the $a$ obtained by
balancing the spatial and temporal contributions of the Rayleigh quotient corresponding to the spatial eigenvalue. In other words we conjecture that 
\begin{equation}
    \label{eq:RayleighBalance}
    \eval_{2,a}^{\rm spat} = \eval_{2,a}^{\rm temp} \iff \langle \lspatmat \vect F^{\rm spat}_{2,a}, \vect F^{\rm spat}_{2,a} \rangle = a^2\langle \ltempmat \vect F^{\rm spat}_{2,a}, \vect F^{\rm spat}_{2,a} \rangle.
\end{equation}

In the multiplex case, there is no notion of $\smash{ \eval_{2,a}^{\rm temp} }$, and we cannot use the left-hand equality in (\ref{eq:RayleighBalance}) to select $a$.
Instead, 
we compute the value of $a$ for which $\langle \lspatmat \vect F_{2,a}, \vect F_{2,a} \rangle = a^2 \langle \ltempmat \vect F_{2,a}, \vect F_{2,a} \rangle$, where $\vect F_{2,a}$ is the second eigenvector of $\slapmat$.
\IncMargin{1em}
\begin{algorithm}[h!]
\caption{Algorithm to find spatial eigenvectors for the non-multiplex graph Laplacians.}\label{alg:spec_part_findspat}
\SetKwData{Left}{left}\SetKwData{This}{this}\SetKwData{Up}{up}
\SetKwFunction{Union}{Union}\SetKwFunction{FindCompress}{FindCompress}\SetKwFunction{SEBA}{SEBA}
\SetKwInOut{Input}{Input}\SetKwInOut{Output}{Output}
\Input{Diffusion constant $a$, non-multiplex supra-Laplacian $\slapmat \in \mathbb R^{N' \times N'}$. } 
\Output{Eigenvectors $\{\vect F_{k,a}\}_{k=k_1, k_2,\ldots, k_R}$ representing ``spatial'' eigenvectors of the non-multiplex network.}
\BlankLine
Compute eigenvectors $\vect F_{k,a} \in \mathbb R^{N'}$, $k=1,\ldots, N'$. Set $\smash{ \tilde {\svect f}_j^{\rm temp} }$ to be the $j$-th eigenvector of the Laplacian $ \smat L'$. Set $\smash{ \tilde {\vect F}_j^{\rm temp} = \tilde {\svect f}_j^{\rm temp} \otimes \mathbbm 1_N \in \mathbb R^{TN} }$, $j=1,\ldots, T$ using \eqref{eq:mat_Wtemp}  with $N$ vertices and $T$ time slices\;
For each $k$ compute the inner product $\max_{j=1,\ldots,T} |\langle \mathcal{I}(\vect{F}_{k,a}), \tilde{\vect F}_j^{\rm temp} \rangle |$, where $\smash{ \mathcal{I}(\cdot): \mathbb R^{N'} \to \mathbb R^{TN} }$ is the natural inclusion of present vertices\;
Declare $R$ vectors (namely $\{\vect F_{k,a}\}_{k=k_1, k_2 \ldots, k_R}$) with \textit{low} inner product values in the previous step as ``spatial''. 
\end{algorithm}\DecMargin{1em}

With Algorithm \ref{alg:spec_part_findspat}, we construct a version of Algorithm \ref{alg:spec_part} that handles the non-multiplex situation.
\IncMargin{1em}
\begin{algorithm}[h!]
\caption{Spectral partitioning in the non-multiplex case.}\label{alg:spec_part_nonmultiplex}
\SetKwData{Left}{left}\SetKwData{This}{this}\SetKwData{Up}{up}
\SetKwFunction{Union}{Union}\SetKwFunction{FindCompress}{FindCompress}\SetKwFunction{SEBA}{SEBA}
\SetKwInOut{Input}{Input}\SetKwInOut{Output}{Output}
\Input{{Maximum number of vertices in a layer} $N$, number of time slices $T$, graph $\set G$ represented by collection of spatial adjacencies 
$\{ \smat W^{\rm spat}_t\}_{t=1,\dots, T}$, $\smat W^{\rm spat}_t \in \mathbb{R}^{N_t \times N_t}$, $\sum_t N_t =: N' < TN$.} 
\Output{$K$-packing $\{\set{X}_{n_1},\ldots,\set{X}_{n_K}\}$ }
\BlankLine
Construct $\sadjmat$ as described in Sec.~\ref{sec:nonmultiplex} and consequently the corresponding $\slapmat$ using~\eqref{eq:mat_supL}\;
Identify spatial eigenvalues using Algorithm \ref{alg:spec_part_findspat} and compute $a =  a_c$ such that temporal and spatial diffusion is balanced using the heuristic explained in Sec.~\ref{sec:nonmultiplex}\;
 Compute $(\eval_{k,a}, \vect F_{k,a})$, the $k$-th  eigenpair of $\slapmat$ for $k=2,\ldots,R+1$. Select a suitable value for $R$ as described in step 4 of Algorithm \ref{alg:spec_part}\; 
 
Isolate spacetime packing elements: Apply the SEBA algorithm (\cite{Froyland2019}) to the collection $\{\vect F_{k_1,a}, \dots,  \vect F_{k_R,a}\}$. The output SEBA vectors are denoted $\{\vect S^{R}_{j,a}\}_{j=1,\dots,R}$\; 
Identify spurious SEBA vectors:  If the $j^{th}$ SEBA vector is approximately constant on each index block corresponding to a single time fibre, i.e.\ if $[\vect s^{R}_{j,a}]_z\approx C_{j,t}$ for $z=\func i(t,x)$, $1\le x\le N$, $1\le t\le T$, where $\func i$ is defined in \eqref{eq:lexicographic_nonmultiplex}\;
Define packing elements:
Let $\{{n_1}
,\ldots,{n_{K}}\}$, $K\le R$ denote indices of the non-spurious SEBA vectors. Define $\set{X}_{n_j}=\{(t,x) \,:\, [\vect S^{R}_{n_j,a}]_z>0, z=\func i(t,x)\}$ for each $j=1,\ldots,K$.  If a single spacetime vertex $(t,x)$ belongs to two or more $\set{X}_{n_j}$, assign it only to the $\set{X}_{n_j}$ with the largest $[\vect s^{R}_{n_j,a}]_z$ value. One may augment the packing with the unclustered set $\set \Omega = \{(t,x) \,:\, \smash{[\vect s^{R}_{n_j,a}]_z} = 0, z = \func i(t,x)\}$ to obtain a $K+1$ partition.
\end{algorithm}\DecMargin{1em}

\subsection{A toy model with appearing/disappearing vertices}

We consider a non-multiplex network with spacetime nodes $\set{V}$ shown as blue or red in Fig.~\ref{fig:toy_sen} (right).
\begin{figure}[t]
    \centering
    \includegraphics[width=\textwidth]{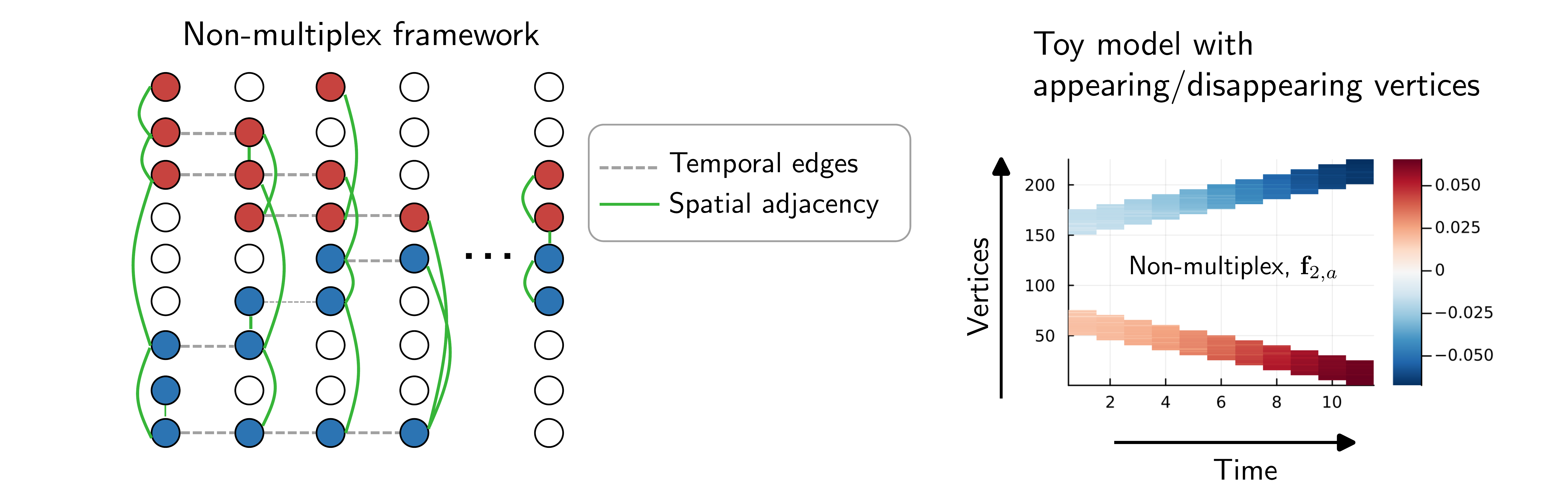}
    \caption{Spectral partitioning on networks with vertices that appear/disappear over time. (Left) Non-multiplex framework, where vertices are connected to their time-analogues only if the vertex is active (red/blue) in the next time slice. (Right) Spectral partitioning on a toy network. A network of 225 vertices and 11 time slices is constructed such that at each time, 5 vertices are exchanged between each {layer} cluster and the inactive vertices. The inter-{layer-}cluster edge weights are set to $1/t$ for $t=1,\dots, T$. The  eigenvector $\unnorm {\vect F}_{2,a}$ corresponding to the unnormalised Laplacian $\ulap$ is shown.}   \label{fig:toy_sen}
\end{figure}
The cardinality of $\hat {\sset V}$ is 225
and there are $T=11$ time slices. For each time slice {$t=1,\ldots,11$}, we assign edge weights that clearly define 
two time-parameterised families of $16$-regular clusters: $\sset C^{(1)}_t:=\{x \,:\, x=51-5(t-1),\ldots, 75-5(t-1)\}\subset V_t$ and $\sset C^{(2)}_t:=\{x \,:\, x = 151-5(t-1),\dots, 175-5(t-1)\}\subset V_t$ for $t=1,\dots, 11$. 
As $t$ increments to $t+1$, each cluster retains $20$ vertices from the old clusters, loses $5$ vertices that become absent, and gains $5$ newly appearing vertices. 
The intercluster edges are randomly assigned such that a maximum of $20$\%  of the total edges assigned to any present vertex is an intercluster edge. 
The weights $\sset [W^{\rm spat}_t]_{x,y}$ for a pair of present vertices $x$ and $y$ at slice $t$ are 
\begin{equation}
    \sset [W^{\rm spat}_t]_{x,y} = \begin{cases}
    1,& \mbox{if $x,y\in \sset C^{(1)}_t$ or $x,y \in \sset C^{(2)}_t$,}\\
    1/t,& \mbox{otherwise.}\\
    \end{cases}
\end{equation}
 We now implement the non-multiplex framework to formulate the corresponding new adjacency matrix $\sadjmat$.

We use Algorithm~\ref{alg:spec_part_nonmultiplex} to perform spectral partitioning.
For our spectral clustering we first fix $a=5.5844$ using step 2 in Algorithm~\ref{alg:spec_part_nonmultiplex} (implementing the heuristic introduced at the end of Sec.~\ref{sec:nonmultiplex}). 
Note that the value of $\eval_{2,a} = \eval_{2,a}^{\rm spat} = 0.29955$. 
{As part of this determination of the value of $a$, in} step 2 {of}  Algorithm~\ref{alg:spec_part_findspat}, we find inner products $\smash{ \max_j |\langle\mathcal I(\vect F_{2,a}), \tilde{\vect F}^{\rm temp}_j \rangle | = 0.00516 }$ and $\smash{\max_j|\langle\mathcal I(\vect F_{3,a}), \tilde{\vect F}^{\rm temp}_j\rangle | = 0.47044}$ and conclude that $\vect F_{2,a}$  corresponds to the first nontrivial spatial eigenvector.
Note that the inner product $\smash{\langle \mathcal I (\vect{F}_{k,a}), \tilde{\vect F}_j^{\rm temp} \rangle}$ is equivalent to computing correlation coefficients $ \rho (\mathcal I(\vect F_{k,a}), \smash{\tilde{\vect F}_j^{\rm temp}}) \in [-1,1] $ for mean-zero and $\ell_2$-normalised vectors. 

The first five nontrivial spatial eigenvalues are $\eval_{2,a}= 0.3029, \eval_{4,a} =  1.3638, \eval_{6,a} = 4.1389, \eval_{8,a}=7.7838$ and $\eval_{10,a} = 10.0706$. As there is no {unambiguous} spectral gap, we look at mean Cheeger {values} $\tilde {\cfunc H}(R)$. For $R=1, 2, 3$ we have the corresponding values $\tilde {\cfunc H}(R) = 0.5969, 2.2080,  4.8186$. Consequently, we choose $R=1$ -- as larger $R$ will create significantly worse Cheeger {ratios} -- and use $\vect F_{2,a}$ to perform spectral clustering. 
In Fig.~\ref{fig:toy_sen} (right), we plot the second eigenvector $\vect F_{2,a}$.  The two persistent clusters are recovered. 
Moreover, the magnitude of $\vect F_{2,a}$ increases with time (represented by darker colours as we move to the right) as the clustering becomes stronger due to the intercluster spatial edges becoming weaker as time increases. This is a vital observation for the following section when we discuss the network of US senators.

\subsection{A network of US senators and states to analyse polarisation}

In this section we use {both} the {multiplex} and non-multiplex approaches on a network of voting similarities in the US Senate between the years 1987 and {2025}. The network represents senators as vertices and has edges weighted by a factor in the range $[0,1]$, which describes the voting similarity between two senators (fraction of identical votes). A higher value indicates that the pair of senators had a {larger} share of common votes on the various bills that they legislated on in the Senate. The network is constructed using details on senators and their votes from \texttt{www.voteview.com}. More details on the construction of the network can be found in~\cite{Waugh2009}.

\subsubsection{A network of senators}

Consider a sequence of weight matrices $\{ \smat W^{\rm spat}_t\}_{t=1,\dots,T}$, $ \smat W^{\rm spat}_t \in \mathbb R^{N_t \times N_t}$ with $N=334$  vertices each  and $T=19$ time slices. 
Each vertex in $V'$ represents a  unique senator that served between the years 1987 and 2025. This corresponds to the congresses numbered in the interval 100--118. The weights $\smat W^{\rm spat}_t$ on the edges between any two vertices describe the voting similarity between the two corresponding senators in congress $99+t$. The weights $W^{\rm spat}_t$ are defined by
\begin{equation}
   [\smat W^{\rm spat}_t]_{x,y} = \frac{1}{|\Omega_{xyt}|}\sum_{i \in \Omega_{xyt}}\delta_{b_{ixt},b_{iyt}} , 
\end{equation}
where $\Omega_{xyt}$ is the set of bills voted on by both senators $x$ and $y$ at congress $99+t$ and $\delta$ is the Kronecker delta function such that $\delta_{ij} = 1 \iff i=j$, $0$ otherwise; thus, the weights $\smash{ [\smat W^{\rm spat}_t]_{x,y} \in [0,1] }$. 
The expression $b_{ixt}$ represents the vote of a senator $x$ for bill $i$ in congress~$99+t$; cf.~\cite{Waugh2009}. The vote is assigned a value of $1$ if voted ``yes'' and assigned $-1$ if voted ``no''. The vote is set to $0$ in case of abstention.  Next, we construct $\sadjmat$ using the non-multiplex framework as discussed previously and formulate the unnormalised supra-Laplacian $\slapmat$. We then use the spectral partitioning Algorithm~\ref{alg:spec_part_nonmultiplex} to detect communities in this network. 

\paragraph{Clustering using $\unnorm {\vect F}_{4,a}$  from the non-multiplex senator network.} 
{Using Algorithm~\ref{alg:spec_part_nonmultiplex}, step 2 we find the value $a=19.90949$ as the critical $a$ for the senator network. This involves the computation of the inner products $\max_j |\langle\mathcal I(\vect F_{2,a}), \smash{\tilde{\vect F}^{\rm temp}_j}\rangle | = 0.54568$, $\max_j |\langle\mathcal I(\vect F_{3,a}), \tilde{\vect F}^{\rm temp}_j\rangle | = 0.50617$ and $\max_j |\langle\mathcal I(\vect F_{4,a}), \tilde{\vect F}^{\rm temp}_j\rangle | = -0.096624$, which identify  $\vect F_{(2,3),a}$ and $\vect F_{4,a}$ as temporal and spatial eigenvectors respectively.}
 {Having fixed the value of $a$}, next, in step 3 of Algorithm \ref{alg:spec_part_nonmultiplex}, we compute several leading eigenvectors $\vect F_{k,a}$ of $\slapmat$, displayed in Fig.~\ref{fig:sen} (second column, upper panel). The first five nontrivial spatial eigenvalues are $\eval_{4,a} = 26.4632, \eval_{6,a} = 35.9543, \eval_{9,a} = 44.3058, \eval_{10,a} = 45.0214$ and  $\eval_{11,a} = 45.1624$.
  As there is significant spectral gap between  $\eval_{4,a}$ and $\eval_{6,a}$, we  choose $R=1$, and use $\vect F_{4,a}$ to perform spectral clustering.
 The eigenvectors shown in Fig.~\ref{fig:sen} (first column, upper panel) have a particular senator ordering for convenience. The senators are sorted by their respective temporal means computed from $\vect F_{4,a}$ (i.e.\ sorted according to average values along horizontal rows in the image). A clear 2-partition is visible, which coincides with the respective party membership of the associated senators, see Fig.~\ref{fig:sen} (first column, upper panel, {plot of $\vect F_{4,a}$}). 

There are two main conclusions one can draw from the plot of the eigenvector $\vect F_{4,a}$ in Fig.~\ref{fig:sen} (upper panel). First, the partitioning of a network based on voting similarities coincides with the partition arising from party memberships--which we know a priori--shown in Fig.~\ref{fig:sen} (upper left), which implies that senators are most likely to vote along party lines. The party affiliation plot in Fig.~\ref{fig:sen} is plotted according to the senator ordering in the eigenvector plots. Second, the norm of $\vect F_{4,a}$ restricted to the 19 individual congresses increases as time passes. This is seen by a darkening of colours from left to right in Fig.~\ref{fig:sen} (right), and corresponds to a higher confidence of partitioning the individual time slices. This can be explained by the inter-cluster edge weights, which decrease as time passes, similar to the toy model from the previous section. In this way we demonstrate growing polarisation between the years 1987 and 2025 in US politics.

\subsubsection{A network of states}
\label{sec:states}
\begin{figure}[htbp]
    \centering
    \includegraphics[width=\textwidth]{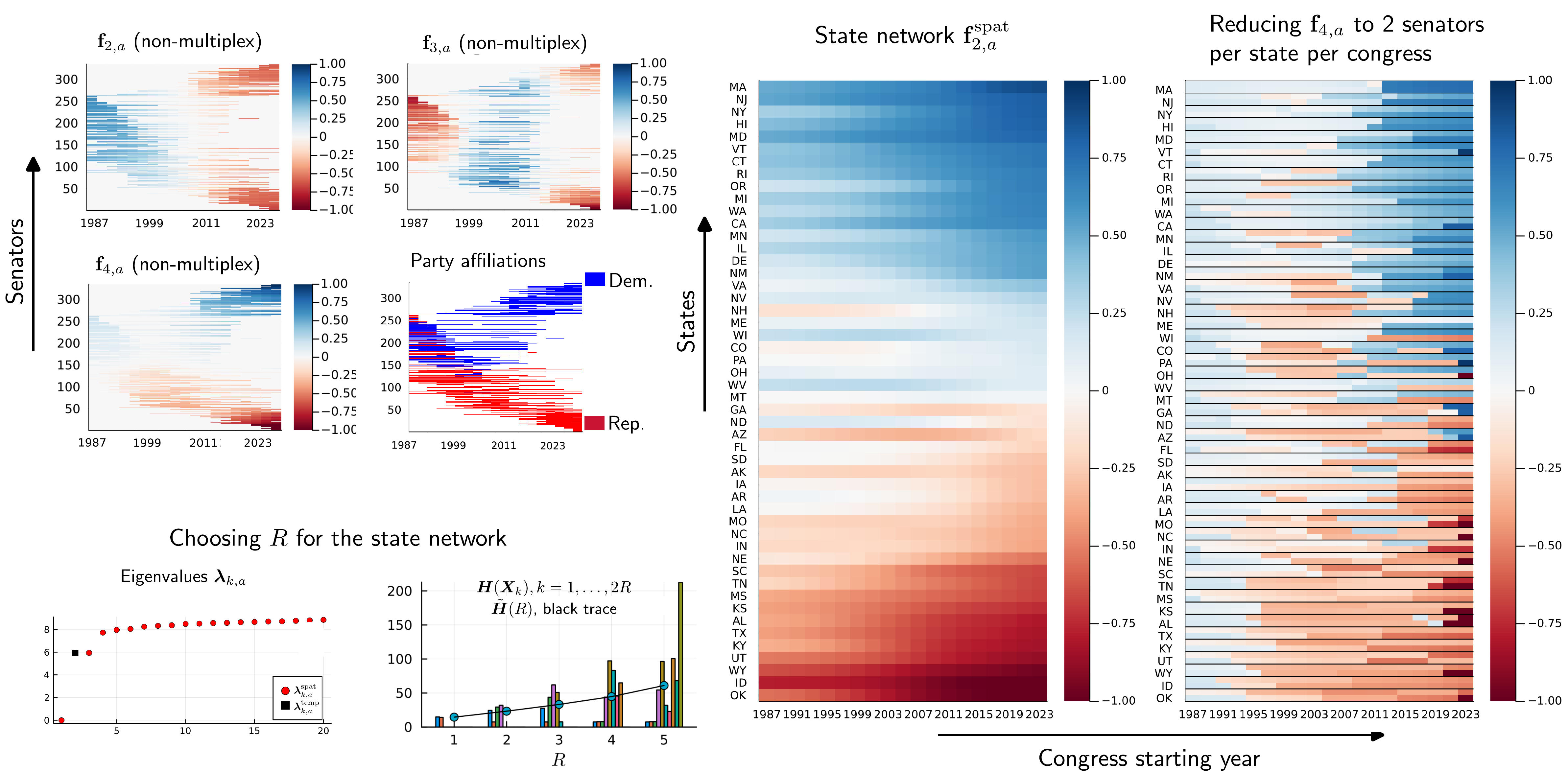}
    \caption{Polarisation in voting patterns of US senators over 19 congresses (numbered 100-118)  between the years 1987--2025. \textit{Upper left:} Spectral partitioning of the network of senators constructed using the non-multiplex approach. Eigenvectors $\vect F_{(2,3,4),a}$ corresponding to the unnormalised Laplacian $\slapmat$ are shown. The vertices in the senator network are ordered according to the temporal means of the spatial fibres $\vect f_{k,a}(\cdot,x)$. Also shown are party affiliations corresponding to senators over time: Democrat (blue) and Republican (red). \textit{Lower left:} Plots of eigenvalues $\eval_{k,a}$ (left) and Cheeger {ratios} $\cfunc H(\set X_k)$ and $\tilde {\cfunc H}(R)$ for various values of $R$ (right) for the state network. \textit{Right:} Spectral partitioning of the network of states discussed in Section \ref{sec:states}. 
    The first spatial eigenvector $\smash{ \vect F^{\rm spat}_{2,a} }$ corresponding to the unnormalised Laplacian $\slapmat$ is shown.  To compare with the state network, we display $\vect F_{4,a}$ from the non-multiplex senator network (upper left) now re-ordered by state affiliation. There are two senators for each state, which are demarcated between different states by black lines.}
    \label{fig:sen}
\end{figure}

We now construct a network with US states as space vertices, and 
spatial edges between two spacetime vertices at time $t$ weighted by the voting similarity of the two corresponding states over the single congress period $99+t$. 
From 1987--2025, the US had 50 states, so this network is of multiplex type.
The network of states is represented by a sequence of adjacency matrices $\{\smat W^{\rm spat}_t\}_{t=1,\dots, T}$, $\smat W^{\rm spat}_t \in \mathbb R^{N \times N}$ with $N=50$ vertices per time slice and $T=19$ time slices. 
We first define the ``aggregate vote'' $b^{\rm state}_{ixt}$  made by a state $x$ on bill $i$ in congress $99+t$, using the individual senator votes $b_{i \sigma t}$ as follows (bill $i$, senator $\sigma$, time~$t$),

\begin{equation}
    b^{\rm state}_{ixt} = \sum_{\sigma \in x} b_{i \sigma t}.
\end{equation}
The weights $[\smat W^{\rm spat}_t]_{x,y}$ are then  given by
\begin{equation}
   [\smat W^{\rm spat}_t]_{x,y} = \frac{1}{|\Omega^{\rm state}_{xyt}|}\sum_{i \in \Omega^{\rm state}_{xyt}}\delta_{b_{ixt}^{\rm state},b_{iyt}^{\rm state}},
\end{equation}
where $\Omega_{xyt}^{\rm state}$ is the set of votes jointly voted on by senators $\sigma$ belonging to states $x$ and $y$ in congress $99+t$, and $\delta$ is the Kronecker delta as before. Consequently, $\smash{ [\smat W^{\rm spat}_t]_{x,y} \in [0,1] }$. The sequence $\{\smat W^{\rm spat}_t\}_{t=1,\dots, T}$ is then used for spectral partitioning using Algorithm~\ref{alg:spec_part} to detect clusters.

As the network is of multiplex type, we can directly use Algorithm~\ref{alg:spec_part}, e.g.\ step 3 to compute $a=14.7532$ and construct $\slapmat$, and step 4 to determine $R$, as discussed in  Sec.~\ref{sec:alg}. 
Regarding the choice of $R$, in Fig.~\ref{fig:sen} (lower left) we see that the main nontrivial gap is between the 2nd and 3rd spatial eigenvalues, suggesting $R=1$.
 Fig.~\ref{fig:sen} (lower left) shows the plot of the average Cheeger {ratio}~$\tilde {\cfunc H}(R)$. The average Cheeger {ratio} {increases} consistently  with increasing~$R$ and there does not seem to be a possibility to form more than two clusters with low Cheeger value. For this reason, and the above spectral gap considerations, we fix~$R=1$. 

In Fig.~\ref{fig:sen} (right) we plot the first nontrivial spatial eigenvector $\smash{ \vect F^{\rm spat}_{2,a} }$ (the second spatial eigenvector) of $\slapmat$ computed from the state network. The vertices are labelled with their respective state abbreviations and sorted according to the temporal average of the displayed eigenvector (i.e.\ sorted according to average values along horizontal rows in the image). 

\subsubsection{Comparing state and senator networks}

To compare the results we have obtained for the senator and state networks, we restrict the number of senators in a single congress to exactly two by selecting the two senators who contributed the most votes in that congress. 
After this restriction, for a single congress we have 100 senators across the 50 states, and over the 19 congresses in total we have 1900 senator/congress pairs.
These 1900 senator/congress pairs correspond to subsampling 1900 values from $\vect F_{4,a}$ and these values are mapped in the obvious way into Fig.~\ref{fig:sen} (far right), which can now be directly compared with Fig.~\ref{fig:sen} (right).

In Fig.~\ref{fig:sen} (right) we observe that the blue/red partitioning obtained from the senator network and state network is broadly consistent. Both plots show a deepening of colours as time progresses, demonstrating a temporally growing political polarisation between the years 1987 and 2025. This is consistent with recent studies, cf.~\cite{Kleinfeld_2023}.   In \cite{Mucha2010}, this was studied without a
representation of the level of confidence of partitioning. 
Our spectral partitioning technique allows for unclustered vertices, provides information about the strength of the clusters, and can track evolution of both cluster elements and cluster strength over time, all of which are important to
examine growing polarisation.

\section{Conclusion}

In this work we made several important contributions towards a more rigorous theory for spectral clustering of time-varying complex networks and formulated a versatile algorithm for both multiplex and non-multiplex cases.
Our {key construction} is the supra-Laplacian $\slap$, {which} balances spatial and temporal diffusion {through} a scalar diffusion constant $a$.
{We constructed both unnormalised and normalised supra-Laplacians on graphs and stated corresponding spacetime Cheeger inequalities.}
{We fully explored the hyperdiffusivity limit $a\to\infty$ in both the unnormalised and normalised cases, proving convergence of of the spectrum and eigenvectors of $\slap$ and $\smash{\nlap}$ to the dynamic Laplacian $\dlap$ and normalised temporal Laplacians $\ltemp$, respectively.}

This theory informed the construction of novel spectral clustering algorithms, which automatically identified the eigenvectors that carry cluster-related information, and applied sparse eigenbasis approximation (SEBA) to isolate individual spacetime clusters.
{Our spectral clustering algorithms are designed for time-varying networks with (i) constant-in-time vertex sets {on the network's layers} (multiplex), and  (ii) vertex sets {that may vary across the network's layers} (non-multiplex).}
{We demonstrated that our novel spectral approach is} effective in detecting {and tracking several communities at once}, {as well as} their gradual appearance and disappearance, in {both} multiplex and non-multiplex {situations}. 
The algorithms are conveniently implemented in \url{https://github.com/mkalia94/TemporalNetworks.jl}.

\subsection*{Acknowledgements}
The research of GF was supported by an Einstein Visiting Fellowship funded by the Einstein Foundation Berlin, an ARC Discovery Project (DP210100357) and an ARC Laureate Fellowship (FL230100088).
GF is grateful for generous hospitality at the Department of Mathematics, University of Bayreuth, during research visits.
GF thanks Kathrin Padberg-Gehle for helpful conversations during a visit in 2018. MK is thankful to Nataša Djurdjevac Conrad for informative discussions on generation of temporal networks and spectral clustering.
PK acknowledges support by the Deutsche Forschungsgemeinschaft (DFG, German Research Foundation) -- 546032594.

\newpage

\section*{Appendix}
\begin{appendices}

\section{Proof of Proposition~\ref{prop:h}}
\label{sec:prop_h_proof}
\begin{proof}
First, we prove the result $\func h_K \leq \func h_{K+1}$ by showing that from any $(K+1)$-packing we obtain a $K$-packing with a maximal Cheeger ratio not larger than that of the $(K+1)$-packing we started with. The case for the normalised Cheeger constant $\norm {\func h}_K$ follows identically. For a given graph $\set G$, consider the optimal $(K+1)$-packing
$\partn X$ of $\set V$ given by $\partn X=\{ \set X_1, \ldots, \set X_K, \set X_{K+1} \}$. Thus the Cheeger constant $\func h_{K+1}$ is given by,
\begin{equation}
    \func h_{K+1}= \min_{\substack{\{\set Y_1, \ldots, \set Y_{K+1}\} \\ \mbox{ is a packing of } \set V}} \max_k \cfunc H(\set Y_k) = \max_{k=1,\ldots,K+1} \cfunc H(\set X_k).
\end{equation}
Without loss of generality, let $ \max_{k=1,\ldots,K+1} \cfunc H(\set X_k) = \cfunc H(\set X_{K+1})$. Then we have that 
\begin{equation}
    \max_{\substack{k=1,\ldots,K \\ \set X_k \in \hat{\partn X}}} \cfunc H(\set X_k) \le \cfunc H(\set X_{K+1}).
    \label{eq:prop2-3}
\end{equation}
Let us define a new $K$-partition $\hat{\partn X} = \{\set X_1,\ldots,\set X_{K-1}, \linebreak[3] \set X_{K} \cup \set X_{K+1} \}$. We claim that
\begin{equation}
    \begin{aligned}
        \cfunc H(\set X_K \cup \set X_{K+1}) \le \cfunc H(\set X_{K+1}).
    \end{aligned}\label{eq:prop2-1}
\end{equation}
Note that $\func \sigma\left(\set X_K \cup \set X_{K+1} \right ) \le \cfunc \sigma(\set X_K) + \cfunc \sigma(\set X_{K+1})$ and that $|\set X_K \cup \set X_{K+1}| =  |\set X_{K}| + |\set X_{K+1}|$, as $\set X_K$ and $\set X_{K+1}$ are disjoint.
Thus we have
\begin{equation}
    \cfunc H(\set X_K \cup \set X_{K+1}) \le \frac{\cfunc \sigma(\set X_K) + \cfunc \sigma(\set X_{K+1})}{|\set X_{K}| + |\set X_{K+1}|} \le \frac{\cfunc \sigma(\set X_{K+1})}{|\set X_{K+1}|} = \cfunc H(\set X_{K+1}),
    \label{eq:prop2-2}
\end{equation}
where the second inequality comes from the fact that if  $\frac{a}{b} \le \frac{c}{d}$ then {$\frac{a+c}{b+d} \le \frac{c}{d}$.} 
This proves~\eqref{eq:prop2-1}. 
Now,
\begin{equation}
    \cfunc h_K = \!\!\! \min_{\substack{\{\set Y_1, \ldots, \set Y_K\} \\ \mbox{ is a packing of } \mathcal V}} \max_{k=1, \dots, K} \cfunc H(\set Y_k) \le \max_{\substack{k=1,\ldots,K \\ \set X_k \in \hat{\partn X}}} \cfunc H(\set X_k) \overset{\eqref{eq:prop2-3}}{\le} \cfunc H(\set X_{K+1}) = 
    \cfunc h_{K+1}.
\end{equation}
To prove the result $\norm {\cfunc h}_K \leq \norm {\cfunc h}_{K+1}$, one simply needs to replace node counts $|\cdot|$ by the degree $\adep {\func d}(\cdot)$ and all preceding arguments hold.
\end{proof}

\section{Proof of Lemma \ref{lem:spattemp}}
\label{sec:lem:spattemp}
\begin{proof}

(Part 1) Consider the expression $\ulspat \unnorm {\func F}_{k,a}^{\rm temp}(t,x)$. Expanding using \eqref{eq:lspat_ltemp} we get {for all $1\le t\le T$ and $x\in \sset V$} that
\begin{align*}
    \ulspat\unnorm {\func F}_{k,a}^{\rm temp}(t,x)
    &=  \sum_{y}\adjspat_{(t,x),(t,y)}\left( \unnorm {\func F}_{k,a}^{\rm temp}(t,x) - \unnorm {\func F}_{k,a}^{\rm temp}(t,y)\right) \\
    &=  \sum_{y}\adjspat_{(t,x),(t,y)}\left( \unnorm {\sfunc f}_k^{\rm temp}(t) - \unnorm {\sfunc f}_k^{\rm temp}(t)\right) = 0.
    \label{eq:lemma7-1}
    \stepcounter{equation}\tag{\theequation}
\end{align*}
Similarly expanding $\ultemp \unnorm {\func F}^{\rm temp}_{k,a}(t,x)$ using \eqref{eq:lspat_ltemp} we get for all $1\le t\le T$ and $x\in V$ that
\begin{align*}
   \ultemp \unnorm {\func F}^{\rm temp}_{k,a}(t,x) &= \sum_{s}\adjtemp_{(t,x),(s,x)}\left( \unnorm {\func F}_{k,a}^{\rm temp}(t,x) - \unnorm {\func F}_{k,a}^{\rm temp}(s,x)\right) \\
   & \stackrel{\eqref{eq:Wtemp_op}}{=} \sum_s \sop{W}'_{t,s} (\unnorm {\func F}_{k,a}^{\rm temp}(t,x) - \unnorm {\func F}_{k,a}^{\rm temp}(s,x)) \\ 
   &= \sum_s \sop{W}'_{t,s} (\sfunc f_k^{\rm temp}(t) - \sfunc f_k^{\rm temp}(s)) \\
   &= \sop{L}' \sfunc f_k^{\rm temp}(t) \\
   &= \seval_k^{\rm temp}\sfunc f_k^{\rm temp}(t) \\
   &= \seval_k^{\rm temp}\unnorm {\func F}_{k,a}^{\rm temp}(t,x).
   \label{eq:lemma7-2}\stepcounter{equation}\tag{\theequation}
\end{align*}
From \eqref{eq:lemma7-1} and \eqref{eq:lemma7-2} we get
\begin{equation}
    \ulap \unnorm {\func F}_{k,a}^{\rm temp}(t,x) = \ulspat \unnorm {\func F}_{k,a}^{\rm temp}(t,x) + a^2 \ultemp \unnorm {\func F}_{k,a}^{\rm temp}(t,x) = \seval_k^{\rm temp}\unnorm {\func F}_{k,a}^{\rm temp}(t,x).
\end{equation}
(Part 2) Consider any nontemporal 
eigenpair of~$\ulap$. For its eigenvector $\unnorm {\func F}_{j,a}(t,x)$ we have due to self-adjointness of $\ulap$ that
\begin{equation}
\langle \unnorm {\func F}_{j,a},\unnorm {\func F}_{k,a}^{\rm temp} \rangle = 0, \quad \text{for all } k\ge 2,
\end{equation}
which gives
\begin{equation}
    \sfunc f_k^{\rm temp}(1)\sum_x \unnorm {\func F}_{j,a}(1,x) + \sfunc f_k^{\rm temp}(2)\sum_x \unnorm {\func F}_{j,a}(2,x)  + \dots + \sfunc f_k^{\rm temp}(T) \sum_x \unnorm {\func F}_{j,a}(T,x) = 0.
\end{equation}
Consequently, the function $\bar {\sfunc f}_j$ of slicewise sums, $\bar {\sfunc f}_j(t) := \sum_x \unnorm{\func F}_{j,a}(t,x)$, is perpendicular to all temporal functions $\sfunc f^{\rm temp}_k$, $k\ge 2$. As $\mathrm{span}\left\{ \sfunc f^{\rm temp}_k \,:\, k\ge 2 \right\} = \mathbbm{1}^{\perp} \subset \mathbb{R}^T$, we obtain that $\bar {\sfunc f}_j \in \mathrm{span} \{\mathbbm{1} \}$, implying
\begin{equation}
    \sum_x \unnorm {\func F}_{j,a}(1,x) = \sum_x \unnorm {\func F}_{j,a}(2,x) = \ldots = \sum_x \unnorm {\func F}_{j,a}(T,x) = C,
    \label{eq:lemma7-3}
\end{equation}
for some constant $C$. The eigenfunctions $\func F_{j,a} \equiv \func F^{\rm spat}_{\ell,a}$ are called spatial eigenfunctions, while the possible difference in the indices $j,\ell$ is due to the fact that while $j$ refers to global enumeration of eigenvectors (in ascending order), $\ell$ enumerates only the spatial modes. 
\end{proof}

\section{Proof of Theorem \ref{thm:var}}
\label{sec:them:var}
We begin by defining spatial and temporal eigenspaces of the supra-Laplacian~$\ulap$. 
\begin{definition}{\bf (Spatial and temporal eigenspaces of $ \ulap$)}
The temporal eigenspace $\mathbb{S}^{\rm temp}$ is defined by
\begin{equation}
    \mathbb{S}^{\rm temp} = {\rm span} \left\{\sfunc f_k^{\rm temp}\otimes \mathbbm{1}_{N} \,:\, \lap' \sfunc f_k^{\rm temp} = \seval_k^{\rm temp}\sfunc f_k^{\rm temp},\ k\ge 2 \right\}. 
\end{equation}
    The spatial eigenspace $\mathbb{S}^{\rm spat}$ is its orthogonal complement:
\begin{equation}
\mathbb{S}^{\rm spat}= (\mathbb{S}^{\rm temp})^\perp.
\end{equation}
\end{definition}

We also note that it is a standard computation~\cite[Sec.~2.1]{Grigoryan2018} to obtain from the definition
\[
\op{L} \func f(t,x) = \sum_{s,y} \op{\adj}_{(t,x),(s,y)} \left(\func f(t,x)-\func f(s,y)\right)
\]
of a (spacetime) graph Laplacian $\op{L}$ associated with a general weight operator $\op{\adj}$ the useful ``variational'' expression (or Green's formula)
\begin{equation}
    \label{eq:Lap_variational}
    \langle\op{L} \func f, \func f \rangle = \frac12  \sum_{s,t,x,y} \op{\adj}_{(t,x),(s,y)} \left(\func f(t,x)-\func f(s,y)\right)^2.
\end{equation}

\begin{proof}[{Proof of Theorem~\ref{thm:var}}]
\,

{Part 1:}
The arguments presented here are analogous to the proof of~\cite[Theorem~7]{FrKo23}. Consider a function $\func F_i(t,x) = \sfunc g_i^D(x)$ where $\sfunc g_i^D(x)$ is the $i$-th eigenfunction of~$\dlap$. Define $\mathbb{S}':={\rm span}\{\func F_1,  \dots , \func F_N\}$ by the span of all such functions $\func F_i$. Then $\mathbb{S}' \subset \mathbb{S}^{\rm spat}$. Further define 
$\mathbb{S}_k':= \textnormal{span}\{\func F_1, \dots , \func F_k\}$ and $\mathbb{S}^{\rm spat}_{k,a} := {\rm span}\{\unnorm {\func F}_{1,a}^{\rm spat}, \dots , \unnorm {\func F}_{k,a}^{\rm spat}\}$, where we take all spatial eigenfunctions to have (Euclidean) norm~1. Now, for spatial eigenvalues $\smash{\unnorm \eval_{k,a}^{\rm spat}}$ and $k=1,\dots , N$, the Courant--Fischer minmax principle and \eqref{eq:Lap_variational} with $\op{L} = \ulap = \ulspat + a^2\ultemp$ give
\begin{align}
    \unnorm \eval_{k,a}^{\rm spat} &= \min\limits_{\substack{{\rm dim }\mathbb{S}=k \\ \mathbb{S} \subset \mathbb{S}^{\rm spat} }} \max_{\substack{\func F \in \mathbb{S} \\ \func F\neq 0}} \ \ \frac{1}{2\sum\limits_{t,x} \func F(t,x)^2} \left( \sum\limits_{t,x,y} (\func F(t,x) - \func F(t,y))^2\adjspat_{(t,x),(t,y)}\right. & \\ &\qquad \qquad + \left. \sum\limits_{s,t,x} a^2(\func F(t,x) - \func F(s,x))^2\adjtemp_{(t,x),(s,x)}\right ) & \label{eq:CourantFischer_spat}\tag{\theequation}\\
    & \leq \max\limits_{\func F \in \mathbb{S}_k' } \ \ \frac{1}{{ 2\sum\limits_{t,x} \func F(t,x)^2}} \left( \sum \sum\limits_{ t,x,y} (\func F(t,x) - \func F(t,y))^2 \adjspat_{(t,x),(t,y)} \right. & \\ &\qquad \qquad + \left. \sum\limits_{s,t,x} a^2\big( \overbrace{\func F(t,x) - \func F(s,x)}^{=0} \big)^2 \, \adjtemp_{(t,x),(s,x)} \right) &\\
    & = \max\limits_{\func F \equiv [\sfunc g, \dots, \sfunc g] \in \mathbb{S}_k' } \frac{\sum\limits_{x, y}(\sfunc g(x) - \sfunc g(y))^2 \sum_t \adjspat_{(t,x),(t,y)}}{ 2 N \sum\limits_{x}  \sfunc g(x)^2} \\
    & = \max\limits_{\func F \equiv [\sfunc g, \dots, \sfunc g] \in \mathbb{S}_k' } \frac{\sum\limits_{x, y}(\sfunc g(x) - \sfunc g(y))^2 \dadj_{x,y} }{ 2 \sum\limits_{x}  \sfunc g(x)^2} = \seval_k^D. &
\label{eq:bound}\stepcounter{equation}\tag{\theequation}
\end{align}
From \eqref{eq:order} and~\eqref{eq:bound} we have 
\begin{equation}
    \unnorm \eval_{k,a} \leq \seval_k^D, \ \forall a.
\end{equation}
This proves statement 1.

{Part 2:} To show that $\smash{a \mapsto \unnorm \eval_{k,a}^{\rm spat} }$ is nondecreasing for $1\le k\le TN$, consider again \eqref{eq:CourantFischer_spat}, the Courant--Fischer characterisation of~$\smash{ \eval_{k,a}^{\rm spat} }$. Note that for fixed $\mathbb{S}$ and fixed $\func F\in\mathbb{S}$ the objective function is nondecreasing in~$a$. Thus, keeping $\mathbb{S}$ fixed, but maximising over $\func F\in\mathbb{S}$ still gives an expression nondecreasing in~$a$. Since this holds for any $\mathbb{S}$, taking finally the minimum over these subspaces yields the claim. As $\eval_{k,a}^{\rm temp} = a^2 \seval_{k}^{\rm temp}$ is monotonically increasing with respect to $a$, statement 2 follows. 

{Part 3:}
Because $a \mapsto \unnorm \eval_{k,a}^{\rm spat}$ is nondecreasing, we have that
\begin{equation}
    \lim_{a \to \infty}\unnorm \eval_{k,a} \leq \lim_{a \to \infty} \unnorm \eval_{k,a}^{\rm spat} \leq \lambda_{k}^D
    \label{eq:ineq1}
\end{equation}
for $1\le k\le N$.
This also implies $\lim_{a \to \infty}\unnorm \eval_{k,a} = \lim_{a \to \infty} \unnorm \eval_{k,a}^{\rm spat}$, because temporal eigenvalues grow indefinitely with $a$, and thus for sufficiently large $a$ the $k$-th eigenvalue is spatial.

For statement 3, we are left to show that $\smash{ \seval^D_k \le \lim_{a \to \infty} \unnorm \eval_{k,a}^{\rm spat} }$.
Ahead, we fix $k \in \{1, \dots, N\}$.  Consider the set $\bar{\mathbb{S}}_k:= \{ F \in \mathbb{S}_k^{\rm spat} \,:\, \|F\|_2 = 1\}$. As $\bar{\mathbb{S}}_k$ is compact in $\mathbb{R}^{NT}$, there is a subsequence $a_i \to \infty$ as $i \to \infty$, such that  minimisers $\smash{ \unnorm {\func F}_{k,a_i}^{\rm spat} }$ of \eqref{eq:CourantFischer_spat} have a limit $\func F_k^*$, which also lies in~$\bar{\mathbb{S}}_k$. 

We now claim that $\func F_k^* \in \mathbb{S}'$, or, equivalently, that~$\ultemp \func F_k^* (t,x) = 0$. To this end we note that because $\|\smash{ \unnorm {\func F}^{\rm spat}_{k,a} }\|_2=1$,  
\begin{equation*}
    \infty > \lambda_k^D \geq \langle \unnorm {\op{L}}^{(a_i)}\unnorm {\func F}_{k,a_i}^{\rm spat}, \unnorm {\func F}_{k,a_i}^{\rm spat} \rangle \geq a^2 \langle \ultemp \func F_{k,a_i}^{\rm spat}, \func F_{k,a_i}^{\rm spat} \rangle.
\end{equation*}
As $a_i \to \infty$, we obtain that 
\begin{equation}
    \label{eq:F_const_in_time}
    \lambda_k^D \ge \limsup_{i\to\infty} a_i^2  \langle \ultemp \func F_{k,a_i}^{\rm spat}, \func F_{k,a_i}^{\rm spat} \rangle = \limsup_{i\to\infty} a_i^2 \langle \ultemp \func F_{k}^{*}, \func F_{k}^{*} \rangle.
\end{equation}
Thus~$\langle \ultemp \func F_{k}^{*}, \func F_{k}^{*} \rangle = 0$, implying by \eqref{eq:Lap_variational} with $\op{L} = \ultemp$ temporal constancy of $\func F_{k}^{*}$, and hence we may write $\func F_k^*$ in the following canonical form:
\begin{equation}
\label{eq:gkstar}
    \func F_k^*(t,x) = \sfunc g_k^*(x), \textnormal{ for some } \sfunc g_k^*: V \to \mathbb{R}^N.
\end{equation}
Recalling that~$\|\func F_k^*\|_2=1$, \eqref{eq:gkstar} yields $\|\sfunc g_k^*\|_2^2 = \frac1T$. 

Let $\ulspat_t$ be the operator that acts on a spacetime function $\func F(t,x)$ as
\begin{equation}
    \ulspat_t \func F(t,x) = \sum_y \adjspat_{(t,x),(t,y)}\left(\func F(t,x)-\func F(t,y)\right).
\end{equation}
It can be interpreted as the spatial Laplacian at time $t$. Then we have the following property for any spacetime function $\func F(t,x)$ and for any~$a$:
\begin{equation}
\label{eq:split_inner_prod}
    \langle \ulap \func F, \func F \rangle = \sum_t  \langle \ulspat_t \func F(t,\cdot), \func  F(t,\cdot) \rangle  + a^2 \langle \ultemp \func F, \func F \rangle \geq \sum_t \langle \ulspat_t \func F(t,\cdot), \func F(t,\cdot) \rangle.
\end{equation}
Moreover,  for an arbitrary sequence of real-valued functions $(h_n)_n$ with pointwise limit $h$ one has\footnote{To see this, for any $z$ and $\epsilon>0$ we denote by $z_\epsilon$ a point with $h(z_\epsilon) \ge \sup_z h(z) - \epsilon$. Then one has $\sup_z h(z) - \epsilon \le h(z_\epsilon) = \limsup_n h_n(z_\epsilon) \le \limsup_n \sup_z h_n(z)$.}
\begin{equation}
    \label{eq:limsup_max_swap}
    \limsup_{n\to\infty} \sup_z h_n(z) \ge \sup_z h(z).
\end{equation}
Recalling that $\lim_{i\to\infty} \unnorm {\func F}^{\rm spat}_{\ell,a_i}(t,x) =  \sfunc g^*_\ell (t,x)$ 
with $\| \sfunc g^*_\ell \| = \frac1T$, we obtain that
\begin{align*}
    \lim_{a\to\infty} \unnorm \eval_{k,a}^{\rm spat} &= \limsup_{i \to \infty} \unnorm \eval_{k,a_i}^{\rm spat} = \limsup_{i \to \infty}  \max_{\substack{\func f \in {\rm span}\{ \unnorm {\func F}_{1,a_i}^{\rm spat} \dots \unnorm {\func F}_{k,a_i}^{\rm spat}\} \\ \|\func F\|=1}} \langle \unnorm{\op{L}}^{(a_i)} \func  F, \func F \rangle \\
    &\stackrel{\eqref{eq:split_inner_prod}}{\ge} \limsup_{i \to \infty}  \max_{\substack{\func F \in {\rm span}\{ \unnorm {\func F}_{1,a_i}^{\rm spat} \dots \unnorm {\func F}_{k,a_i}^{\rm spat}\} \\ \|\func F\|=1}} \sum_t \langle \ulspat_t \func F(t,\cdot), \func F(t,\cdot) \rangle \\
    &= \limsup_{i \to \infty} \max_{\svect c\in\mathbb{R}^k,\ \|\svect c\|=1}  \sum_t \Big\langle \ulspat_t \sum_\ell \svect c_\ell \unnorm {\func F}^{\rm spat}_{\ell,a_i}(t,\cdot), \sum_\ell \svect c_\ell \unnorm {\func F}^{\rm spat}_{\ell,a_i}(t,\cdot) \Big\rangle \\
    &\stackrel{\eqref{eq:limsup_max_swap}}{\ge} \max_{\svect c\in\mathbb{R}^k,\ \|\svect c\|=1}  \sum_t \Big\langle \ulspat_t \sum_\ell \svect c_\ell \sfunc g^*_\ell, \sum_\ell \svect c_\ell \sfunc g^*_\ell \Big\rangle \\
    &= \max_{\svect c\in\mathbb{R}^k,\ \|\svect c\|=1} \Big\langle \underbrace{\left(\frac1T \sum_t \ulspat_t \right)}_{=\dlap} \sum_\ell \svect c_\ell \sqrt{T} \sfunc g^*_\ell, \sum_\ell \svect c_\ell \sqrt{T} \sfunc g^*_\ell \Big\rangle \\
    &\ge \min_{\substack{S\subset\mathbb{R}^N\\ \dim S=k}} \max_{\substack{\sfunc g\in S\\ \|\sfunc g\|=1}} \left\langle \sop{L}^D \sfunc g,\sfunc g \right\rangle = \seval^D_k,
\end{align*}
the last equality invoking again the Courant--Fischer minmax principle.
This together with \eqref{eq:ineq1} implies the claim of statement~3.

{Part 4:}
Finally, to show in statement 4 that the only accumulation points of the sequence $(\func F^{\rm spat}_{k,a})_{a>0}$ for $1\le k\le TN-T+1$ as $a\to\infty$ are vectors constant in time, assume the contrary. That is, there is a subsequence $(a_j)_{j\in\mathbb{N}}$ with $\lim_{j\to\infty} a_j = \infty$ such that $\smash{\func  F^{\rm spat}_{k,a_j} }$ converges to some $\func F_k^{**}$ satisfying~$\langle \ultemp \func F_{k}^{**}, \func F_{k}^{**} \rangle > 0$ by~\eqref{eq:Lap_variational}. This immediately contradicts~\eqref{eq:F_const_in_time}.
\end{proof}

\section{Proof of Theorem \ref{thm:normL}}
\label{sec:thm:normL}

\begin{proof}
Consider the graph $\set G = (\set V, \set E, \sadj)$, the normalised Laplacian~$\smash{\nlap}$, and define by $\mathcal H$ the space of all functions $\func f: \set V \to \mathbb R$ acting on vertices~$\set V$. We first claim that the operator $\smash{\nlap}$ corresponding to the supra-Laplacian with respect to the graph $\set G$ acts on any arbitrary function $\func F \in \mathcal H$ as
\begin{equation}
    \lim_{a \to \infty} \nlap \func F = \norm \ltemp \func F,
    \label{eq:normL_claim1}
\end{equation}
where $\norm \ltemp$ is the normalised Laplacian defined with respect to the graph $\set G = (\set V, \set E, \linebreak[3] \adjtemp)$.  
Let us denote the spatial and temporal components of $\smash{\nlap}$ by $\smash{\nlspat}$ and $\smash{\nltemp}$, respectively. Thus,
    \begin{equation}
        \nlap  = \nlspat + a^2 \nltemp.
        \label{eq:norml_split}
    \end{equation}
    Using \eqref{eq:deg}, we note that
    \begin{equation}
        \lim_{a \to \infty} \adep{\func d}(t,x) = \lim_{a \to \infty} \sum_{y} \adjspat_{(t,x),(t,y)} + a^2 \sum_{s} \adjtemp_{(t,x),(s,x)} = \infty.
        \label{eq:thm10_1}
    \end{equation}
    Consider the expression $\nlspat \func F$. Expanding using \eqref{eq:generalLap}, we get
    \begin{align*}
       \nlspat \func F &= \sum_{s,y} \adjspat_{(t,x),(s,y)} \left( \frac{\func F(t,x)}{\adep{\func d}(t,x)} - \frac{\func F(s,y)}{\adep{\func d}(t,x)^{\frac{1}{2}}\adep{\func d}(s,y)^{\frac{1}{2}}} \right) \\
        & \to 0 \quad (\textnormal{as } a\to \infty),
        \label{eq:thm10-3}\stepcounter{equation}\tag{\theequation}
    \end{align*}
    which is obtained as $\adep{\func d}(t,x) \to \infty$ for $a \to \infty$ from~\eqref{eq:thm10_1}. Further we note that the limit
\begin{align*}
    \lim_{a \to \infty} \frac{a^2}{\adep{\func d}(t,x)} &= \lim_{a \to \infty} \frac{a^2}{\sum_{s,y} \sadj_{(t,x),(s,y)}} \\ 
    & = \lim_{a \to \infty} \frac{a^2}{\sum_{y}  \adjspat_{(t,x),(t,y)} + a^2 \sum_{s} \adjtemp_{(t,x),(s,x)}} \\
     &= \lim_{a \to \infty} \frac{1}{\sum_{y} \frac{1}{a^2}\adjspat_{(t,x),(t,y)} + \sum_{s} \adjtemp_{(t,x),(s,x)}} \\
    &= \frac{1}{\sum_{s} \adjtemp_{(t,x),(s,x)}} \\
    &= \frac{1}{\sum_s\adj'_{s,t}} = \frac{1}{\sfunc d^{\rm temp}(t)}.
\stepcounter{equation}\tag{\theequation}\label{eq:thm10_2}
\end{align*}
    Now consider the expression $a^2 \nltemp \func F.$ Expanding using \eqref{eq:generalLap},  we get
    \begin{align*}
        a^2 \nltemp \func F &= \sum_{s,y} a^2\adjtemp_{(t,x),(s,y)} \left( \frac{\func F(t,x)}{\adep{\func d}(t,x)} - \frac{\func F(s,y)}{\adep{\func d}(t,x)^{\frac{1}{2}}\adep{\func d}(s,y)^{\frac{1}{2}}} \right) & \\
        &\to \sum_{s,y} \adjtemp_{(t,x),(s,y)} \left( \frac{\func F(t,x)}{\sfunc d^{\rm temp}(t)} - \frac{\func F(s,y)}{\sfunc d^{\rm temp}(t)^{\frac{1}{2}}\sfunc d^{\rm temp}(s)^{\frac{1}{2}}} \right) & (\textnormal{as } a\to \infty) \\
        &= \norm \ltemp \func F,
        \label{eq:thm10-4}\stepcounter{equation}\tag{\theequation}
    \end{align*}
    where the limit is obtained from~\eqref{eq:thm10_2}.
    Now, the claim \eqref{eq:normL_claim1} follows from combining \eqref{eq:norml_split} with \eqref{eq:thm10-3} and \eqref{eq:thm10-4}. As $\mathcal H$ is finite-dimensional, from Dini's theorem we have that
    \begin{equation}
    \label{eq:normLap_a_Limit}
        \nlap \to \norm \ltemp \textnormal{ uniformly as } a \to \infty. 
    \end{equation}
    Finally from \cite[Thm.~II.5.1]{Kato1995} we know that an arbitrary continuous perturbation of $\smash{\norm \ltemp}$ (in the uniform topology) yields eigenvalues and eigenspaces that are arbitrarily close to those of~$\smash{\norm \ltemp}$.
    Thus, we have that the eigenvalues $\norm \eval_{k,a}$ can only have accumulation points $\mu_j$ for some $j=1,\ldots,T$ (and their corresponding eigenspaces accordingly).     
    \end{proof}

\section{Graph generation}
\label{sec:graphgen}
We lay out an empirical graph generation method that incorporates graph transitions explicitly. We begin with a spacetime graph $\set{G}$ with $N$ spatial nodes, $T$ time fibers. We define by $\alpha = \{\alpha_1, \alpha_2, \dots\}$ and $s = \{s_1, s_2, \dots \}$,  $s_j, \alpha_i \in \mathbb{R}$ a sequence of states $\alpha_i$ that represents the number of communities/partitions we would like to have at times $s_i \in [0,T]$ respectively. For instance, $\alpha = \{ 1,0,1 \}$ and $s = \{0, s_1, T\}$ represents a network with a single cluster at initial time $t=0$  that 2-disappears to a network state with no clear partitions at time $t=s_1$, and transitions back to a network state with a single appearing cluster (2-appearance) at time~$t=T$. We also require the scalar parameters $\eta, \beta >0$ and $\gamma \in \mathbb N$. Their role will be made clear after the procedure is explained.

Given a sequence $\alpha$ and the corresponding times $s$, the network $\set{G}$ represented by a sequence of adjacencies $\{\smat W^{(t)}\}_{t=1,\dots, T}$  -- which will constitute the spatial weight matrices $\smat W^{\rm spat}_t$ on the individual time slices, having only entries 0 and~1 --  is generated as follows:
\begin{enumerate}
    \item For every $\alpha_i \in \alpha$, construct a time slice/network layer as follows:
    \begin{enumerate}
        \item Generate $\alpha_i$ fully connected networks of size $\lfloor N/(\alpha_i+\beta) \rfloor $ each, for some $\beta$. These represent clusters. This still leaves nodes in $\{1, \dots, N\}$ that are unclustered.
        \item Take the remaining unclustered nodes and construct a $d$-regular graph, where $d$ is the smaller of the cluster size or the number of remaining nodes. 
        \item In case $\alpha_i=0$, generate a $d$-regular graph over all nodes where $d = N/\gamma$, for some $\gamma$.
        \item  Construct $(1-\eta)\lfloor N/(\alpha_i+\beta) \rfloor $ edges between the clusters and the remaining nodes each. 
    \end{enumerate}
    \item Consider two network states $\smat W^{(s_i)}$ and $\smat W^{(s_{i+1})}$ with respect to $(\alpha_i, s_i)$ and $(\alpha_{i+1}, s_{i+1})$ respectively and edge sets $\sset E_i$ and $\sset E_{i+1}$ respectively constructed using step 1. The transient states $\smat W^{(s_i+1)}, \dots, \smat W^{(s_{i+1}-1)}$ are constructed by progressively adding/deleting relevant edges $(\sset E_i \cup \sset E_{i+1}) \setminus (\sset E_{i} \cap \sset E_{i+1}) $.  
    \item Repeat step 1 and 2 over all pairs of states $(\alpha_i, s_i)$ and $(\alpha_{i+1}, s_{i+1})$ to generate the  network $\{\smat W^{(t)}\}_{t=1, \dots, T}$.
\end{enumerate}

The role of the parameters $\eta, \gamma$ and $\beta$ are as follows:
\begin{itemize}
    \item The quality parameter $\eta > 0$ is related to an individual cluster and represents the maximal ratio of number of intracluster edges to the total number of edges of cluster. 
    \item The density parameter $\gamma \in \mathbb N$ represents the inverse fraction of edges, compared to a complete network, that a $d$-regular network representing the unclustered vertices will have.
    For higher values of $\gamma$, the regular network has fewer edges between vertices.
    \item The parameter $\beta > 0$ represents the ratio of number of vertices assigned as clustered to those assigned as unclustered. From step 1-a, note also that higher values of $\beta$ result in smaller sized clusters. 
\end{itemize} 
Higher values of  $\beta$ and $\gamma$ result in fewer constructed edges per regular graph, which results in a good balance between fast and smooth transitions in time. In our experiments in Sec.~\ref{sec:ex1} and Sec.~\ref{sec:ex2} we chose  $\alpha = \{0,1\}, s = \{ 
1,21 \}$ and $\alpha = \{0,1,2\}, s=\{1,40,60\}$ respectively. For both experiments we fix $\beta = 1.5$, $\gamma=3$ and $\eta=0.8$. 

\section{Matching time-slice clusters from Leiden}\label{sec:app:matching}


The application of the Leiden algorithm on each time slice yields a sequence of partitions $\{\spartn X^{(1)},\ldots,\spartn X^{(T)}\}$, where the partition on time slice $t$ is $\spartn X^{(t)}=\{\sset X^{(t)}_1,\ldots,\sset X^{(t)}_{K_t}\}$ and $\sset X^{(t)}_k\subset \sset V$, for $t=1,\ldots,T$ and $k=1,\ldots,K_t$.
We wish to combine these clusters across time to form a single spacetime partition $\partn{X}=\{\set{X}_1,\ldots,\set{X}_K\}$, where $K=\max_{1\le t\le T}K_t$.
A spacetime partition element $\set{X}_k\subset \set{V}$, $1\le k\le K$ will be of the form $\smash{ \set{X}_k=\cup_{t=1}^T {\{t\} \times }\sset X^{(t)}_k }$, but it is nontrivial to sort the indices of the elements of $\spartn X^{(t)}$ for each $t=1,\ldots,T$ in such a way that this union creates sensible spacetime clusters $\set{X}_k$ with small spacetime cut values $\func \sigma(\set{X}_k)$, defined in~\eqref{eq:cut}. 
This section describes how to correctly perform this reindexing on each time slice in order to link the partitions $\spartn X^{(t)}$ across time to create a spacetime partition $\partn{X}$ with a low total spacetime cut value $\func \sigma(\partn{X})=\max_{1\le k\le K}\func \sigma(\set{X}_k)$.

\subsection{Matching clusters between adjacent time slices}
\label{sec:matching}
Fixing $t$, we consider the partitions $\spartn X^{(t)}$ and $\spartn X^{(t+1)}$ and wish to match partition elements at time $t$ to those at~$t+1$.
Each partition has a given indexing of its elements.
Without loss, we suppose that $K_t\ge K_{t+1}$ (if not, it will be clear how to match from $t+1$ to~$t$).
We compute cut values between all pairs of clusters, indexed by $1\le i\le K_t$ and $1\le j\le K_{t+1}$, leading to a $K_t\times K_{t+1}$ matrix $\smat C$ containing these values.
\begin{equation}
\label{eq:cutweights}
    \smat C_{ij}:= -a^2 \sum_{\substack{x \in X^{(t)}_i \\ y \in X^{(t+1)}_j}}  \adjtemp_{(t,x),(t+1,y)}.
\end{equation}

We consider an auxiliary bipartite graph with two sets of vertices: $\{u_1,\ldots,u_{K_t}\}$ and $\{w_1,\ldots,w_{K_{t+1}}\}$.
The first set of vertices represent the $K_t$ clusters at time $t$ and the second set of vertices represent the $K_{t+1}$ clusters at time~$t+1$.
Linking cluster $i$ at time $t$ with cluster $j$ at time $t+1$ will be indicated by including an edge (in an edge cover, see below) between $u_i$ and~$w_j$.
For all pairs $(i,j)$ the edge from $u_i$ to $w_j$ is given a weight equal to the cut value $\smat C_{ij}$ given by~\eqref{eq:cutweights}.
We wish to find a restricted maximum weighted edge cover for the bipartite graph $\sset G$ with vertices $\{u_1,\ldots,u_{K_t},w_1,\ldots,w_{K_{t+1}}\}$ and weights $\smat C_{ij}$, $i=1,\ldots,K_t$, $j=1,\ldots,K_{t+1}$.
Requiring an edge cover means that all nodes have at least one incident edge. 
We restrict the cover by insisting that each vertex in 
$\{u_1,\ldots,u_{K_T}\}$ has exactly one incident edge.
If the cover includes an edge linking $u_i$ and $w_j$ then we will connect the cluster $\smash{\sset X_i^{(t)}}$ with the cluster $\smash{\sset X_j^{(t+1)}}$; this is elaborated below.

By the above properties we see that each cluster $\smash{\sset X_i^{(t)}}$ is linked to one or more clusters $\smash{\sset X_j^{(t+1)}}$;  if more than one cluster, we merge two or more clusters from time slice $t$ into a single cluster at time slice~$t+1$. Note that in the case of appearance/disappearance of clusters we assume that the set of unclustered vertices at time $t$ is contained in $\smash{\sset X_j^{(t)}}$ for some~$j$.

The maximality of the restricted maximum weight edge covering in \eqref{eq:cutweights} means that the total intercluster weight is minimised. 
We solve a restricted maximum weight edge cover problem (RMWECP) as a binary integer program, with binary variables $\smat A_{ij}\in\{0,1\}$, $i=1,\ldots,K_t; j=1,\ldots,K_{t+1}$, and where $\smat A_{ij}=1$ if an edge in the cover is incident on $u_i$ and $w_j$ ($\smat A_{ij}=0$ otherwise):
\begin{eqnarray*}
\max_{\smat A_{ij}}&&\sum_{i=1}^{K_t}\sum_{j=1}^{K_{t+1}}\smat C_{ij}\smat A_{ij}\\
\mbox{subject to}&&\sum_{i=1}^{K_t}\smat A_{ij}=1\quad\mbox{for }j=1,\ldots,K_{t+1}\\
&&\sum_{j=1}^{K_{t+1}}\smat A_{ij}\ge 1\quad\mbox{for } i=1,\ldots,K_t\\
&&\smat A_{ij}\in\{0,1\}\quad\mbox{for } i=1,\ldots,K_t; j=1,\ldots,K_{t+1}
\end{eqnarray*}

Julia code (\cite{bezanson2017julia}) to compute the RMWECP for bipartite graphs is given below (using the JuMP modelling language (\cite{Lubin2023}) and the HiGHS solver (\cite{highs}).
{\footnotesize
\begin{lstlisting}[language=Julia]
using JuMP, HiGHS

"""Computes a restricted maximum-weight edge cover of a bipartite graph. 
The `(i,j)th` entry of the `mxn` matrix `C` contains the weight of the 
edge joining node `i` and node `j`. 
We assume `m` is not less than `n` and restrict to unit node degrees
on the `m` nodes.
A binary `mxn` array `A` is output, encoding the cover."""
function bipartite_maximum_weight_edge_cover(C)

    # example call 
    # C = [2.5 2.5 2.5; 0 0 1; 1 1 1; 2 2 0]  
    # bipartite_maximum_weight_edge_cover(C)

    # ensure more rows than columns in C
    m, n = size(C)
    if m < n
        m, n, C = n, m, C'
    end

    # create model
    edgecover = Model(HiGHS.Optimizer)
    set_silent(edgecover)

    # define binary variable array
    @variable(edgecover, A[1:m, 1:n], Bin)

    # one cluster from the large group must be assigned to 
    # one cluster in the small group
    @constraint(edgecover, [i = 1:m], sum(A[i, :]) == 1)
    # each cluster in the small group is assigned to 
    # at least one cluster in the large group
    @constraint(edgecover, [j = 1:n], sum(A[:, j]) >= 1)

    # maximise total weight of the cover
    @objective(edgecover, Max, sum(C .* A))
    optimize!(edgecover)
    println("Total cover weight is ", objective_value(edgecover))

    return Bool.(value.(A))

end
\end{lstlisting}
}

\subsection{Matching clusters across all time slices}
Suppose that we have combined the partitions $\spartn X^{(1)},\ldots, \spartn X^{(t)}$ into a partial spacetime partition $\partn{X}$ (i.e.\ up to time $t$).
For simplicity of exposition, we assume that the indexing for $\partn{X}$ matches that of $\spartn X^{(t)}$.
To extend this to time $t+1$ we find a minimum weight edge cover for the graph generated as above using $\spartn X^{(t)}$ and $\spartn X^{(t+1)}$.
Having found this minimum weight edge cover, we combine $\spartn X_i^{(t)}$ and $\spartn X_j^{(t+1)}$ into a single spacetime partition element if the edge joining $u_i$ and $w_j$ is in the cover.
If $K_{t+1}<K_t$ this will mean that some clusters are merged between slice $t$ and slice $t+1$. If $K_{t+1}=K_t$, we retain the same number of clusters from $t$ to $t+1$ and index in such a way to naturally evolve the clusters in $\spartn X^{(t)}$ into those in $\spartn X^{(t+1)}$. If $K_{t+1} > K_t$ we perform the matching after reversing time, which leads to a splitting of one or more clusters from time $t$ to time $t+1$.

\begin{figure}[t]
    \centering
    \includegraphics[width=\textwidth]{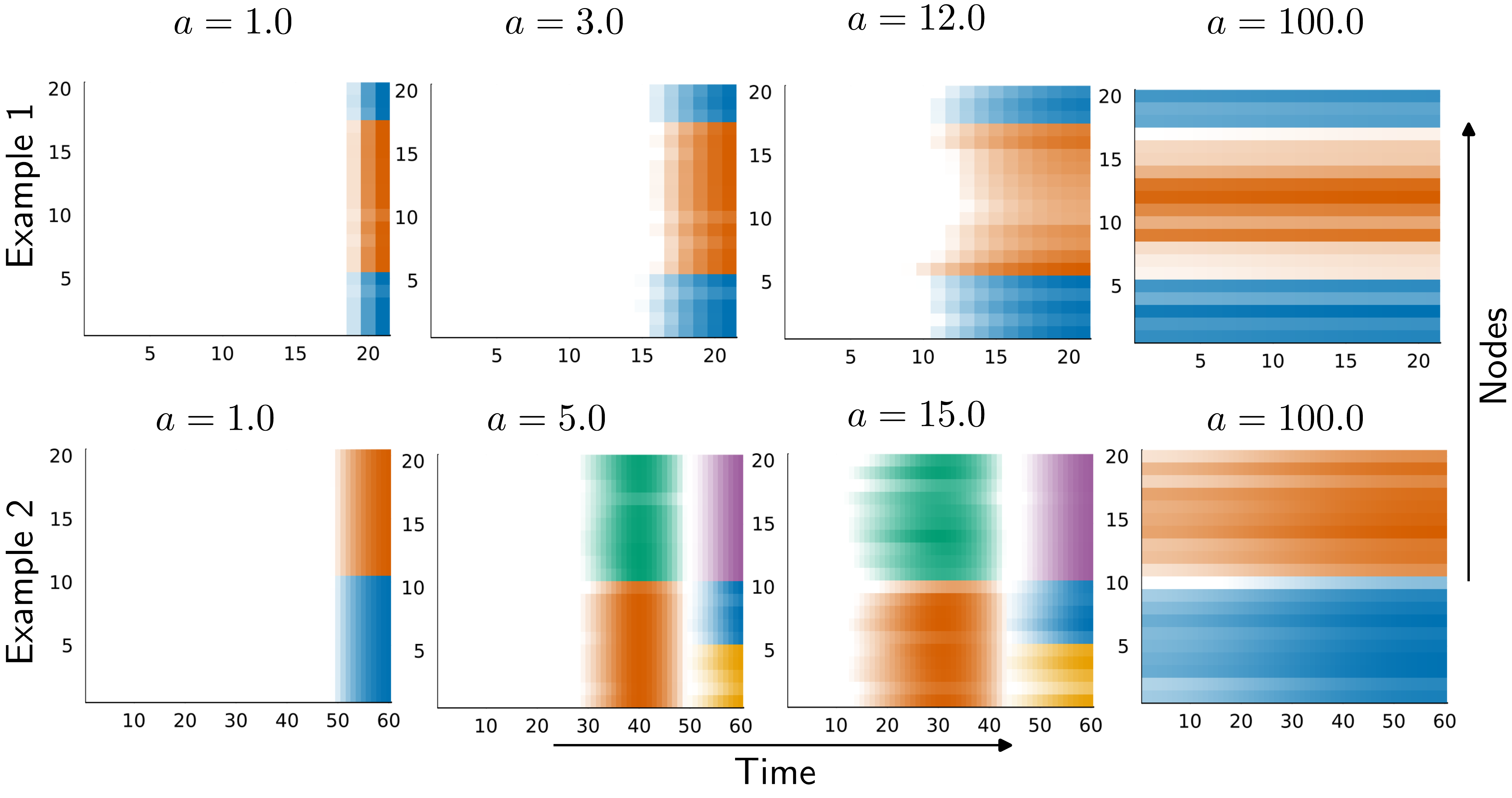}
    \caption{Cluster assignment using the spectral partitioning Algorithm~\ref{alg:spec_part} in Examples 1 and 2 for different values of the diffusion parameter $a$. Different colours correspond to different clusters. For each value of $a$, we perform spectral partitioning with the first two spatial eigenvectors in Example 1, and the first three spatial eigenvectors in Example 2. The middle two values of $a$ for each example are roughly the extremal values for which we see results consistent with those shown in Sec.~\ref{sec:ex1} and \ref{sec:ex2}.}\label{fig:varyinga}
    
\end{figure}

\section{Varying $a$ in Examples 1 and 2}\label{sec:app:varyinga}

In this section we comment on the effect of varying the temporal diffusion parameter $a$ on cluster assignment. According to Algorithm~\ref{alg:spec_part} the choices of $a$ for Examples 1 and 2 are $a=8.16$ and $a=9.27$ respectively. Here we vary $a$ and perform cluster assignment using Algorithm~\ref{alg:spec_part} as before, except for the implementation of step 3. In Fig.~\ref{fig:varyinga} we show the resulting clusters for four values of $a$, for both examples 1 and 2. In both cases, we show the effect of $a \ll a_c$ and $a \gg a_c$ by applying Algorithm~\ref{alg:spec_part} with $a=1$ and $a=100$ respectively. The other two cases {use} empirically obtained $a$ values for which we obtain cluster assignments qualitatively consistent with those obtained using $a=a_c$.

In Fig.~\ref{fig:varyinga} (upper) we show the clusters obtained for Example 1 using $a=1.0, 3.0, 12.0$ and $a=100.0$. In all cases, we set $R=1$ in Algorithm~\ref{alg:spec_part} as there is a significant spectral gap between the second and third spatial eigenvalues. SEBA produces two vectors from the leading nontrivial spatial eigenvectors which give the two cluster picture that we see for all values of $a$. We note that as $a$ increases, the transition point from $\Omega$ to cluster appearance moves further to the left in time. This is consistent with the gradual loss of edges at a constant rate, so that there is no clear time at which one can say a cluster appears.

The dynamics in Example 2 are more involved, and are shown in Fig.~\ref{fig:varyinga} (lower). For the extreme cases $a=1.0$ and $a=100.0$ we set $R=1$ due to the significant spectral gap between the first and second nontrivial spatial eigenvalues. The results for $a=5.0$ and $a=15.0$ are consistent with those shown in Fig.~\ref{fig:ex2}. Here we set $R=3$ following similar analysis as done in Sec.~\ref{sec:ex2}.
We observe that the two transition points, namely where two clusters appear from $\set\Omega$ and where two clusters appear from the cluster comprising spatial nodes $1,\ldots,10$, move to the left in time as $a$ increases, which is due to the fact that the loss of edges is gradual and larger $a$ promotes clusters with longer lifetimes, similar to Example~1.
We observed a sharp difference in the spacetime clustering for $a$ a little less than 5.0 and $a$ a little larger than 15.0 primarily because we applied a subjective choice based on the  Cheeger ratios $\boldsymbol{H}(\boldsymbol{X}_k)$ of sets $\boldsymbol{X}_k$ corresponding to SEBA vectors to discard some $\boldsymbol{X}_k$ for $a$ just outside $[5,15]$, while for $a \in [5,15]$ these sets were not discarded.
The SEBA vectors themselves have similar supports to the SEBA vectors for $5\leq a \leq 15$. 

\end{appendices}

\bibliographystyle{myalpha.bst}
\bibliography{refs}

\newcommand{\etalchar}[1]{$^{#1}$}
\begin{thebibliography}{GDGGG{\etalchar{+}}13}

\bibitem[AFK24]{AFK24}
J.~Atnip, G.~Froyland, and P.~Koltai.
\newblock An inflated dynamic {L}aplacian to track the emergence and disappearance of semi-material coherent sets.
\newblock {\em arXiv preprint, arXiv:2403.10360}, 2024.

\bibitem[BA15]{Bhat2015}
S.~Y. Bhat and M.~Abulaish.
\newblock Hoctracker: Tracking the evolution of hierarchical and overlapping communities in dynamic social networks.
\newblock {\em IEEE Transactions on Knowledge and Data Engineering}, 27(4):1019--1013, 2015.

\bibitem[BEKS17]{bezanson2017julia}
J.~Bezanson, A.~Edelman, S.~Karpinski, and V.~B. Shah.
\newblock Julia: A fresh approach to numerical computing.
\newblock {\em SIAM review}, 59(1):65--98, 2017.

\bibitem[BGLL08]{blondel2008fast}
V.~D. Blondel, J.-L. Guillaume, R.~Lambiotte, and E.~Lefebvre.
\newblock Fast unfolding of communities in large networks.
\newblock {\em Journal of Statistical Mechanics: Theory and Experiment}, 2008(10):P10008, 2008.

\bibitem[BP16]{Barabasi2016}
A.-L. Barab{\'a}si and M.~P{\'o}sfai.
\newblock {\em Network Science}.
\newblock Cambridge University Press, 2016.

\bibitem[Chu96a]{chung1996laplacians}
F.~R. Chung.
\newblock Laplacians of graphs and {C}heeger’s inequalities.
\newblock {\em Combinatorics, Paul Erdos is Eighty}, 2(157-172):13--2, 1996.

\bibitem[Chu96b]{chung1996lectures}
F.~R. Chung.
\newblock Lectures on spectral graph theory.
\newblock {\em CBMS Lectures, Fresno}, 6(92):17--21, 1996.

\bibitem[DP17]{DeFord2017}
D.~R. DeFord and S.~D. Pauls.
\newblock A new framework for dynamical models on multiplex networks.
\newblock {\em Journal of Complex Networks}, 6(3):353--381, September 2017.

\bibitem[FK15]{FroylandKwok2015}
G.~Froyland and E.~Kwok.
\newblock Partitions of networks that are robust to vertex permutation dynamics.
\newblock {\em Special Matrices}, 3, 1 2015.

\bibitem[FK23]{FrKo23}
G.~Froyland and P.~Koltai.
\newblock Detecting the birth and death of finite-time coherent sets.
\newblock {\em Communications on Pure and Applied Mathematics}, 76(12):3642--3684, 2023.

\bibitem[Fro15]{Froyland2015}
G.~Froyland.
\newblock Dynamic isoperimetry and the geometry of {L}agrangian coherent structures.
\newblock {\em Nonlinearity}, 28:3587--3622, 10 2015.

\bibitem[FRS19]{Froyland2019}
G.~Froyland, C.~P. Rock, and K.~Sakellariou.
\newblock Sparse eigenbasis approximation: Multiple feature extraction across spatiotemporal scales with application to coherent set identification.
\newblock {\em Communications in Nonlinear Science and Numerical Simulation}, 77:81--107, October 2019.

\bibitem[GDC10]{Greene2010}
D.~Greene, D.~Doyle, and P.~Cunningham.
\newblock Tracking the evolution of communities in dynamic social networks.
\newblock In {\em 2010 International Conference on Advances in Social Networks Analysis and Mining}, pages 176--183, 2010.

\bibitem[GDGGG{\etalchar{+}}13]{Gmez2013}
S.~Gómez, A.~Díaz-Guilera, J.~Gómez-Gardeñes, C.~J. Pérez-Vicente, Y.~Moreno, and A.~Arenas.
\newblock Diffusion dynamics on multiplex networks.
\newblock {\em Physical Review Letters}, 110(2), January 2013.

\bibitem[Gri18]{Grigoryan2018}
A.~Grigor’yan.
\newblock {\em Introduction to analysis on graphs}, volume~71.
\newblock American Mathematical Soc., 2018.

\bibitem[HH18]{highs}
Q.~Huangfu and J.~A.~J. Hall.
\newblock Parallelizing the dual revised simplex method.
\newblock {\em Mathematical Programming Computation}, 10(1):119--142, 2018.

\bibitem[JM85]{AndersonJr85}
W.~N.~A. Jr. and T.~D. Morley.
\newblock Eigenvalues of the {L}aplacian of a graph.
\newblock {\em Linear and Multilinear Algebra}, 18(2):141--145, 1985.

\bibitem[KAB{\etalchar{+}}14]{Kivela2014}
M.~Kivel\"a, A.~Arenas, M.~Barthelemy, J.~P. Gleeson, Y.~Moreno, and M.~A. Porter.
\newblock {Multilayer networks}.
\newblock {\em Journal of Complex Networks}, 2(3):203--271, 07 2014.

\bibitem[Kat95]{Kato1995}
T.~Kato.
\newblock {\em Perturbation Theory for Linear Operators}.
\newblock Springer Berlin Heidelberg, 1995.

\bibitem[KDC22]{Klus2022}
S.~Klus and N.~Djurdjevac~Conrad.
\newblock Koopman-based spectral clustering of directed and time-evolving graphs.
\newblock {\em Journal of Nonlinear Science}, 33(1), November 2022.

\bibitem[Kle23]{Kleinfeld_2023}
R.~Kleinfeld.
\newblock Polarization, democracy, and political violence in the united states: What the research says.
\newblock Technical report, Carnegie Endowment for International Peace, Sep 2023.

\bibitem[KM15]{Kuncheva2015}
Z.~Kuncheva and G.~Montana.
\newblock Community detection in multiplex networks using locally adaptive random walks.
\newblock In {\em Proceedings of the 2015 {IEEE}/{ACM} International Conference on Advances in Social Networks Analysis and Mining 2015}. {ACM}, August 2015.

\bibitem[KM16]{Keller2016}
M.~Keller and D.~Mugnolo.
\newblock General {C}heeger inequalities for $p$-{L}aplacians on graphs.
\newblock {\em Nonlinear Analysis: Theory, Methods \& Applications}, 147:80–95, December 2016.

\bibitem[LDD{\etalchar{+}}23]{Lubin2023}
M.~Lubin, O.~Dowson, J.~{Dias Garcia}, J.~Huchette, B.~Legat, and J.~P. Vielma.
\newblock {JuMP} 1.0: {R}ecent improvements to a modeling language for mathematical optimization.
\newblock {\em Mathematical Programming Computation}, 2023.

\bibitem[LGT14]{Lee2014}
J.~R. Lee, S.~O. Gharan, and L.~Trevisan.
\newblock Multiway spectral partitioning and higher-order {C}heeger inequalities.
\newblock {\em J. ACM}, 61(6), dec 2014.

\bibitem[MH19]{Masuda2019}
N.~Masuda and P.~Holme.
\newblock Detecting sequences of system states in temporal networks.
\newblock {\em Scientific Reports}, 9, 12 2019.

\bibitem[MHI{\etalchar{+}}21]{Magnani2021}
M.~Magnani, O.~Hanteer, R.~Interdonato, L.~Rossi, and A.~Tagarelli.
\newblock Community detection in multiplex networks.
\newblock {\em ACM Comput. Surv.}, 54(3), may 2021.

\bibitem[MOR20]{macmillan2020}
T.~MacMillan, N.~T. Ouellette, and D.~H. Richter.
\newblock Detection of evolving {L}agrangian coherent structures: A multiple object tracking approach.
\newblock {\em Physical Review Fluids}, 5(12):124401, 2020.

\bibitem[MRM{\etalchar{+}}10]{Mucha2010}
P.~J. Mucha, T.~Richardson, K.~Macon, M.~A. Porter, and J.-P. Onnela.
\newblock Community structure in time-dependent, multiscale, and multiplex networks.
\newblock {\em Science}, 328:876--878, 5 2010.

\bibitem[New10]{newman2010networks}
M.~Newman.
\newblock {\em Networks: An Introduction}.
\newblock Oxford University Press, 2010.

\bibitem[NG04]{newman2004finding}
M.~E. Newman and M.~Girvan.
\newblock Finding and evaluating community structure in networks.
\newblock {\em Physical Review E}, 69(2):026113, 2004.

\bibitem[RA13]{Radicchi2013}
F.~Radicchi and A.~Arenas.
\newblock Abrupt transition in the structural formation of interconnected networks.
\newblock {\em Nature Physics}, 9(11):717--720, September 2013.

\bibitem[RC18]{Rossetti2018}
G.~Rossetti and R.~Cazabet.
\newblock Community discovery in dynamic networks: A survey.
\newblock {\em ACM Comput. Surv.}, 51(2), feb 2018.

\bibitem[SRDDK{\etalchar{+}}13]{sole2013spectral}
A.~Sol{\'e}-Ribalta, M.~De~Domenico, N.~E. Kouvaris, A.~Diaz-Guilera, S.~Gomez, and A.~Arenas.
\newblock Spectral properties of the {L}aplacian of multiplex networks.
\newblock {\em Physical Review E}, 88(3):032807, 2013.

\bibitem[TAG17]{Tagarelli2017}
A.~Tagarelli, A.~Amelio, and F.~Gullo.
\newblock Ensemble-based community detection in multilayer networks.
\newblock {\em Data Mining and Knowledge Discovery}, 31(5):1506--1543, July 2017.

\bibitem[TWvE19]{Traag2019}
V.~A. Traag, L.~Waltman, and N.~J. van Eck.
\newblock From {L}ouvain to {L}eiden: guaranteeing well-connected communities.
\newblock {\em Scientific Reports}, 9(1), March 2019.

\bibitem[vLBB08]{vonLuxburg2008}
U.~von Luxburg, M.~Belkin, and O.~Bousquet.
\newblock Consistency of spectral clustering.
\newblock {\em The Annals of Statistics}, 36(2), April 2008.

\bibitem[WPF{\etalchar{+}}09]{Waugh2009}
A.~S. Waugh, L.~Pei, J.~H. Fowler, P.~J. Mucha, and M.~A. Porter.
\newblock Party polarization in congress: A network science approach.
\newblock {\em SSRN}, Jul 2009.

\end{thebibliography}

\end{document}